\pgfplotsset{compat=1.5}
\newtheorem{theorem}{Theorem}[section]
\newtheorem{corollary}[theorem]{Corollary}
\newtheorem{lemma}[theorem]{Lemma}
\newtheorem{proposition}[theorem]{Proposition}
\newtheorem{definition}[theorem]{Definition}
\newenvironment{proofof}[1]{\begin{trivlist} \item {\bf Proof
#1:~~}}
  {\qed\end{trivlist}}
\newcommand{\namedref}[2]{\hyperref[#2]{#1~\ref*{#2}}}
\newcommand{\thmlab}[1]{\label{thm:#1}}
\newcommand{\thmref}[1]{\namedref{Theorem}{thm:#1}}
\newcommand{\lemlab}[1]{\label{lem:#1}}
\newcommand{\lemref}[1]{\namedref{Lemma}{lem:#1}}
\newcommand{\corlab}[1]{\label{cor:#1}}
\newcommand{\corref}[1]{\namedref{Corollary}{cor:#1}}
\newcommand{\seclab}[1]{\label{sec:#1}}
\newcommand{\secref}[1]{\namedref{Section}{sec:#1}}
\newcommand{\alglab}[1]{\label{alg:#1}}
\renewcommand{\algref}[1]{\namedref{Algorithm}{alg:#1}}
\newcommand{\tablelab}[1]{\label{tab:#1}}
\newcommand{\tableref}[1]{\namedref{Table}{tab:#1}}
\newcommand{\deflab}[1]{\label{def:#1}}
\newcommand{\defref}[1]{\namedref{Definition}{def:#1}}
\newcommand{\propref}[1]{\namedref{Property}{prop:#1}}
\newcommand{\proplab}[1]{\label{prop:#1}}
\def \d    {\mdef{\mathrm{d}}}
\newcommand{\CountSketch}{\textsc{CountSketch}}
\newcommand{\FpEst}{\textsc{FpEst}}
\newcommand{\AMS}{\textsc{AMS}}
\newcommand{\PPr}[1]{\ensuremath{\mathbf{Pr}\left[#1\right]}}
\newcommand{\PPPr}[2]{\ensuremath{\underset{#1}{\mathbf{Pr}}\left[#2\right]}}
\newcommand{\Ex}[1]{\ensuremath{\mathbb{E}\left[#1\right]}}
\newcommand{\EEx}[2]{\ensuremath{\underset{#1}{\mathbb{E}}\left[#2\right]}}
\renewcommand{\O}[1]{\ensuremath{\mathcal{O}\left(#1\right)}}
\newcommand{\tO}[1]{\ensuremath{\tilde{\mathcal{O}}\left(#1\right)}}
\newcommand{\eps}{\varepsilon}
\newcommand{\Var}[1]{\ensuremath{\text{Var}\left[#1\right]}}
\def \calA    {\mdef{\mathcal{A}}}
\def \calB    {\mdef{\mathcal{B}}}
\def \calD    {\mdef{\mathcal{D}}}
\def \calE    {\mdef{\mathcal{E}}}
\def \calN    {\mdef{\mathcal{N}}}
\def \calP    {\mdef{\mathcal{P}}}
\def \calQ    {\mdef{\mathcal{Q}}}
\def \calS    {\mdef{\mathcal{S}}}
\def \calU    {\mdef{\mathcal{U}}}
\def \calV    {\mdef{\mathcal{V}}}
\def \be    {\mdef{\mathbf{e}}}
\def \bu    {\mdef{\mathbf{u}}}
\def \bw    {\mdef{\mathbf{w}}}
\def \bv    {\mdef{\mathbf{v}}}
\def \bz    {\mdef{\mathbf{z}}}
\newcommand{\mdef}[1]{{\ensuremath{#1}}\xspace}  
\DeclareMathOperator*{\argmax}{argmax}
\DeclareMathOperator*{\polylog}{polylog}
\DeclareMathOperator*{\poly}{poly}
\DeclareMathOperator*{\rnd}{rnd}
\newcommand{\ignore}[1]{}
\newif\ifnotes\notestrue 
\newcommand{\samson}[1]{\textcolor{blue}{{\bf (Samson:} {#1}{\bf ) }} \marginpar{\tiny\bf
             \begin{minipage}[t]{0.5in}
               \raggedright S:
            \end{minipage}}}
\newcommand{\shenghao}[1]{\textcolor{red}{{\bf (Shenghao:} {#1}{\bf ) }} \marginpar{\tiny\bf
             \begin{minipage}[t]{0.5in}
               \raggedright S:
            \end{minipage}}}
\newcommand{\david}[1]{\textcolor{purple}{{\bf (David:} {#1}{\bf ) }} \marginpar{\tiny\bf
             \begin{minipage}[t]{0.5in}
               \raggedright D:
            \end{minipage}}} 
\newcommand{\samson}[1]{}
\newcommand{\shenghao}[1]{}
\newcommand{\david}[1]{}
\renewcommand*{\@fnsymbol}[1]{\textcolor{mahogany}{\ensuremath{\ifcase#1\or *\or \dagger\or \ddagger\or
 \mathsection\or \triangledown\or \mathparagraph\or \|\or **\or \dagger\dagger
   \or \ddagger\ddagger \else\@ctrerr\fi}}}
\providecommand{\email}[1]{\href{mailto:#1}{\nolinkurl{#1}\xspace}}
\definecolor{mahogany}{rgb}{0.75, 0.25, 0.0}
\definecolor{darkblue}{rgb}{0.0, 0.0, 0.55}
\definecolor{darkpastelgreen}{rgb}{0.01, 0.75, 0.24}
\definecolor{darkgreen}{rgb}{0.0, 0.2, 0.13}
\definecolor{darkgoldenrod}{rgb}{0.72, 0.53, 0.04}
\definecolor{darkred}{rgb}{0.55, 0.0, 0.0}
\definecolor{forestgreenweb}{rgb}{0.13, 0.55, 0.13}
\definecolor{greencss}{rgb}{0.0, 0.5, 0.0}
\definecolor{bleudefrance}{rgb}{0.19, 0.55, 0.91}
  \DeclareFontShape{T1}{lmr}{m}{scit}{<->ssub*lmr/m/scsl}{}%
\begin{document}

\allowdisplaybreaks

\title{Perfect Sampling in Turnstile Streams Beyond Small Moments}
\author{
David P. Woodruff\thanks{Carnegie Mellon University and Google Research. 
E-mail: \email{dwoodruf@cs.cmu.edu}. 
}
\and
Shenghao Xie\thanks{Texas A\&M University. 
E-mail: \email{xsh1302@gmail.com}.}
\and
Samson Zhou\thanks{Texas A\&M University. 
E-mail: \email{samsonzhou@gmail.com}.}
}
\maketitle

\begin{abstract}
Given a vector $x \in \mathbb{R}^n$ induced by a turnstile stream $S$, a non-negative function $G: \mathbb{R} \to \mathbb{R}$, a perfect $G$-sampler outputs an index $i$ with probability $\frac{G(x_i)}{\sum_{j\in[n]} G(x_j)}+\frac{1}{\text{poly}(n)}$. Jayaram and Woodruff (FOCS 2018) introduced a perfect $L_p$-sampler, where $G(z)=|z|^p$, for $p\in(0,2]$. In this paper, we solve this problem for $p>2$ by a sampling-and-rejection method. Our algorithm runs in $n^{1-2/p} \cdot \text{polylog}(n)$ bits of space, which is tight up to polylogarithmic factors in $n$. Our algorithm also provides a $(1+\varepsilon)$-approximation to the sampled item $x_i$ with high probability using an additional $\varepsilon^{-2} n^{1-2/p} \cdot \text{polylog}(n)$ bits of space.

Interestingly, we show our techniques can be generalized to perfect polynomial samplers on turnstile streams, which is a class of functions that is not scale-invariant, in contrast to the existing perfect $L_p$ samplers. We also achieve perfect samplers for the logarithmic function $G(z)=\log(1+|z|)$ and the cap function $G(z)=\min(T,|z|^p)$. Finally, we give an application of our results to the problem of norm/moment estimation for a subset $\mathcal{Q}$ of coordinates of a vector, revealed only after the data stream is processed, e.g., when the set $\mathcal{Q}$ represents a range query, or the set $n\setminus\mathcal{Q}$ represents a collection of entities who wish for their information to be expunged from the dataset. 
\end{abstract}

\section{Introduction}
As databases manage increasingly larger real-time information, the streaming model of computation has become a crucial setting to analyze algorithms for processing massive, dynamic datasets, such as real-time social media feeds, sensor data for smart cities, live video analytics, detecting and preventing distributed denial of service (DDoS) attacks, and real-time indexing and querying in large-scale databases. 
In the one-pass streaming model, a frequency vector on an underlying universe $[n]$ is implicitly defined through sequential updates to the coordinates of the vector. 
These updates can only be observed once and the goal is to aggregate statistics about the frequency vector while using space that is sublinear in the size of both the frequency vector and the data stream. 

Sampling items from the dataset is a central and versatile technique for analyzing large-scale datasets. 
For example, various works have explored sampling methods in the context of big data applications~\cite{Vitter85,GemullaLH08, CohenDKLT11,CohenDKLT14}, such as virtual network traffic monitoring~\cite{GilbertKMS01, EstanV03,MaiCSYZ06,HuangNGHJJT07,ThottanLJ10}, database management~\cite{LiptonNS90,HaasS92, LiptonN95,HaasNSS96,GibbonsM98,Haas16,CohenG19}, distributed computing~\cite{TirthapuraW11,CormodeMYZ12,CormodeF14,WoodruffZ16,JayaramSTW19}, and data summarization~\cite{FriezeKV04,DeshpandeRVW06, DeshpandeV07,AggarwalDK09,IndykMMM14,MahabadiIGR19, IndykMGR20,MahabadiRWZ20,cohen2020wor}. 
Formally, there exists an underlying vector $x\in\mathbb{R}^n$, which is defined through a sequence of $m$ updates. 
For each $t\in[m]$, an update $(i_t,\Delta_t)$ changes coordinate $x_{i_t}\in[n]$ by $\Delta_t\in\{-M,\ldots,M\}$. 
Therefore, for each $i\in[n]$, the $i$-th coordinate of the frequency vector $x$ is defined by 
\[x_i=\sum_{t\in[m]: i_t=i}\Delta_t.\]
As the updates $\{\Delta_t\}$ are permitted to both increase and decrease coordinates of $x$, this is called the \emph{turnstile} model of streaming, whereas in the \emph{insertion-only} model, all updates must satisfy $\Delta_t\ge 0$. 
The goal is then to extract a coordinate $i\in[n]$, possibly along with an estimate of $x_i$, with probability proportional to $G(x_i)$ for some function $G$:
\begin{definition}[$G$-sampler]
Given $x\in\mathbb{R}^n$, $\eps\ge 0$, and a non-negative function $G:\mathbb{R}\to\mathbb{R}$, an $\eps$-approximate $G$-sampler outputs an index $i^*\in[n]$ with probability at least $\frac{2}{3}$, or otherwise it returns a failure symbol $\bot$. 
Furthermore, conditioned on $i^*\neq\bot$, we have for each $i\in[n]$:
\[\PPr{i=i^*}=\left(1\pm\eps\right)\cdot\frac{G(x_i)}{\sum_{j\in[n]}G(x_j)}+n^{-c},\]
where $c>0$ is a constant input parameter.
If $\eps=0$, we say the sampler is \emph{perfect}. 
\end{definition}

\subsection{Related Work}
\paragraph{$L_p$ samplers.} 
The most popular choice for $G(z)$ is the class of functions $G(z)=|z|^p$ for $p>0$, known as $L_p$ samplers. 
Introduced by \cite{MonemizadehW10}, $L_p$ samplers have been used to design streaming algorithms for heavy hitters, $L_p$ norm and $F_p$ moment estimation, cascaded norm approximation, and finding duplicates \cite{MonemizadehW10,AndoniKO11,JowhariST11, BravermanOZ12,JayaramW18,cohen2020wor,JayaramWZ22}. 
For insertion-only streams, the classic technique of reservoir sampling~\cite{Vitter85} acquires a truly perfect $L_1$ sample using $\O{\log n}$ bits of space, i.e., $\eps=0$ and furthermore there is no additive $\frac{1}{\poly(n)}$ distortion in the sampling probabilities. 
However, for either $p\neq 1$ or  turnstile streams, the problem becomes significantly more difficult and thus, the existence of sublinear-space $L_p$ samplers was asked by Cormode, Murthukrishnan, and Rozenbaum~\cite{CormodeMR05}. 

The question was first answered in the affirmative by Monemizadeh and Woodruff~\cite{MonemizadehW10}, who gave an $\eps$-approximate $L_p$ sampler for the turnstile model that uses $\poly\left(\frac{1}{\eps},\log n\right)$ bits of space for $p\in[1,2]$. 
These $L_p$ samplers were improved first by Andoni, Krauthgamer, and Onak~\cite{AndoniKO11} and subsequently by Jowhari, Saglam, and Tardos~\cite{JowhariST11}, to use roughly $\O{\frac{1}{\eps^{\max(1,p)}}\log^2 n}$ bits of space for $p\in(0,2)$ and $\O{\frac{1}{\eps^2}\log^3 n}$ bits of space for $p=2$. 
\cite{JowhariST11} also showed a lower bound of $\Omega(\log^2 n)$ space for $p<2$ but curiously there were no known lower bounds in terms of $\eps$. 
This gap was explained by \cite{JayaramW18}, who gave a perfect $L_p$ sampler that uses $\tO{\log^2 n}$ bits of space for $p\in(0,2)$ and $\O{\log^3 n}$ bits of space for $p=2$. 
However, it is not immediately clear how to extend their techniques to $p>2$. 
Although truly perfect $L_p$ samplers for turnstile streams were ruled out by \cite{JayaramWZ22}, truly perfect $L_p$ samplers for insertion-only streams were obtained by \cite{JayaramWZ22} using $\tO{n^{1-1/p}}$ space for $p>1$ and by \cite{PettieW25} using $\O{\log n}$ bits of space for $p\in(0,1)$, albeit in the random oracle model. 

\paragraph{Other $G$-samplers.} 
Other popular choices of $G(z)$ for $G$-samplers include the logarithmic function $G(z)=\log(1+z)$, the cap function $G(z)=\min(T,z^p)$ for a threshold $T$, and the concave sublinear functions $G(z)=\int_0^\infty a(t)\min(1,zt)\,dt$, where $a(t)\ge 0$~\cite{CohenG19}. 
However, significantly less is known about these functions. 
\cite{JayaramWZ22} gave truly perfect $G$-samplers for monotonically increasing functions $G:\mathbb{R}\to\mathbb{R}_{\ge 0}$ on insertion-only streams with $G(0)=0$, using space roughly proportional to $\frac{\|x\|_1}{G(X)}$, where $G(X)=\sum_{i\in[n]}G(x_i)$. 
In particular, their results apply to $M$-estimators such as the $L_1-L_2$ estimator $G(z)=2\left(\sqrt{1+\frac{z^2}{2}}-1\right)$, the Fair estimator $G(z)=\tau|z|-\tau^2\log\left(1+\frac{|z|}{\tau}\right)$, and the Huber estimator $G(z)=\frac{z^2}{2\tau}$ for $|z|\le\tau$ and $G(z)=|z|-\frac{\tau}{2}$ otherwise, where $\tau>0$ is some constant parameter. 
\cite{CohenG19} approximated the class of concave sublinear functions with the class of soft concave sublinear functions, i.e., $G(z)=\int_0^\infty a(t)(1-e^{-zt})\,dt$ and developed approximate $G$-samplers on insertion-only streams for soft concave sublinear functions. 
\cite{PettieW25} subsequently developed truly perfect samplers on insertion-only streams for the class of functions 
\[G(z)=c\cdot\mathbb{1}[z>0]+\gamma_0 z+\int_0^\infty(1-e^{-tz})\,\nu(dt),\]
which has a bijection with the Laplace exponents of non-negative, one-dimensional L\'{e}vy processes. 
This class of functions includes the $L_p$ samplers $G(z)=z^p$ for $p\in(0,1)$, the soft-cap sampler $G(z)=1-e^{\tau z}$, and $G(z)=\log(1+z)$. 
Their truly perfect $G$-samplers use only two words of memory, but require both the random oracle model and the insertion-only streaming model, and cannot immediately output an estimate of the sampled coordinate $x_i$. 

\subsection{Our Contributions}
In this paper, we present a number of new techniques and applications for sampling from turnstile streams. 
\paragraph{Perfect $L_p$ samplers.} 
We present the first perfect $L_p$ sampler for $p>2$ for turnstile streams. 
That is, our algorithm samples a coordinate $i\in[n]$ with probability $\frac{|x_i|^p}{\|x\|_p^p}\pm\frac{1}{\poly(n)}$, where $x$ is defined through a turnstile stream. 
Formally, our guarantees are:
\begin{theorem}
\thmlab{thm:lp:perfect}
For any $p>2$ and failure probability $\delta\in(0,1)$, there exists a perfect $L_p$ sampler on a turnstile stream that uses $\tO{n^{1-2/p}\log\frac{1}{\delta}}$ bits of space and succeeds with probability at least $1-\delta$. Moreover, it obtains a $(1+\eps)$-estimation to the sampled item with probability at least $1-\delta$ using $\tO{\eps^{-2}n^{1-2/p}\log\frac{1}{\delta}}$ bits of space.
\end{theorem}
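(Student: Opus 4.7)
The plan is to extend the precision-sampling framework of \cite{JayaramW18} to $p>2$, replacing the $L_p$-heavy-hitters subroutine (infeasible in sublinear space when $p>2$) with an $L_2$-heavy-hitters subroutine. First, I would generate iid uniforms $u_1,\ldots,u_n\in(0,1]$ (via limited independence and Nisan's PRG to fit the streaming model) and, conceptually, maintain the rescaled vector $z\in\mathbb{R}^n$ defined by $z_i:=x_i/u_i^{1/p}$. A standard Poisson-process / order-statistic argument applied to the variables $v_i:=u_i/|x_i|^p$ (each uniform on $[0,1/|x_i|^p]$, with density $|x_i|^p$ at $0$) shows that $i^\star:=\argmax_i|z_i|$ is distributed \emph{exactly} as $\PPr{i^\star=j}=|x_j|^p/\|x\|_p^p$. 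It therefore suffices to identify $i^\star$ in $\tO{n^{1-2/p}}$ bits.

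The central structural claim I would prove is that $i^\star$ is an $L_2$-heavy hitter of $z$ at threshold $\Theta(n^{-(1-2/p)})$. For the numerator,
\[\PPr{\max_i|x_i|^p/u_i\le t}=\prod_{i\in[n]}\left(1-|x_i|^p/t\right)\to\exp\!\left(-\|x\|_p^p/t\right),\]
so $\max_i|z_i|^p=\Omega(\|x\|_p^p)$ with constant probability, boostable to $1-1/\poly(n)$ by $\polylog(n)$ independent instances. For the denominator, since $p>2$ the variable $u^{-2/p}$ has finite expectation $\tfrac{p}{p-2}$, so $\Ex{\|z\|_2^2}=\tfrac{p}{p-2}\|x\|_2^2$ and Markov's inequality gives $\|z\|_2^2=O(\|x\|_2^2)$ whp. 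Combining with the power-mean inequality $\|x\|_p\ge n^{1/p-1/2}\|x\|_2$ (valid for $p\ge 2$) yields
\[\frac{|z_{i^\star}|^2}{\|z\|_2^2}\ge\Omega\!\left(\frac{\|x\|_p^2}{\|x\|_2^2}\right)\ge\Omega\!\left(n^{-(1-2/p)}\right),\]
so a standard \CountSketch{}-style $L_2$-heavy-hitters sketch, fed the updates $(i_t,\Delta_t/u_{i_t}^{1/p})$, recovers $i^\star$ together with a constant-factor estimate of $|z_{i^\star}|$ in $\tO{n^{1-2/p}\log(1/\delta)}$ bits.

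To convert approximate detection into a \emph{perfect} sampler, I would pair the heavy-hitters sketch with a constant-factor $p$-th moment estimator $F$ satisfying $\tfrac{1}{2}\|z\|_p^p\le F\le 2\|z\|_p^p$ (itself an $\tO{n^{1-2/p}}$-space object from the $F_p$-estimation literature) and accept a reported candidate $i$ only when its estimated $|z_i|^p$ is both the unique maximum among reported candidates and exceeds $c\cdot F$ for a prescribed constant $c$; else output $\bot$. Conditional on the true argmax being $j$, the rescaled value $z_{i^\star}=x_j/u_j^{1/p}$ is a function of $u_j$ whose marginal law is a scaling of a universal $j$-independent distribution, so the acceptance probability factors into a $j$-independent term plus a $1/\poly(n)$ additive error absorbing sketch failure and the PRG approximation, giving $\PPr{\text{output }j}=|x_j|^p/\|x\|_p^p\pm 1/\poly(n)$ as required. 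For the $(1+\eps)$-estimate of $x_{i^\star}$, I would run an additional \CountSketch{} of width $\Theta(\eps^{-2}n^{1-2/p})$; since $|x_{i^\star}|^p$ is an $\Omega(n^{-(1-2/p)})$-fraction of $\|x\|_p^p$ whp, the sketch recovers $\tilde x_{i^\star}$ satisfying $|\tilde x_{i^\star}-x_{i^\star}|\le\eps|x_{i^\star}|$ whp.

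The main obstacle is the perfection requirement: naively, the detection probability of the $L_2$-heavy-hitters sketch depends on the magnitude of $z_{i^\star}$, which would bias the conditional output distribution. Making the rejection step \emph{exactly} distribution-preserving (up to $1/\poly(n)$) is the one place where the sampling-and-rejection structure is indispensable, and I expect the bulk of the technical work to concentrate here; the path I would follow is to condition on the identity $i^\star=j$ and exploit the scale-equivariance of $x_j/u_j^{1/p}$ in $|x_j|$ to rewrite the acceptance event as a $j$-independent event on an auxiliary Uniform$(0,1)$ variable, which then lets the threshold $c$ be calibrated so that the acceptance probability is literally symmetric across $j$.
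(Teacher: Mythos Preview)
Your plan has a genuine gap at precisely the point you flag as ``the main obstacle.''  The scale-equivariance argument you sketch would work if the acceptance event depended only on $z_{i^\star}$: conditionally on $D(1)=j$, the value $z_{D(1)}$ is indeed distributed as $\|x\|_p/E_1^{1/p}$ for a standard exponential $E_1$ independent of $j$ (this is the order-statistic identity you implicitly use).  But your acceptance event---that the \CountSketch{} estimate of $|z_{i^\star}|$ is the unique reported maximum and exceeds $cF$---is a function of the \emph{entire} sketch $Az$ and of $F\approx\|z\|_p^p$, not of $z_{i^\star}$ alone.  Conditionally on $D(1)=j$, the joint law of $(z_{D(2)},z_{D(3)},\ldots)$ depends on $j$ through the partial sums $\sum_{k\ge 2}|x_{D(k)}|^p=\|x\|_p^p-|x_j|^p$ (see \lemref{lem:exp_inverse}), and the \CountSketch{} error on bucket $(i,j)$ mixes in exactly these coordinates.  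With width $\Theta(n^{1-2/p})$ the sketch error is $\Theta(\|x\|_p)$, i.e., the \emph{same order} as $z_{D(1)}$, so the acceptance probability is bounded away from $0$ and $1$ and its $j$-dependence is a constant, not $1/\poly(n)$.  This is precisely why \cite{JayaramW18} duplicate: after $n^c$-fold duplication every $|x_{D(1)}|^p$ is an $n^{-c}$-fraction of $\|x\|_p^p$, so the $j$-dependence shrinks to $n^{-\Omega(c)}$.  For $p>2$ the duplicated universe forces the heavy-hitter sketch to have width polynomial in $n^{c+1}$, which the paper explicitly rules out as exceeding linear space.

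The paper therefore takes a \emph{different} route that bypasses the conditioning issue entirely: it uses $\tO{n^{1-2/p}}$ black-box perfect $L_2$ samplers (already perfect, so no conditional bias), and for each sampled index $j$ constructs an \emph{unbiased} estimator of $|x_j|^{p-2}$ via products of independent \CountSketch{} estimates (Taylor-truncated for fractional $p$), then accepts with probability $\widehat{|x_j|^{p-2}}\cdot\widehat{F_2}/(n^{1-2/p}\widehat{F_p})$.  Because $\Ex{\widehat{|x_j|^{p-2}}}=|x_j|^{p-2}(1\pm 1/\poly(n))$ and the acceptance probability is well-defined whp, the overall probability of outputting $j$ is $\frac{|x_j|^2}{\|x\|_2^2}\cdot\frac{|x_j|^{p-2}\|x\|_2^2}{n^{1-2/p}\|x\|_p^p}=\frac{|x_j|^p}{n^{1-2/p}\|x\|_p^p}$ up to $1/\poly(n)$, and $\Theta(n^{1-2/p})$ trials suffice.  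The key difference from your proposal is that the rejection step is applied to the \emph{unscaled} coordinate $x_j$ (which can be estimated to relative error $1/\polylog(n)$ because the $L_2$-sampled index is automatically a heavy hitter of the exponentially-scaled vector inside the $L_2$ sampler), rather than to the scaled value $z_{i^\star}$ whose identification is itself the biased step.
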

By comparison, the existing perfect $L_p$ sampler of \cite{JayaramW18} handles $p\le 2$. 
Their techniques rely on duplicating each coordinate $i\in[n]$ a total of $N=n^c$ times for some sufficiently large constant $c>1$ and then performing a separate scaling of each of the $n^{c+1}$ coordinates. 
Ultimately these coordinates are then hashed into a CountSketch data structure~\cite{CharikarCF04}, which has an error roughly on the order of the $L_2$ norm of the input vector and uses space logarithmic in the universe size. 
Note that $\log N=\O{\log n}$ and thus the algorithm of \cite{JayaramW18} uses space that is polylogarithmic in $n$. 
Generalizing this to $p>2$ would require an error roughly on the order of the $L_p$ norm of the input vector, which would use space \emph{polynomial} in the universe size. 
Unfortunately, the universe size after duplication is $N\gg n$, and so the resulting data structure would use space significantly larger than $n$, which is pointless for our purposes because we could just maintain the entire vector using linear space. 

Another point of comparison is the existing truly perfect samplers. 
The truly perfect sampler of \cite{JayaramWZ22} uses space $\tO{n^{1-1/p}}$ for $p>1$, which is prohibitively large for our purposes. 
On the other hand, the truly perfect sampler of \cite{PettieW25} uses $\O{\log n}$ bits of space, but can only handle $p<1$ and requires the random oracle model. 
Moreover, both of these truly perfect samplers can only be implemented in the insertion-only model, as opposed to the more general turnstile setting of \thmref{thm:lp:perfect}. 
Therefore, we require new techniques in achieving \thmref{thm:lp:perfect}.

\paragraph{Approximate $L_p$ samplers.}
We also present the first approximate $L_p$ sampler for $p>2$ for turnstile streams, which samples an index $i \in[n]$ with probability $\frac{|x_i|^p}{\|x\|_p^p}\cdot(1\pm \eps)$,
and outputs FAIL with probability at most $0.1$. Formally, our guarantees are:
\begin{theorem}
\thmlab{thm:lp:approximate}
For any $p>2$ and accuracy parameter $\eps \in (0,1)$, there exists an approximate $L_p$ sampler on a turnstile stream that uses
$n^{1-2/p} \log^2 n \log \frac{1}{\eps}\cdot\poly(\log\log(n))$ bits of space to run and has update time $\frac{1}{\eps} \cdot \polylog\left(n,\frac{1}{\eps}\right)$. In addition, it gives a $(1+\eps)$-estimation to the sampled item using extra $\frac{1}{\eps^2} n^{1-2/p} \log^2 n \log \frac{1}{\eps}\cdot\poly(\log\log(n))$ bits of space.
\end{theorem}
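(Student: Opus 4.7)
The plan is to reduce approximate $L_p$ sampling to an $L_p$ heavy hitter problem on a randomly scaled version of $x$, which is exactly where the $n^{1-2/p}$ space comes from. For each $i \in [n]$, I draw an independent random variable $u_i$, uniform on $(0,1]$ and represented with $O(\log(n/\eps))$ bits of precision using Nisan-style pseudorandomness so the seeds are recomputable in small space, and implicitly define the vector $z \in \mathbb{R}^n$ by $z_i = x_i / u_i^{1/p}$. Since the scaling is coordinate-wise, any linear sketch of $x$ maintained from the turnstile stream pulls back to a linear sketch of $z$ with no additional space blow-up.

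The standard order-statistics fact is that $\argmax_i |z_i|^p = \argmin_i (u_i / |x_i|^p)$ is distributed exactly proportionally to $|x_i|^p / \|x\|_p^p$ when the $u_i$ are truly uniform, and $(1 \pm \eps)$-approximately so after $O(\log(1/\eps))$-bit rounding. So the task becomes: identify this $\argmax$ coordinate. To do so, I run a turnstile $L_p$ heavy hitter data structure on $z$ for $p > 2$, which uses $n^{1-2/p}\polylog(n)$ bits of space and returns every coordinate whose $|z_i|^p$ is at least a $1/\polylog(n)$ fraction of $\|z\|_p^p$. A calculation on the order statistics of $\{u_i / |x_i|^p\}$ shows that with probability at least $0.9$ the minimum-to-second-minimum ratio is at least $1/\polylog(n)$; on this event there is a unique heavy hitter, equal to the intended sample. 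Otherwise I output \textbf{FAIL}.

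To deliver a $(1+\eps)$-approximation of $x_{i^*}$, I augment the sketch with a CountSketch-style heavy hitter structure tuned to $\eps$-precision for $L_p$ recovery with $p > 2$, costing an additional $\eps^{-2}n^{1-2/p}\polylog(n)$ bits, and divide the recovered value by $u_{i^*}^{-1/p}$ to undo the scaling. The update time per stream token decomposes into evaluating the pseudorandom $u_{i_t}^{1/p}$ to $\log(1/\eps)$-bit precision, performing $\polylog(n, 1/\eps)$ hash evaluations into the sketches, and a $\frac{1}{\eps}$-factor overhead from the repetitions used to drive the failure probability below $0.1$, totaling $\frac{1}{\eps}\polylog(n, 1/\eps)$ per update.

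The main obstacle is controlling the total variation between the algorithm's output distribution and the target $|x_i|^p/\|x\|_p^p$ within a $(1\pm\eps)$ multiplicative window. Three sources of error must be combined: the discretization of $u_i$, the pseudorandomness of $u_i$ via Nisan's generator, and the identification/estimation error of the heavy hitter sketch. The plan is to bound each contribution by $O(\eps)$ separately and take a union bound: the first by a telescoping argument over the rounded uniforms, the second by arguing that the test statistic (the sketch's final output) can be computed by a small-space machine that Nisan's generator fools, and the third by choosing the sketch parameters so that both false positives and false negatives on the top coordinate occur with negligible probability. A secondary subtlety is that $1/u_i$ is heavy-tailed, so I condition on the polynomially-likely event $\max_i u_i^{-1} = O(\poly(n))$ to keep the scaled vector polynomially bounded, which is necessary for the $L_p$ heavy hitter sketch to behave correctly on $z$.
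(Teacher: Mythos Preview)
Your reduction to a heavy-hitter problem on the scaled vector $z_i=x_i/u_i^{1/p}$ is the right high-level template, but there is a genuine gap that prevents it from delivering $(1\pm\eps)$-accuracy. The issue is the dependence of the FAIL event on the anti-rank $D(1)=\argmax_i|z_i|$. You assert that with probability at least $0.9$ the top-two ratio is large enough for the heavy-hitter sketch to isolate the winner, and otherwise you output FAIL. But the quantity that controls the output distribution is $\Pr[\neg\text{FAIL}\mid D(1)=i]$, not $\Pr[\neg\text{FAIL}]$. These conditional probabilities can differ by an additive constant across $i$: for instance, with $x=(100n,1,\ldots,1)$, conditioning on $D(1)=2$ makes the top-two gap much smaller than conditioning on $D(1)=1$. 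The resulting distortion in the sampling probabilities is $\Theta(1)$, not $O(\eps)$, so your three-error decomposition (discretization, PRG, sketch error) misses the dominant term. The paper's fix is to duplicate each coordinate $n^c$ times so that no single duplicate is heavy; this flattens the dependence on $D(1)$ to $O(n^{-c/10})$ via order-statistics lemmas (\lemref{lem:bz_independence} and \lemref{lem:sum_independent}). Duplication blows the universe up to $n^{c+1}$, which would naively make CountSketch require $n^{(c+1)(1-2/p)}$ buckets, so the paper introduces a two-stage CountSketch: a first table on the per-coordinate maxima (size $n$) to isolate $\polylog(1/\eps)$ candidates, and a second table restricted to those candidates' buckets over the full duplicated vector. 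None of this machinery appears in your proposal.

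A secondary but real problem is your use of Nisan's PRG. The claimed space bound is $n^{1-2/p}\log^2 n\log\frac{1}{\eps}\cdot\poly(\log\log n)$, i.e., tight in the $\log n$ exponent. Nisan's generator costs an extra $\log n$ factor on top of the space of the algorithm being derandomized, which would push you to $\log^3 n$. The paper instead uses the half-space PRG of Gopalan--Kane--Meka (\thmref{thm:gkm:prg}), which incurs only $\poly(\log\log n)$ overhead, and argues that the sampler's accept/reject decision can be written as a bounded number of half-space tests over the sketch entries.
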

We complete our discussion by providing a sketching lower bound for the approximate $L_p$ samplers, which shows that our algorithm has space optimality in both $n$ and $\log n$ factors.
\begin{theorem}
Let $x \in \mathbb{R}^n$ be a vector. Suppose that there is a linear sketch that outputs an index $i \in [n]$ with probability $\frac{|x_i|^p}{\|x\|_p^p}\cdot (1\pm 0.01)$, and outputs FAIL with probability at most $0.1$. Then, its sketching dimension is at least $\Omega\left(n^{1-2/p}\log n\right)$.
\end{theorem}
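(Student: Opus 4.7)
The plan is to establish this lower bound by a reduction from the one-way Augmented Indexing (AIND) communication problem: Alice holds $\sigma\in\{0,1\}^K$ and Bob holds a target index $j^*\in[K]$ together with the prefix $(\sigma_1,\ldots,\sigma_{j^*-1})$, and must recover $\sigma_{j^*}$ with probability at least $2/3$; AIND has one-way randomized communication complexity $\Omega(K)$. Each entry of a linear sketch can be discretized to $O(\log n)$ bits with negligible effect on the output when the input has polynomially bounded integer entries, so a linear sketch of dimension $k$ yields a one-way protocol of $O(k\log n)$ bits. Thus, if I show that the sampler's sketch can be used to solve AIND of length $K=\Theta(n^{1-2/p}\log^2 n)$, the desired dimension bound $k=\Omega(n^{1-2/p}\log n)$ follows.

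\emph{Construction.} I partition $[n]$ into $T=n^{1-2/p}$ blocks $B_1,\ldots,B_T$ of size $n^{2/p}$ and, inside each block, designate $L=\Theta(\log^2 n)$ disjoint sub-blocks indexed by a level $\ell\in[L]$. For every $(t,\ell)$ with $\sigma_{t,\ell}=1$, Alice places a signed pattern of magnitude $c^{\ell}$ on the $\ell$-th sub-block of $B_t$, where $c>1$ is tuned so that (i) within any single block, the heaviest set level's $L_p$-mass dominates the sum of all lower levels by more than the $(1\pm 0.01)$ slack of the sampler, and (ii) across blocks the $L_p$-mass is balanced, so that each block contributes a $\Theta(1/T)$ fraction of $\|x\|_p^p$ and therefore is individually a potential heavy hitter for the sampler. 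Alice then sends the linear sketch $S(x)$ to Bob.

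\emph{Decoding.} Using linearity of $S$, Bob subtracts the exact sketch-contribution of all bits in the known prefix, including the bits in the blocks preceding $B_{t^*}$ and the bits in $B_{t^*}$ at levels lexicographically before $\ell^*$. The calibration is designed so that in the resulting residual vector, the bit $\sigma_{j^*}$ controls whether the $\ell^*$-sub-block of $B_{t^*}$ is the top active level of $B_{t^*}$, shifting the probability that the sampler returns an index from this sub-block by a constant multiplicative gap larger than the sampler's $(1\pm 0.01)$ slack. Running the approximate $L_p$ sampler $O(1)$ times on the residual and checking whether an output lies in the target sub-block then distinguishes the two cases with probability at least $2/3$.

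\emph{Main obstacle.} The hardest part of the proof is tuning the two-level geometric encoding so that, simultaneously, (a) the sketch entries discretize cleanly in $O(\log n)$ bits (i.e., the magnitudes of $x$ stay polynomially bounded despite $L=\Theta(\log^2 n)$ levels), and (b) the later still-unknown levels inside $B_{t^*}$, together with the later blocks $B_{t^*+1},\ldots,B_T$, perturb the target's $L_p$-sampling probability by at most the $(1\pm 0.01)$ multiplicative slack. Handling this balance typically splits the $L$ levels into $\Theta(\log n)$ ``coarse'' levels carrying the within-block location information and $\Theta(\log n)$ ``fine'' levels carrying amplitude information, mirroring the calibration used in standard $F_p$-moment lower-bound constructions for $p>2$. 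Combining the resulting $\Omega(K)=\Omega(n^{1-2/p}\log^2 n)$ AIND lower bound with the $O(\log n)$-bits-per-coordinate discretization then produces the claimed $\Omega(n^{1-2/p}\log n)$ sketching dimension lower bound.
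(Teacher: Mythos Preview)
Your approach via Augmented Indexing is a genuinely different route from the paper's proof, and as written it has a real gap. The paper's argument is much shorter: it invokes the hard distribution pair from \cite{GangulyW18}, namely $\alpha=\calN(0,\bI_n)$ versus $\beta$ which is $\calN(0,\bI_n)$ plus a spike of magnitude $\Theta(n^{1/p})$ at a uniformly random coordinate, and cites the existing result that any linear sketch distinguishing these with probability $0.6$ must have dimension $\Omega(n^{1-2/p}\log n)$. The reduction from the sampler is then a two-line collision argument: draw two independent $L_p$ samples and declare $\beta$ iff they collide. Under $\beta$ the spike carries a $0.99$ fraction of $\|x\|_p^p$ so both samples hit it with probability $\ge 0.9$; under $\alpha$ the mass is spread out and the collision probability is $O(1/n)$. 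No encoding, no level calibration, no discretization is needed.

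The gap in your plan is precisely the obstacle you yourself flag. With $L=\Theta(\log^2 n)$ geometric levels and any constant base $c>1$, the top-level magnitude is $c^{\Theta(\log^2 n)}=n^{\Theta(\log n)}$, which is super-polynomial and breaks the $O(\log n)$-bits-per-entry discretization that your bit-to-dimension conversion relies on. You gesture at a ``coarse/fine'' split of the levels, but you do not actually specify it or argue that it simultaneously (a) keeps all entries $\poly(n)$-bounded, (b) makes the target sub-block's sampling probability shift by more than the $(1\pm 0.01)$ slack, and (c) keeps the contribution of the unknown later levels and later blocks below that slack. In the $F_p$-estimation literature this calibration is already delicate, and for an $L_p$ \emph{sampler} you must control a ratio $|x_i|^p/\|x\|_p^p$ rather than a global moment, which makes the balancing strictly harder; in particular the later blocks $B_{t^*+1},\ldots,B_T$ can change $\|x\|_p^p$ by a constant factor and hence move the target probability outside your detection window. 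A second, smaller issue is that converting a communication lower bound in bits to a sketching-\emph{dimension} lower bound requires a rounding argument for the sketch entries, which in turn needs control over the sketch matrix and not just the input; the theorem hypothesis gives you no such control. None of these issues arise in the paper's distributional approach.
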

A comparison of related work for sampling on data streams and our proposed samplers is displayed in \tableref{table:summary}.
\begin{table}[!htb]
\centering
{
\tabulinesep=1.1mm
\resizebox{\columnwidth}{!}{
\begin{tabu}{|c|c|c|c|c|}\hline
Sampler & Data Stream & Distortion & Randomness & Remarks\\\hline
\cite{MonemizadehW10} & Turnstile & Approximate & Standard & $L_p$, $p\in[0,2]$\\\hline
\cite{AndoniKO11} & Turnstile & Approximate & Standard & $L_p$, $p\in[0,2]$\\\hline
\cite{JowhariST11} & Turnstile & Approximate & Standard & $L_p$, $p\in[0,2]$\\\hline
\cite{JayaramW18} & Turnstile & Perfect & Standard & $L_p$, $p\in[0,2]$\\\hline
\cite{CohenG19} & Insertion-Only & Approximate & Standard & Soft Concave Sublinear \\\hline
\cite{JayaramWZ22} & Insertion-Only & Truly Perfect & Standard & $L_p$, $p\ge 1$; $M$-estimators \\\hline
\cite{PettieW25} & Insertion-Only & Truly Perfect & Random-Oracle Model & $L_p$, $p\in(0,1)$; L\'{e}vy processes \\\hline\hline
Our Work & Turnstile & Perfect & Standard & $L_p$, $p>2$; polynomials \\\hline
Our Work & Turnstile & Approximate & Standard & $L_p$, $p>2$\\\hline
\end{tabu}
}
}
\caption{Summary of related work for sampling on data streams}
\tablelab{table:summary}
\end{table}

\paragraph{General samplers.}
Another consequence of the existing techniques is that known perfect samplers for turnstile streams only handle the class of functions $G(z)=|z|^p$, for which the probability of sampling a coordinate $i$ is $\frac{x_i^p}{\|x\|_p^p}=\frac{(\alpha x_i)^p}{\|\alpha x\|_p^p}$, for any parameter $\alpha>0$. 
Thus the probability of sampling $i\in[n]$ from the vector $x$ is the same as the probability sampling $i\in[n]$ from the vector $\alpha x$ for any scalar $\alpha>0$. 
Unfortunately, many interesting functions $G(z)$, such as general polynomials, are not scale-invariant. 
Using our new techniques, we achieve perfect samplers for a wider class of functions:
\begin{theorem}
\thmlab{thm:lp:perfect:gen}
For any polynomial $G(z) = \sum_{d \in [D]} \alpha_d z^{p_d}$ with $0 <  p_1 < p_2 < \ldots < p_D = p$ and $0<\alpha_d<M$ for all $d \in [D]$, where $M$ and $D$ are some fixed constants, there exists a perfect polynomial sampler on a general turnstile stream, which outputs $i^* \in [n]$ with $\PPr{i^*=i}=\frac{G(x_i)}{\sum_{j \in [n]} G(x_j)} +\frac{1}{\poly(n)}$. 
The algorithm uses $\tO{n^{\max \{0, 1-2/p\}}\cdot\log \frac{1}{\delta}}$ bits of space and succeeds with probability at least $1-\delta$.
\end{theorem}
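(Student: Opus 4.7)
The plan reduces perfect polynomial sampling to a mixture of $D$ parallel perfect $L_{p_d}$ samplers, corrected by a final sampling-and-rejection step analogous to the one powering the $L_p$ sampler for $p>2$ above. The driving identity is
\[
  \frac{G(x_i)}{\sum_{j} G(x_j)} \;=\; \sum_{d=1}^{D} w_d \cdot \frac{|x_i|^{p_d}}{F_{p_d}},
  \qquad w_d = \frac{\alpha_d F_{p_d}}{W}, \qquad W = \sum_{d} \alpha_d F_{p_d},
\]
which expresses the target distribution as a mixture over the $D$ individual $L_{p_d}$-sampling distributions. In particular, if the weights $w_d$ were known exactly, then drawing $d^* \sim (w_d)_{d}$ and invoking the perfect $L_{p_{d^*}}$ sampler of \thmref{lp:perfect} (for $p_{d^*} > 2$) or of \cite{JayaramW18} (for $p_{d^*} \le 2$) would immediately give a perfect $G$-sampler.

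The algorithm I propose maintains, in parallel, one perfect $L_{p_d}$ sampler per $d \in [D]$ together with moment estimators $\tilde F_{p_d}$ for each $F_{p_d}$. To draw a sample it picks $d^*$ with probability proportional to $\alpha_d \tilde F_{p_d}$, runs the $d^*$-th sampler to obtain an index $i^*$ together with a $(1 \pm \eps)$-estimate $\tilde x_{i^*}$, and executes a rejection step: using $\tilde x_{i^*}$ and the moment estimators, it computes the ratio between the target density $G(\tilde x_{i^*})/\tilde W$ and the proposal density $\sum_{d} \tilde w_d |\tilde x_{i^*}|^{p_d}/\tilde F_{p_d}$, accepts with a probability proportional to this ratio (scaled by a fixed envelope constant), and otherwise retries by re-querying the underlying linear sketches with fresh internal randomness. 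The space budget is dominated by the $L_{p_D}$ sampler for $p = p_D$, which uses $\tilde O(n^{\max\{0, 1-2/p\}})$ bits; the remaining $D-1$ samplers, the moment estimators, and the $O(\log(1/\delta))$-fold boosting contribute only lower-order factors. Since $D$ is a fixed constant, this matches the claimed bound.

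The main obstacle is proving that the rejection step produces only $1/\poly(n)$ additive distortion in the final sampling distribution, since naively both the moment estimators and the coordinate estimate introduce multiplicative error at the $1/\polylog(n)$ level under the stated space budget. I would resolve this by exploiting two structural features. First, a coordinate returned by any perfect $L_{p_d}$ sampler is with high probability a genuine heavy hitter of the corresponding scaled vector, so its value can be re-estimated with much better relative precision than the worst-case $(1+\eps)$ guarantee by probing the dominant bucket of the underlying heavy-hitter data structure. Second, the acceptance ratio depends only on \emph{ratios} of the moments rather than individual moments, so correlated use of the same underlying $F_{p_d}$-estimators causes many of the multiplicative errors to cancel in the ratio, leaving a residual distortion that can be driven below $n^{-c}$ by choosing sketch parameters slightly larger than the minimum required for a constant-factor guarantee. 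Combining these two observations with the standard rejection-sampling identity then delivers the perfect $G$-sampler within the claimed space budget.
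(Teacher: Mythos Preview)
Your mixture-of-$L_{p_d}$-samplers reduction is a natural idea, but the rejection step as you describe it cannot drive the distortion down to $1/\poly(n)$. The difficulty is that the mixing weights $w_d=\alpha_d F_{p_d}/W$ depend on the moments $F_{p_d}$, and within the stated space budget these can only be estimated to constant (or at best $1/\polylog(n)$) relative accuracy; getting $\eps=n^{-c}$ for any $F_{p_d}$ with $p_d>2$ already costs $\Omega(n^{1-2/p_d}\cdot n^{2c})$ space. Your ``errors cancel in ratios'' intuition is in fact backwards here. The \emph{true} proposal density you sample from is $q(i)=\sum_d(\alpha_d\tilde F_{p_d}/\tilde W)\cdot|x_i|^{p_d}/F_{p_d}$, because the perfect $L_{p_d}$ samplers normalize by the \emph{true} $F_{p_d}$ while your choice of $d^*$ uses $\tilde F_{p_d}$. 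When you then plug $\tilde F_{p_d}$ and $\tilde x_i$ into both the target and the proposal to form the acceptance ratio, the two expressions collapse to the same quantity and the ratio is essentially $1$: the rejection step is vacuous and you are left with an approximate sampler whose bias is governed by $\max_d|\tilde F_{p_d}/F_{p_d}-1|$. Re-estimating $x_{i^*}$ more accurately does not help, since the residual bias lives in the moment ratios, not in the coordinate value.

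The paper sidesteps this entirely by \emph{not} mixing over $d$. It runs only the perfect $L_p$ sampler for the \emph{largest} exponent $p=p_D$, obtaining $j$ with probability $|x_j|^p/\|x\|_p^p$, and then accepts $j$ with probability $\frac{1}{5DM}\sum_d\alpha_d\,\widehat{x_j^{p_d-p}}$, where $\widehat{x_j^{p_d-p}}$ is the \emph{unbiased} truncated-Taylor estimator developed for fractional $p$. Two points make this work: (i) since $p_d\le p$ and coordinates are integers, $|x_j|^{p_d-p}\le 1$, so the acceptance probability is well-defined and bounded away from $0$ by the $d=D$ term alone, yielding only $O(\log n)$ repetitions; (ii) unbiasedness of the estimator, together with the $1/\poly(n)$ truncation error, makes the \emph{expected} acceptance probability exactly proportional to $G(x_j)$ up to $1/\poly(n)$, with no moment estimates appearing anywhere in the acceptance step. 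That last feature is precisely what your proposal is missing.
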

Indeed, \thmref{thm:lp:perfect:gen} includes functions such as polynomials that are not scale-invariant. 
Our techniques can also be used to obtain perfect $G$-samplers for the cap function $G(z)=\min(T,z^p)$ and the logarithmic function $G(z)=\log(1+z)$. 
Previously, approximate and perfect samplers for these functions were considered by \cite{CohenG19,PettieW25} for the insertion-only setting, but there were no known perfect samplers for the turnstile model. 
See \tableref{table:summary} for a summary of these contexts. 

\paragraph{Application to norm estimation.}
Next, we describe an interesting application of our perfect $L_p$ samplers to the problem of norm estimation, where the goal is to estimate $\|x\|_p=\left(x_1^p+\ldots+x_n^p\right)^{1/p}$ up to a $(1+\eps)$-multiplicative factor, for some input accuracy parameter $\eps\in(0,1)$. 
Note that up to constants, the problem is equivalent to estimating the $p$-th frequency moment of the vector, defined by $F_p(x)=x_1^p+\ldots+x_n^p$ up to a $(1+\O{\eps})$-multiplicative approximation. 
The estimation of norms/frequency moments is a fundamental problem for the streaming model. 
Indeed, since the seminal paper of Alon, Matias, and Szegedy~\cite{AlonMS99}, there has been a long line of research analyzing the space or time complexity of this problem~\cite{ChakrabartiKS03,Bar-YossefJKS04,Woodruff04,IndykW05,Indyk06,Li08,KaneNW10,KaneNPW11,Ganguly11,BravermanO13,BravermanKSV14,Andoni17,BlasiokDN17,BravermanVWY18,GangulyW18,WoodruffZ21b,WoodruffZ21,JayaramWZ24,BravermanZ24}. 

More generally, it is often desirable to understand the behavior of certain subsets of coordinates, e.g., iceberg queries in databases, range queries in computational geometry, etc. 
However, the identity of these coordinates may not be known as an input parameter.  
Formally, the goal is to estimate $\sum_{i\in\calQ} x_i^p$, for a subset $\calQ\subseteq[n]$ that is queried only on the data structure after the stream is processed. 
In this setting, many of the existing approaches are either suboptimal or fail altogether. 
For example, it is not clear how to adapt algorithms based on linear sketches to only consider the coordinates in $\calQ$. 
Similarly, approaches based on subsampling and heavy-hitters, e.g.,~\cite{IndykW05} are suboptimal. 
Our perfect $L_p$ samplers can achieve the following near-optimal guarantees:
\begin{restatable}{theorem}{thmlpforget}
\thmlab{thm:lp:forget}
Given $p>2$, there exists an algorithm that processes a turnstile stream defining a vector $x\in\mathbb{R}^n$ and a post-processing query set $\calQ\subseteq[n]$, and with probability at least $0.99$, outputs a $(1+\eps)$-approximation to $\|x_{\calQ}\|_p^p$. 
For $\|x_{\calQ}\|_p^p\ge\alpha\|x\|_p^p$, the algorithm uses $\tO{\frac{1}{\alpha\eps^2}n^{1-2/p}}$ bits of space. 
\end{restatable}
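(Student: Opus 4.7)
The plan is to combine a standard $F_p$ moment estimator with many parallel instances of the perfect $L_p$ sampler of \thmref{thm:lp:perfect} in order to estimate what fraction of the $L_p^p$ mass lies inside the post-processing set $\calQ$. Writing $\rho=\|x_{\calQ}\|_p^p/\|x\|_p^p\ge\alpha$, the identity $\|x_{\calQ}\|_p^p=\rho\cdot\|x\|_p^p$ reduces the problem to approximating $\rho$ and $\|x\|_p^p$ separately.

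First, I would run in parallel a standard turnstile $F_p$ estimator (e.g.\ \cite{BravermanVWY18,GangulyW18}) returning $\widehat{F_p}=(1\pm\eps)\|x\|_p^p$ with probability at least $0.999$; for $p>2$ the best known bound is $\tO{\eps^{-2}n^{1-2/p}}$ bits. Independently, I would instantiate $t=\Theta\!\left(\frac{1}{\alpha\eps^2}\log\frac{1}{\delta}\right)$ independent copies of the perfect $L_p$ sampler from \thmref{thm:lp:perfect}, each with constant success probability and with the additive-distortion exponent $c$ set large enough that the $n^{-c}$ slack is dwarfed by the target accuracy. Because the sampler consumes $\tO{n^{1-2/p}}$ bits per copy, the total sampling cost is $\tO{\frac{1}{\alpha\eps^2}n^{1-2/p}}$ bits, matching the claimed bound.

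Once the stream has been processed and $\calQ$ is revealed, for every sampler that returned an index $i_j\neq\bot$ I form the Bernoulli indicator $Y_j=\mathbb{1}[i_j\in\calQ]$ and let $\hat{\rho}$ be their empirical mean. Conditioned on success, each $i_j$ is distributed as $|x_{i_j}|^p/\|x\|_p^p\pm n^{-c}$, so $\mathbb{E}[Y_j]=\rho\pm n^{-c}$. For $\rho\ge\alpha$ a multiplicative Chernoff bound then implies $\hat{\rho}=(1\pm\eps)\rho$ with probability $1-\delta$. Finally, the algorithm outputs $\hat{\rho}\cdot\widehat{F_p}$, which is a product of two independent $(1\pm\eps)$ approximations and therefore, after rescaling $\eps$ by a constant, yields the desired $(1+\eps)$-approximation to $\|x_{\calQ}\|_p^p$.

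The main obstacle will be dealing cleanly with the event that some sampler copies return the failure symbol $\bot$: since each copy succeeds independently with probability at least $2/3$, I would actually run $\Theta(t)$ copies, apply Chernoff to the success-count so that at least $t$ of them succeed with high probability, and restrict the empirical mean to those successful outputs, whose joint distribution factors as a product of perfect-sampler distributions so that the Chernoff argument above applies verbatim. A final union bound over the failure events of the $F_p$ sketch, of the sampler copies, of the success-count concentration, and of the $n^{-c}$ sampling slack gives overall success probability at least $0.99$.
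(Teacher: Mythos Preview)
Your approach is correct and follows the same high-level idea as the paper: draw $\Theta(\tfrac{1}{\alpha\eps^2})$ perfect $L_p$ samples, check membership in $\calQ$, and combine with an $F_p$ estimate. The paper's execution differs slightly in that it couples each sample $i_r$ with its \emph{own} independent unbiased $F_p$ estimator $C_r$ (from Ganguly's estimator, \thmref{thm:ganguly}), sets $Z_r=C_r\cdot\mathbb{1}[i_r\in\calQ]$, and averages the $Z_r$; the analysis is then a second-moment computation followed by Chebyshev. You instead decouple the two tasks, estimating $\rho=\|x_{\calQ}\|_p^p/\|x\|_p^p$ via a Chernoff bound on Bernoulli indicators and multiplying by a single global $\widehat{F_p}$. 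Your decomposition is a bit cleaner and more modular, and Chernoff gives you the $\log(1/\delta)$ dependence directly rather than through median amplification; the paper's coupling, on the other hand, avoids having to argue separately about the independence of $\hat\rho$ and $\widehat{F_p}$. Both routes land on the same $\tO{\tfrac{1}{\alpha\eps^2}n^{1-2/p}}$ bound, and your handling of the $\bot$ outputs via oversampling plus a Chernoff bound on the success count is fine since the sampler's guarantee is stated conditionally on $i^*\neq\bot$.
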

Naively, CountSketch requires roughly $\frac{1}{\alpha^2 \varepsilon^2} \cdot n^{1-2/p}$ space to ensure the estimation error is below $\varepsilon \cdot \|x_Q\|_p^p$, to achieve a $(1+\varepsilon)$-approximation to $\|x_Q\|_p^p$. By comparison, our algorithm uses $\frac{1}{\alpha \varepsilon^2} \cdot n^{1-2/p}$ space, which is better by a factor of $\frac{1}{\alpha}$. A more sophisticated approach is given by Proposition 5.1 in \cite{MintonP14}. However, this approach can only handle queries on subsets $Q$ with small size, e.g., polylogarithmic in $n$. 

It is known that even for the setting where $\calQ=[n]$, any $(1+\eps)$-approximation algorithm for $F_p$ estimation on insertion-only streams requires $\Omega\left(\frac{1}{\eps^2}n^{1-2/p}\right)$ bits of space~\cite{WoodruffZ21b}. 
Thus, our algorithm in \thmref{thm:lp:forget} is optimal up to polylogarithmic factors in $n$ and $\frac{1}{\eps}$. 
We also provide further optimizations to achieve fast update time. 

A specific application of our setting is the ``right to be forgotten data streaming'' (RFDS) model, recently introduced by \cite{Pavan0VM24}. 
Motivated by the right of any entity to decide whether their personal data should remain within a specific dataset, the RFDS model permits a forget operation in the data stream, which sets $x_i=0$ for an input $i\in[n]$. 
Although \cite{Pavan0VM24} shows that the RFDS model is in general difficult, \cite{lnsw25} observed that $L_p$-sampling is useful in the RFDS model, i.e., the streaming model with forget requests, where it put forth the idea of taking $\O{\frac{1}{\alpha\eps^2}}$ perfect $L_p$-samples (in their notation $\alpha$ is replaced with $1-\alpha$) and obtaining unbiased estimates for them using a Taylor series and averaging them. 
In fact, \cite{lnsw25} solves the harder problem, where forget requests can occur multiples times for an item throughout the course of a stream. 
Here we only allow the coordinates of entities who wish to have their information expunged to be requested after all the data is curated, i.e., at the end of the stream. 
However, one advantage of only allowing forget requests to occur at the end of the stream is that this allows us to solve the problem in the turnstile streaming model, whereas \cite{lnsw25} shows that if forget requests can occur before the end of the stream then no sublinear space algorithms are possible in the turnstile streaming model.  

\subsection{Motivation and Applications}
\paragraph{Statistical indistinguishability.}
Sampling is a fundamental primitive for extracting key information from large datasets. 
In particular, $L_p$ sampling has been used as a subroutine toward central data stream problems such as heavy-hitters, norm/moment estimation, cascaded norm estimation, duplicate detection and identification, and data summarization~\cite{MonemizadehW10,AndoniKO11,JowhariST11,BravermanOZ12,JayaramW18}. 
For example, $L_1$ sampling is used to extract a number of samples, thus generating a histogram that subsequently serves as a representative summary of the dataset, which is then the input to more complex downstream algorithms~\cite{GibbonsM98,GilbertKMS01,CormodeMR05,HuangNGHJJT07}, such as anomaly/event detection in network monitoring. 
Since these histograms effectively represent the entire dataset, it is important that these samplers capture the true distribution of the dataset with minimal bias or distortion. 
Unfortunately, approximate samplers have a relative error in their probabilities, consequently introducing potential statistical bias that propagates through the downstream algorithms. 
For example, algorithms that assume uniform sampling may experience inaccuracies due to this bias. 
These biases can be leveraged by a malicious attacker who adaptively queries a database for samples, which is the basis for the field of adversarial robust streaming~\cite{AvdiukhinMYZ19,ben2020adversarial,ben2020framework,hassidim2020adversarially,WoodruffZ21,BravermanHMSSZ21,AttiasCSS21,AjtaiBJSSWZ22,Ben-EliezerEO22,ChakrabartiGS22,AssadiCGS23,CherapanamjeriS23,WoodruffZZ23a,GribelyukLWYZ24,WoodruffZ24,GribelyukLWYZ25}. 
However, perfect $L_p$ samplers mitigate these issues by ensuring near-uniformity in their output distribution without increasing space complexity. 
In fact, even if we wish to extract $n^c$ samples from the true distribution for any constant $c>0$, we can set the additive distortion in the sampling probabilities to be $\frac{1}{n^{c+100}}$, so that the resulting total variation distance over the joint distribution of the samples remain statistically indistinguishable from extracting $n^c$ truly uniform samples, making them ideal for black-box applications. 

\paragraph{Distributed databases.}
In particular, an important application of sampling is in distributed databases, where independent samplers are locally implemented on disjoint portions of the dataset across multiple machines, which subsequently serve as compact summaries, providing valuable statistical insights into the distribution of data across the entire system. 
However, the accumulation of small biases from approximate samplers, manifested as variation distance from the true distribution, can present significant challenges, such as compromising the accuracy of downstream algorithms or sensitive statistical tests that rely on the fidelity of the sampled data. 
Perfect $L_p$ samplers can address these challenges by ensuring minimal distortion in their output, maintaining the integrity of both local and aggregate statistical summaries.

\paragraph{Privacy considerations.}
Another compelling motivation for the use of perfect $L_p$ samplers is their role in privacy-preserving applications. 
In such scenarios, the dataset $x\in \mathbb{R}^n$ is sampled to reveal an index $i\in [n]$ to an external party, while minimizing the leakage of global information about $x$. 
Since approximate samplers introduce a multiplicative bias into the sampling probabilities, which may depend on the global properties of the dataset, potential adversaries could exploit this bias to infer sensitive information about $x$.
For example, under such guarantees, it is permissible for a sampler to bias the sampling probabilities for a large set $S$ of indices by $(1+\eps)$ if a certain global property $\calP$ holds for $x$ and might instead bias the sampling probabilities of $S$ by $(1-\eps)$ if $\calP$ does not hold. 
Then an observer can deduce from a small number of samples whether the property $\calP$ holds by estimating the total sampling probabilities of indices in $S$. 
On the other hand, perfect $L_p$ samplers produce samples with polynomially small additive bias, reducing the potential for such leakage. 
This characteristic makes them better suited for privacy-sensitive tasks, where the goal is to reveal minimal information about the underlying data. 

\paragraph{Heavy-tailed emphasis.}
A key advantage of $L_p$ sampler for $p>2$ is its focus on dominant contributions.
When analyzing the vector $p$-norm $\|x\|_p$ where $p > 2$, heavier emphasis is placed on the coordinates of the vector $x$ that have larger frequencies. 
This property makes the $p$-norm particularly useful in scenarios where the focus is on prioritizing elements with larger contributions, such as in sparse signal recovery, outlier detection, and high-dimensional data analysis. 
More generally, the $p$ parameter can be interpreted as a interpolation between $L_0$, where all coordinates have the same contribution, regardless of their magnitude, and $L_\infty$, where only the largest coordinate is relevant. 

\subsection{Preliminaries}
In this paper, we use the notation $[n]$ to denote the set $\{1,2,\ldots,n\}$ for an integer $n\ge 1$. 
We use the notation $\poly(n)$ to denote a fixed polynomial whose degree can be determined by setting the appropriate constants in the algorithm and analysis. 
We similarly use the notation $\polylog(n)$ to denote $\poly(\log n)$. 
We say an event occurs with high probability if it occurs with probability at least $1-\frac{1}{\poly(n)}$. 
For a possibly multivariate function $f$, we use the notation $\tO{f}=f\cdot\polylog(f)$. 
For a vector $x\in\mathbb{R}^n$, we define the $p$-norm of $x$ by $\|x\|_p=\left(x_1^p+\ldots+x_n^p\right)^{1/p}$ and we define the $p$-th moment of $x$ by $F_p(x)=\|x\|_p^p$. 

We recall the following formulation of the Khintchine inequality:
\begin{theorem}[Khintchine inequality]
\thmlab{thm:khintchine}
\cite{haagerup1981best}
Let $r_1,\ldots,r_n\in\{-1,+1\}$ be independent random signs and let $p\ge 2$. 
Then  
\[\Ex{|r_1x_1+\ldots+r_nx_n|^p}\le(B_p)^p\cdot\|x\|_2^p,\]
where $B_p=\sqrt{2}\cdot\left(\frac{1}{\sqrt{\pi}}\cdot\Gamma\left(\frac{p+1}{2}\right)\right)^{1/p}$ and $\Gamma$ is the Gamma function.
\end{theorem}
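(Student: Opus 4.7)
The plan is to establish the inequality by comparing the Rademacher sum $S := r_1 x_1 + \cdots + r_n x_n$ with its Gaussian analogue. Let $g_1, \ldots, g_n$ be i.i.d.\ standard normals and set $G := g_1 x_1 + \cdots + g_n x_n$, so that $G \sim N(0, \|x\|_2^2)$. A direct computation via the substitution $u = t^2/2$ in the Gaussian density yields
\[\Ex{|G|^p} \;=\; \|x\|_2^p \cdot \frac{2^{p/2}}{\sqrt{\pi}} \cdot \Gamma\!\left(\tfrac{p+1}{2}\right) \;=\; (B_p)^p \cdot \|x\|_2^p,\]
so the theorem reduces to showing $\Ex{|S|^p} \le \Ex{|G|^p}$ for every $p \ge 2$. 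As a sanity check, at $p = 2$ one verifies $B_2 = 1$, matching $\Ex{S^2} = \|x\|_2^2$.

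For the case where $p = 2k$ is an even integer, I would carry out the inequality by explicit combinatorial expansion. Writing $\Ex{S^{2k}} = \sum_{i_1, \ldots, i_{2k}} \Ex{r_{i_1} \cdots r_{i_{2k}}} \, x_{i_1} \cdots x_{i_{2k}}$, only multi-indices in which each coordinate appears an even number of times contribute; grouping such indices by their multiplicity profile $(2m_1, \ldots, 2m_l)$ with $m_1 + \cdots + m_l = k$ produces the coefficient $\frac{(2k)!}{(2m_1)! \cdots (2m_l)!}$. The Gaussian moment $\Ex{G^{2k}}$ admits the analogous expansion via the Wick/Isserlis formula, where each profile is instead weighted by $\frac{(2k)!}{2^k \, m_1! \cdots m_l!}$. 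A termwise comparison reduces to the elementary inequality $2^{m_j} m_j! \le (2m_j)!$, equivalently $(2m_j - 1)!! \ge 1$, which is immediate.

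For non-integer $p \ge 2$, I would invoke a Fourier-analytic representation of the form $|t|^p = \gamma_p \int_0^\infty \Phi_p(ts)\, s^{-1-p}\, ds$ with a kernel $\Phi_p$ adapted to this range of $p$, together with the characteristic-function identities $\Ex{\cos(tS)} = \prod_{i=1}^n \cos(t x_i)$ and $\Ex{\cos(tG)} = e^{-t^2 \|x\|_2^2 / 2}$. Interchanging expectation and integral then reduces the sharp bound to a weighted pointwise comparison between $\prod_i \cos(t x_i)$ and $e^{-t^2 \|x\|_2^2 / 2}$ against the kernel $\Phi_p$.

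The main obstacle is precisely that for $p > 2$ the kernel $\Phi_p$ necessarily changes sign, so one cannot simply push the clean pointwise bound $\cos u \le e^{-u^2/2}$ through the integral. One must instead split the integration into regions of constant sign and exploit cancellation in a controlled way. This sign-changing behavior is also why the most naive alternative, namely interpolating between the even-integer bounds above via log-convexity of $p \mapsto \Ex{|S|^p}$, fails to recover the sharp constant $B_p$ at non-integer $p$, and is the source of the genuine analytic difficulty in Haagerup's original argument.
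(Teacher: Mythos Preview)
The paper does not prove this statement at all: it is stated as a known result, cited from \cite{haagerup1981best}, and used as a black box in the subsequent analysis (e.g., in \lemref{lem:cs_bounded_est_mean} and \lemref{lem:cs_bounded_est}). There is therefore no ``paper's own proof'' to compare your proposal against.

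That said, your sketch is a reasonable outline of how the sharp constant is actually obtained. The reduction to $\Ex{|S|^p}\le\Ex{|G|^p}$ together with the explicit Gaussian moment computation is exactly what identifies $B_p$; the even-integer case via term-by-term comparison of multinomial and Wick coefficients is clean and correct; and you are right that the genuine work for non-integer $p\ge 2$ lies in handling a sign-changing integral kernel, which is precisely the delicate part of Haagerup's original argument. You correctly flag that naive log-convexity interpolation between even integers does not recover the sharp $B_p$. So your proposal is an honest roadmap of the real proof, but for the purposes of this paper none of it is required: the theorem is imported wholesale from the literature.
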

Moreover, we recall the following property of the Gamma function.
\begin{proposition}
\proplab{prop:gamma}
$\left(\Gamma\left(\frac{p+1}{2}\right)\right)^{1/p}=\Theta(\sqrt{p})$.
\end{proposition}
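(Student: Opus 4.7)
My approach is to invoke Stirling's asymptotic expansion for the Gamma function and then take logarithms to extract the dominant growth rate in $p$. The starting point is the expansion
\[
\log \Gamma(x) = \left(x-\tfrac12\right)\log x - x + \tfrac12\log(2\pi) + O(1/x) \qquad \text{as } x\to\infty.
\]
Substituting $x = (p+1)/2$ and dividing by $p$, I would obtain
\[
\frac{1}{p}\log \Gamma\!\left(\frac{p+1}{2}\right) = \frac{1}{2}\log\frac{p+1}{2} - \frac{1}{2} + O\!\left(\frac{\log p}{p}\right),
\]
where the error term absorbs both the $-\frac{1}{2p}\log\frac{p+1}{2}$ piece coming from the $-\frac12\log x$ correction and the $\frac{1}{2p}\log(2\pi)$ constant contribution.

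Exponentiating both sides would then give
\[
\left(\Gamma\!\left(\frac{p+1}{2}\right)\right)^{1/p} = \sqrt{\frac{p+1}{2e}}\cdot\bigl(1 + o(1)\bigr),
\]
since $(\log p)/p = o(1)$ as $p\to\infty$. This already yields the $\Theta(\sqrt{p})$ bound in the regime $p \ge P_0$ for a sufficiently large constant $P_0$. For any bounded range $p \in [2, P_0]$, the map $p \mapsto (\Gamma((p+1)/2))^{1/p}$ is continuous and strictly positive on a compact interval, so there exist positive constants $c_1, c_2$ with $c_1 \sqrt{p} \le (\Gamma((p+1)/2))^{1/p} \le c_2\sqrt{p}$ on this range. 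Combining the two regimes yields the stated $\Theta(\sqrt{p})$ bound uniformly for $p \ge 2$.

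The only subtlety, and the one thing I would pay attention to, is verifying that the additive $O((\log p)/p)$ error inside the logarithm truly becomes a multiplicative $1 + o(1)$ after exponentiation, and that the $\frac{1}{2}\log\frac{p+1}{2}$ dominant term translates cleanly to the $\sqrt{p}$ growth rate; neither should cause trouble since both reductions are routine. No genuine obstacle is anticipated: the proposition is essentially a direct Stirling computation, and its role in the paper is to convert the Khintchine constant $B_p = \sqrt{2}\cdot(\Gamma((p+1)/2)/\sqrt{\pi})^{1/p}$ of \thmref{thm:khintchine} into the cleaner form $B_p = \Theta(\sqrt{p})$.
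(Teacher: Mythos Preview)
The paper states this proposition without proof, treating it as a recalled fact about the Gamma function, so there is nothing to compare against; your Stirling-based argument is the standard and correct way to verify it. One small remark: since $x-\tfrac12=\tfrac{p}{2}$ exactly when $x=\tfrac{p+1}{2}$, the term $(x-\tfrac12)\log x$ contributes precisely $\tfrac12\log\tfrac{p+1}{2}$ after dividing by $p$, with no residual $-\tfrac{1}{2p}\log\tfrac{p+1}{2}$ piece; the actual error is $O(1/p)$ rather than $O((\log p)/p)$, but this only makes your bound tighter and does not affect the conclusion.
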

We formally define the notion of a perfect $L_p$ sampler as follows:
\begin{definition}[Perfect $L_p$ sampler]
Given a turnstile data stream $S$, let $x \in \mathbb{R}^{n}$ be the vector induced by $S$. 
A perfect $L_p$ sampler outputs an index $i^*$, such that for all $i \in [n]$, we have
\[\PPr{i^*=i} = \frac{|x_i|^p}{\|x\|_p^p} \pm \frac{1}{\poly (n)}.\]
The sampler is allowed to output FAIL with probability $\delta>0$, which is given by an input parameter and we generally set to $\frac{1}{3}$. 
\end{definition}
\cite{JayaramW18} introduced a perfect $L_p$ sampler for $p\in(0,2]$ with the following guarantees:
\begin{theorem}[Perfect $L_p$ sampler for $p \le 2$, c.f. Theorem 9 in \cite{JayaramW18}] 
\thmlab{thm:lp_sampler_p_small}
Given a turnstile data stream $S$, let $x \in \mathbb{R}^{n}$ be the vector induced by $S$. For $p \in (0,2]$, there exists a perfect $L_p$ sampler with failure probability at most $\delta_1$. Moreover, it outputs an estimate $\widehat{x}$ such that $\widehat{x}=(1 \pm \eps) x_i$ with probability $1-\delta_2$. The sampler uses
\[\O{\left(\log ^2n(\log \log n)^2+\beta \log n \log \left(1 / \delta_2\right)\right) \log \left(1 / \delta_1\right)}\]
bits of space for $p<2$, where $\beta=\min \left\{\eps^{-2}, \eps^{-p} \log \left(1/\delta_2\right)\right\}$, and
\[\O{\left(\log ^3n+\eps^{-2} \log ^2n \log \left(1 / \delta_2\right)\right) \log \left(1 / \delta_1\right)}\]
bits of space for $p=2$.
\end{theorem}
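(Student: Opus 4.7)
The plan is to follow the precision sampling paradigm: scale each coordinate by a random factor so that the heaviest entry of the scaled vector is an exact $L_p$ sample, then detect this heavy entry using a small linear sketch. Concretely, I would draw i.i.d.\ uniforms $u_1,\ldots,u_n \in (0,1]$ (realized pseudorandomly via a Nisan-style generator so as not to store $\Omega(n)$ bits), set $z_i = x_i / u_i^{1/p}$, and exploit the standard identity $\PPr{\arg\max_i |z_i| = j} = |x_j|^p/\|x\|_p^p$. Because the turnstile stream is linear in $x$, the scaled vector $z$ is maintained implicitly by folding the factor $u_i^{-1/p}$ directly into the CountSketch hashing weights.

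Second, to locate the argmax I would run a CountSketch on $z$ with $B = \polylog n$ buckets (for $p<2$), or $B = \O{\eps^{-2}\log n}$ (for $p=2$), together with $\O{\log(1/\delta_1)}$ independent parallel repetitions for amplification. Standard tail bounds show that with high probability the bucket containing the argmax of $z$ is well-separated from the noise floor; we output its identifier if separation holds, and FAIL otherwise. For the $(1\pm\eps)$-estimate of $x_{i^*}$, an independent CountSketch sized by the stated parameter $\beta=\min(\eps^{-2},\eps^{-p}\log(1/\delta_2))$ is queried on $i^*$, and a standard heavy-hitter argument shows the returned value satisfies $\widehat x = (1\pm\eps) x_{i^*}$.

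The key step, which turns an approximate sampler into a \emph{perfect} one, is to drive the multiplicative bias inherent to CountSketch down to an additive $1/\poly(n)$ distortion. Here I would adopt the duplication-with-random-signs device: replicate every coordinate into $N = n^c$ virtual copies, assign each copy an independent sign $\sigma_{i,k}\in\{\pm 1\}$ together with a fresh uniform scaling $\theta_{i,k}$, and feed the enlarged length-$nN$ vector into the CountSketch. The Khintchine inequality (\thmref{thm:khintchine}) applied to the signed sum inside each bucket shows the bucket-error distribution is nearly symmetric around zero and, crucially, its law is nearly independent of which coordinate is heaviest. Averaging this cancellation over the $N$ copies pushes the total variation distance between the sampler's output and the ideal precision-sampling distribution below $1/\poly(n)$.

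The main obstacle is precisely this last decorrelation: one has to argue that conditioning on the heaviest entry lying in a specific cell does not measurably alter the distribution of the competing noise, so that the argmax identity and the conditional acceptance probability decouple up to $1/\poly(n)$ error. The random-sign duplication is essential here, and the Khintchine bound must be invoked at a polynomially small scale, which is what forces the $n^c$ blow-up in virtual coordinates (still only $\O{\log n}$ bits per identifier). The remaining space accounting is routine: the CountSketch width, the $\O{\log n}$-bit counters, the $\O{\log(1/\delta_1))}$ repetitions, and the $\O{\log n (\log\log n)^2}$-bit seed for the pseudorandom uniforms combine to yield the stated bounds.
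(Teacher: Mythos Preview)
This theorem is not proved in the paper; it is quoted as Theorem~9 of \cite{JayaramW18} and used only as a black box (e.g., in \algref{alg:lp:sampler} and \algref{alg:lp:sampler_frac}). So there is no ``paper's own proof'' to compare against, and your sketch is really an attempt to reconstruct the \cite{JayaramW18} argument.

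That said, your reconstruction has two substantive inaccuracies. First, the max-stability identity $\PPr{\arg\max_i |z_i| = j} = |x_j|^p/\|x\|_p^p$ holds when $z_i = x_i/\be_i^{1/p}$ with $\be_i$ \emph{exponential}, not uniform; with uniforms the argmax distribution is not proportional to $|x_j|^p$ (this is exactly \lemref{lem:exp_inverse} in the present paper). Second, your account of why duplication yields a \emph{perfect} sampler is off: the point is not that random signs on the copies symmetrize the bucket noise via Khintchine. Rather, duplication flattens the vector so that, via the order-statistics representation in \lemref{lem:bz_independence}, each $|z_{D(k)}|$ is (up to $n^{-\Omega(c)}$) a function of $k$ and hidden exponentials \emph{independent of the anti-rank vector}. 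Hence the probability that the statistical test (the gap check between the top two CountSketch estimates) passes is, up to $1/\poly(n)$, the same regardless of which index is $D(1)$; that is what kills the multiplicative bias. The random signs you introduce on duplicates are not part of the construction, and the $(\log\log n)^2$ in the space bound comes from the half-space PRG of \cite{GopalanKM18} (see \thmref{thm:gkm:prg}), not Nisan.
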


\paragraph{Exponential random variables.}
Throughout our work, we shall frequently use exponential random variables and a number of their properties. 
We first define an exponential random variable:
\begin{definition}[Exponential random variable]
If $\be$ is a exponential random variable with scale parameter $\lambda>0$, then the probability density function for $\be$ is 
\[p(x)=\lambda e^{-\lambda x}.\]
We say $\be$ is a standard exponential random variable if $\lambda=1$.
\end{definition}
We have the following facts about exponential random variables. 
\begin{proposition}
Let $\be$ be a standard exponential random variable. 
Then for any $a,b\ge 0$, 
\begin{align*}
\Pr[ \be \ge a\log n ] = \frac{1}{n^a},~~~ \Pr[ \be\le b ] \le b.
\end{align*}
\end{proposition}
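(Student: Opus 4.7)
The plan is to work directly from the density $p(x)=e^{-x}$ (since $\be$ is a standard exponential variable, i.e.\ $\lambda=1$) and compute each probability via the explicit CDF. Both claims reduce to one-line integrals, so the role of the proof proposal is mainly to record which elementary inequality handles the second bound.

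For the first identity, I would write
\[\PPr{\be \ge a\log n} = \int_{a\log n}^{\infty} e^{-x}\,dx = e^{-a\log n} = n^{-a},\]
which is an exact equality and requires no further work. For the second inequality, I would similarly compute
\[\PPr{\be \le b} = \int_0^b e^{-x}\,dx = 1 - e^{-b},\]
and then invoke the standard elementary bound $1 - e^{-b} \le b$ valid for all $b \ge 0$. This last inequality follows immediately from the convexity of $e^{-x}$ (the tangent line at $x=0$ lies below the curve, giving $e^{-b} \ge 1 - b$) or equivalently from the Taylor series $e^{-b} = 1 - b + b^2/2 - \cdots$ truncated appropriately.

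There is essentially no obstacle here since both statements are direct consequences of the form of the exponential CDF; the only step that is not a pure evaluation is the inequality $1 - e^{-b} \le b$, which I would justify in one line by convexity. No union bounds, concentration tools, or properties of the Gamma function are needed.
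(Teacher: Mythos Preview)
Your proposal is correct; both computations follow directly from the exponential CDF and the elementary bound $1-e^{-b}\le b$. The paper itself states this proposition without proof, treating it as a standard fact, so your write-up actually supplies more detail than the original.
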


\begin{proposition}[Scaling of exponentials]
\proplab{prop:exp_scaling}
    Let $t$ be exponentially distributed with rate $\lambda$, and let $\alpha>0$. Then $\alpha t$ is exponentially distributed with rate $\lambda / \alpha$.
\end{proposition}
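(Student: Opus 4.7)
The plan is to verify this by a direct computation with the cumulative distribution function, since the claim is a one-line change-of-variables fact. Let $X = \alpha t$. Since $\alpha>0$, the event $\{X\le x\}$ is identical to the event $\{t\le x/\alpha\}$, so for $x\ge 0$ we have
\[
\PPr{X\le x} = \PPr{t\le x/\alpha} = 1-e^{-\lambda \cdot (x/\alpha)} = 1-e^{-(\lambda/\alpha)\,x},
\]
using the known CDF of a rate-$\lambda$ exponential. The right-hand side is exactly the CDF of an exponential random variable with rate $\lambda/\alpha$, so $X=\alpha t$ has that distribution.

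As a sanity check, I would also present the density-level computation. The density of $t$ is $f_t(y)=\lambda e^{-\lambda y}$ for $y\ge 0$, and since the map $y\mapsto \alpha y$ is a smooth bijection on $[0,\infty)$ with Jacobian $\alpha$, the density of $X=\alpha t$ is
\[
f_X(x) = \frac{1}{\alpha}\,f_t\!\left(\frac{x}{\alpha}\right) = \frac{\lambda}{\alpha}\,e^{-(\lambda/\alpha)\,x},
\]
which is the density of an exponential with rate $\lambda/\alpha$. Both arguments agree.

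There is no real obstacle here; the statement is a standard property of exponentials and the argument is a single change of variables. The only thing worth flagging is the sign convention $\alpha>0$, which is needed so that the map $y\mapsto\alpha y$ preserves $[0,\infty)$ and the inequality $\alpha t\le x \iff t\le x/\alpha$ does not reverse direction. With this, the proposition is immediate, and no further case analysis is required.
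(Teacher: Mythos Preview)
Your proof is correct. The paper does not give a proof of this proposition at all; it is stated as a standard fact in the preliminaries, so your CDF (and density) computation is entirely adequate and there is nothing to compare against.
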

We define an anti-rank vector to be the ranking of the indices based on their magnitudes, generally in the context of after scaling by exponential random variables. 
Formally, for a vector $z\in\mathbb{R}^n$ and for $k\in[n]$, we define the $k$-th anti-rank $D(k) \in[n]$ of $z$ to be the index $D(k)$ so that $|z_{D(1)}|\ge\ldots\ge|z_{D(n)}|$. 
Using the structure of the anti-rank vector of a set of exponential random variables, \cite{Nag06} introduces a simple form for describing the distribution of $t_{D(k)}$ as a function of $\left(\lambda_1, \ldots, \lambda_n\right)$ and the anti-rank vector.

\begin{proposition}[\cite{Nag06}]
\proplab{prop:exp_max_prob}
 For any $i=1,2, \ldots, n$, we have
\[\operatorname{Pr}[D(1)=i]=\frac{\lambda_i}{\sum_{j=1}^n \lambda_j}\]
\end{proposition}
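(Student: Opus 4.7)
The plan is to prove the formula by a direct integration over the joint density of the exponentials, using the memorylessness / product structure of independent exponentials. First I would fix the interpretation: in the context of this paper, the vector whose anti-rank is being taken is naturally a monotone transform of the exponentials $t_1,\ldots,t_n$ (for instance $z_i = 1/t_i$, as arises in the $L_p$-sampling construction), so that the event $\{D(1)=i\}$ is the same as $\{t_i < t_j \text{ for all } j \neq i\}$. With this reduction in hand, the proposition reduces to the classical identity that the argmin of independent exponentials with rates $\lambda_i$ lands on index $i$ with probability $\lambda_i/\sum_j \lambda_j$.

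Next I would carry out the one-line integration. By independence, the joint density of $(t_1,\ldots,t_n)$ is the product $\prod_{j} \lambda_j e^{-\lambda_j x_j}$. Conditioning on the value $x$ of $t_i$ and integrating out the other coordinates,
\[
\PPr{t_i < t_j \ \forall j \neq i} \;=\; \int_0^\infty \lambda_i e^{-\lambda_i x}\prod_{j\neq i} \PPr{t_j > x}\, dx \;=\; \int_0^\infty \lambda_i e^{-\lambda_i x}\prod_{j\neq i} e^{-\lambda_j x}\, dx,
\]
using the survival function $\PPr{t_j > x} = e^{-\lambda_j x}$ stated in the preliminaries. Combining exponents gives $\lambda_i \int_0^\infty e^{-(\sum_j \lambda_j)x}\,dx = \lambda_i/\sum_j \lambda_j$, which is the claimed formula. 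A ties-have-probability-zero argument (exponentials are continuous) lets me replace $<$ by $\le$ without changing the probability, so the events $\{D(1)=i\}$ for different $i$ partition a full-measure set, and the probabilities sum to $1$ as a sanity check.

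There is essentially no technical obstacle; the main care-point is bookkeeping around the anti-rank convention. Since the paper's definition orders $|z_{D(1)}|\ge\cdots\ge|z_{D(n)}|$ by magnitude, I would explicitly note (in the sentence preceding the integration) that $D(1)$ corresponds to the \emph{argmin} of the underlying exponentials under the reciprocal/monotone transformation used throughout the paper, so that the Nagaraja-style formula $\lambda_i/\sum_j\lambda_j$ is indeed the right expression. After that disclaimer, the proof is just the PDF-level computation above, and can be cited directly from \cite{Nag06} without further elaboration.
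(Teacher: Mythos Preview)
Your argument is correct and is the standard proof of this classical fact. Note, however, that the paper does not actually prove this proposition: it is stated in the preliminaries as a citation to \cite{Nag06} and used as a black box. So there is no ``paper's own proof'' to compare against; your direct integration is exactly the elementary justification one would supply, and your remark about reconciling the anti-rank convention (largest $|z_{D(1)}|$ corresponding to smallest $t_{D(1)}$ under the reciprocal transform) is the right bookkeeping to make the statement consistent with the surrounding text.
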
 

\begin{proposition}[\cite{Nag06}]
\proplab{prop:exp_order_statistics}
Let $\left(t_1, \ldots, t_n\right)$ be independently distributed exponentials, where $t_i$ has rate $\lambda_i>0$. Then for any $k=1,2, \ldots, n$, we have
\[t_{D(k)}=\sum_{i=1}^k \frac{E_i}{\sum_{j=i}^n \lambda_{D(j)}},\]
where $E_1, E_2, \ldots, E_n$ are i.i.d. exponential random variables with mean $1$ that are independent of the anti-rank vector $(D(1), D(2), \ldots, D(n))$.
\end{proposition}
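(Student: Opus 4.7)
The plan is to prove \propref{exp_order_statistics} by computing the joint distribution of the anti-rank vector $D$ together with the consecutive gaps of the order statistics, and showing that after rescaling, these gaps are standard exponentials independent of $D$. The guiding idea is the memoryless property: once the current minimum is revealed, the residual values form a fresh collection of independent exponentials with the same rates as the survivors.

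First I would write down the joint density of $(t_1,\ldots,t_n)$, which is $\prod_{i=1}^n \lambda_i e^{-\lambda_i t_i}$, and restrict to the event $\{D=\pi\}$ for a fixed permutation $\pi$, i.e.\ $t_{\pi(1)}\le\cdots\le t_{\pi(n)}$. Then I would change variables from $(t_{\pi(1)},\ldots,t_{\pi(n)})$ to the gaps $g_i = t_{\pi(i)}-t_{\pi(i-1)}$ (with $t_{\pi(0)}=0$), which has unit Jacobian. Substituting $t_{\pi(i)} = g_1+\cdots+g_i$ into the exponent and swapping the order of summation gives
\[
\sum_{i=1}^n \lambda_{\pi(i)} t_{\pi(i)} \;=\; \sum_{i=1}^n g_i \sum_{j=i}^n \lambda_{\pi(j)},
\]
so the joint density on $\{D=\pi\}$ factors as $\bigl(\prod_i \lambda_{\pi(i)}\bigr)\prod_{i=1}^n \exp\!\bigl(-g_i \Lambda_i(\pi)\bigr)$ with $\Lambda_i(\pi):=\sum_{j=i}^n \lambda_{\pi(j)}$. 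Integrating out the gaps recovers $\Pr[D=\pi]$, and the conditional density of $(g_1,\ldots,g_n)$ given $D=\pi$ is exactly the product of $\text{Exp}(\Lambda_i(\pi))$ densities.

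Next I would apply \propref{exp_scaling}: conditionally on $D$, each $g_i\cdot\Lambda_i(D)$ is a standard exponential, and the conditional joint law of $(g_1\Lambda_1(D),\ldots,g_n\Lambda_n(D))$ is the product of $n$ standard exponentials, a distribution that does \emph{not} depend on the realization of $D$. Hence by the standard criterion for independence via constant conditional distribution, $E_i := g_i\Lambda_i(D)$ are i.i.d.\ standard exponentials jointly independent of the anti-rank vector $D$. Finally, writing $t_{D(k)} = \sum_{i=1}^k g_i = \sum_{i=1}^k E_i / \Lambda_i(D)$ yields the claimed representation.

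The only subtle point is making sure that the \emph{same} $E_i$'s work across all realizations of $D$, rather than merely obtaining, for each fixed $\pi$, a family of i.i.d.\ exponentials defined on $\{D=\pi\}$. This is handled by defining $E_i := (t_{D(i)}-t_{D(i-1)})\cdot\sum_{j=i}^n\lambda_{D(j)}$ globally, and then invoking the fact that the conditional law of $(E_1,\ldots,E_n)\mid D$ is the $\pi$-independent product of standard exponentials, which is equivalent to joint independence of $(E_1,\ldots,E_n)$ and $D$. Everything else is a straightforward computation; the memoryless property is hidden inside the algebraic factorization of the joint density after the change of variables.
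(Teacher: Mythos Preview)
Your argument is correct and is the standard proof of this representation: the change of variables to gaps, the factorization of the exponent via swapping the order of summation, and the identification of the rescaled gaps as i.i.d.\ standard exponentials independent of $D$ are all carried out properly, and your care about defining the $E_i$ globally (rather than only on each $\{D=\pi\}$) is exactly the right point to flag.

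There is nothing to compare against in the paper itself: the proposition is quoted from \cite{Nag06} and is stated without proof, serving purely as a black-box input to \lemref{lem:exp_inverse} and the subsequent analysis. So your write-up supplies a self-contained justification that the paper simply outsources to the reference.
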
 
\cite{JayaramW18} showed the following characterization of each coordinate $z_{D(k)}$ under the scaling $z_i=\frac{x_i}{\be_i^{1/p}}$, where $\be_i$ is an independent exponential random variable for each $i\in[n]$. 
\begin{lemma}[\cite{JayaramW18}]
\lemlab{lem:exp_inverse}
Let $f \in \mathbb{R}^n$ be a vector, let $\left(\be_1, \ldots, \be_n\right)$ be i.i.d. exponential random variables with rate $1$, let $z_i=x_i / \be_i^{1 / p}$, and let $(D(1), \ldots, D(n))$ be the anti-rank vector of the vector $\left(|z_1|^{-p}, \cdots |z_n|^{-p}\right)$. 
Then we have
\[\PPr{D(1)=i} = \frac{\left|x_i\right|^p}{\|x\|_p^p}.\]
As a result, the probability that $\left|z_i\right|=\arg \max _j\left\{\left|z_j\right|\right\}$ is precisely $\left|x_i\right|^p /\|x\|_p^p$, so for a perfect $L_p$ sampler it suffices to return $i \in[n]$ with $\left|z_i\right|$ maximum.
Moreover, we have
\[z_{D(k)}=\left(\sum_{i=1}^k \frac{E_i}{\sum_{j=i}^N f_{D(j)}^p}\right)^{-1 / p},\]
where $E_i$ 's are i.i.d. exponential random variables with mean $1$, and are independent of the anti-rank vector $(D(1), \ldots, D(n))$.
\end{lemma}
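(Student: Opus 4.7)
\medskip\noindent\textbf{Proof proposal.}
The plan is to reduce the lemma to the two generic propositions on exponential order statistics (\propref{prop:exp_max_prob} and \propref{prop:exp_order_statistics}) via a simple change of variables. The guiding observation is that the transformation $z_i = x_i/\be_i^{1/p}$ converts $L_p$-sampling into an order-statistics problem for a collection of independent exponentials whose rates are precisely $|x_i|^p$.

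First, I would define $t_i := |z_i|^{-p} = \be_i/|x_i|^p$ for each $i\in[n]$. By the scaling property (\propref{prop:exp_scaling}), each $t_i$ is an exponential random variable with rate $\lambda_i := |x_i|^p$, and the $t_i$'s remain mutually independent since the $\be_i$'s are and each transformation only involves its own $\be_i$. The anti-rank vector $(D(1),\ldots,D(n))$ of $(|z_1|^{-p},\ldots,|z_n|^{-p})$ coincides with the ordering of the $t_i$'s from smallest to largest, so $D(1)$ is the index of the minimum $t_i$, equivalently the argmax of $|z_i|$.

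Applying \propref{prop:exp_max_prob} to $(t_1,\ldots,t_n)$ then immediately gives
\[\PPr{D(1)=i} \;=\; \frac{\lambda_i}{\sum_{j\in[n]}\lambda_j} \;=\; \frac{|x_i|^p}{\|x\|_p^p},\]
which proves the first claim and shows that returning the $i\in[n]$ with maximum $|z_i|$ yields a perfect $L_p$ sampler. For the moreover part, applying \propref{prop:exp_order_statistics} to the same collection gives
\[t_{D(k)} \;=\; \sum_{i=1}^k \frac{E_i}{\sum_{j=i}^n \lambda_{D(j)}} \;=\; \sum_{i=1}^k \frac{E_i}{\sum_{j=i}^n |x_{D(j)}|^p},\]
with the $E_i$'s i.i.d.\ standard exponentials independent of the anti-rank vector. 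Substituting $t_{D(k)} = |z_{D(k)}|^{-p}$ and taking the $(-1/p)$-th power yields the stated closed form. The argument is essentially bookkeeping; there is no substantive obstacle once the change of variables is recognized, since preservation of independence is immediate and both claims are then literal restatements of the cited propositions with $\lambda_i = |x_i|^p$.
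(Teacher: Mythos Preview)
Your proposal is correct and is exactly the standard argument: identify $t_i=\be_i/|x_i|^p$ as independent exponentials with rates $\lambda_i=|x_i|^p$ via \propref{prop:exp_scaling}, then read off both claims directly from \propref{prop:exp_max_prob} and \propref{prop:exp_order_statistics}. The paper itself does not give a proof of this lemma---it simply cites \cite{JayaramW18}---so there is nothing further to compare against.
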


The next statement shows that the maximum scaled vector is roughly a $\frac{1}{\log^2 n}$-heavy hitter with respect to the entire scaled vector. 
\begin{lemma}
[\cite{EsfandiariKMWZ24}]
\lemlab{lem:max:heavy}
Let $\be_1,\ldots,\be_n$ be independent standard exponential random variables, let $\alpha_1,\ldots,\alpha_n \ge 0$, and let $C>0$ be a fixed constant.  
Then
\begin{align*}
\Pr \Big[ \frac{\max_{i\in[n]}\alpha_i/\be_i}{\sum_{i=1}^n \alpha_i/\be_i}\ge\frac{1}{C\log^2 n} \Big] \ge1-\frac{1}{\poly (n)}.
\end{align*}
\end{lemma}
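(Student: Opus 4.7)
My plan is to use \propref{prop:exp_scaling} and \propref{prop:exp_order_statistics} to rewrite the ratio $\max_i(\alpha_i/\be_i)/\sum_i(\alpha_i/\be_i)$ as a function of universal partial sums of i.i.d.\ standard exponentials, and then apply a standard concentration argument.

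Set $Z_i := \alpha_i/\be_i$. By \propref{prop:exp_scaling}, $1/Z_i = \be_i/\alpha_i$ is exponentially distributed with rate $\alpha_i$, so $(1/Z_1,\ldots,1/Z_n)$ fits the setup of \propref{prop:exp_order_statistics}. Let $D$ denote the anti-rank vector of $(1/Z_1,\ldots,1/Z_n)$, i.e., the permutation sorting the $Z_i$'s in \emph{decreasing} order. Then \propref{prop:exp_order_statistics} gives
\[\frac{1}{Z_{D(k)}} \;=\; \sum_{i=1}^k \frac{E_i}{B_i}, \qquad B_i := \sum_{j=i}^n \alpha_{D(j)},\]
where $E_1,\ldots,E_n$ are i.i.d.\ standard exponentials independent of $D$. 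Since $\alpha_j \ge 0$, the tail sums satisfy $B_1 \ge B_2 \ge \cdots \ge B_n$, so $1/B_i \ge 1/B_1$ and hence $\sum_{i=1}^k E_i/B_i \ge T_k/B_1$ with $T_k := \sum_{i=1}^k E_i$. Combined with $Z_{D(1)} = B_1/E_1 = B_1/T_1$, this yields the data-independent pointwise bound
\[\frac{Z_{D(k)}}{Z_{D(1)}} \;\le\; \frac{T_1}{T_k}.\]

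It therefore suffices to show $\sum_{k=1}^n T_1/T_k = O(\log^2 n)$ with probability $1 - 1/\poly(n)$, an event that depends only on the i.i.d.\ sequence $(E_i)$. I would split the sum at $k_0 := c\log n$ for a sufficiently large constant $c$. For $k \le k_0$, use the trivial monotonicity bound $T_1/T_k \le 1$, contributing $O(\log n)$. For $k > k_0$, a Chernoff-type tail bound for the Gamma distribution together with a union bound over $k$ shows that $T_k \ge k/2$ simultaneously for all $k \in (k_0,n]$ with probability $1 - 1/\poly(n)$, so $T_1/T_k \le 2T_1/k$ and the tail contribution is $O(T_1 \log n)$. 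Since $T_1 = E_1 = O(\log n)$ with high probability, the two contributions sum to $O(\log^2 n)$, and rescaling the constant $c$ yields the claimed $1/(C\log^2 n)$ lower bound.

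The main conceptual obstacle is spotting that the monotonicity $B_1 \ge \cdots \ge B_n$ collapses the ratio $Z_{D(k)}/Z_{D(1)}$ to a function of the universal i.i.d.\ partial sums $T_k$, thereby decoupling the argument entirely from the (possibly adversarial) weights $\alpha$; once this reduction is in hand, the remaining analysis is standard concentration for sums of exponentials.
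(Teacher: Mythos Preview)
The paper does not prove this lemma; it is quoted from \cite{EsfandiariKMWZ24} without argument, so there is no in-paper proof to compare against. Your proof is correct. The reduction via \propref{prop:exp_order_statistics} and the monotonicity $B_1\ge\cdots\ge B_n$ to the data-free bound $Z_{D(k)}/Z_{D(1)}\le T_1/T_k$ is exactly the right move, and the remaining concentration (Chernoff for Gamma tails plus the trivial bound $E_1=O(\log n)$) is routine. One small technical point worth making explicit: \propref{prop:exp_order_statistics} assumes strictly positive rates $\lambda_i$, whereas the lemma allows $\alpha_i\ge 0$; but any coordinate with $\alpha_i=0$ contributes $0$ to both numerator and denominator, so you may discard those indices and work with the strictly positive subvector without loss of generality.
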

Next, we show that the coordinate with the largest magnitude can be related to the $L_p$-norm of the original input vector. 
\begin{lemma}
\lemlab{lem:max_stability}
Let $f \in \mathbb{R}^n$ be a vector, let $\left(\be_1, \ldots, \be_n\right)$ be i.i.d. exponential random variables with rate $1$, let $z_i=x_i / \be_i^{1 / p}$, and let $(D(1), \ldots, D(n))$ be the anti-rank vector of the vector $\left(|z_1|^{-p}, \cdots |z_n|^{-p}\right)$. Then, we have that $|z_{D(1)}| > \frac{\|x\|_p}{100 \log \frac{1}{\eps}}$ holds with probability $1 - \poly(\eps)$.
\end{lemma}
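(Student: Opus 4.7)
The plan is to specialize Lemma \ref{lem:exp_inverse} to $k=1$, which gives a clean closed-form for the top scaled coordinate, and then reduce the claim to a one-sided tail bound on a standard exponential. Concretely, applying Lemma \ref{lem:exp_inverse} with $k=1$, the inner sum degenerates to a single term and $\sum_{j=1}^{n}|x_{D(j)}|^p=\|x\|_p^p$, so I get
\[
|z_{D(1)}|=\left(\frac{E_1}{\|x\|_p^p}\right)^{-1/p}=\frac{\|x\|_p}{E_1^{1/p}},
\]
where $E_1$ is a standard exponential random variable (independent of the anti-rank vector, but we do not need that here).

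Given this representation, the event $|z_{D(1)}|>\frac{\|x\|_p}{100\log(1/\eps)}$ is exactly equivalent to $E_1<(100\log(1/\eps))^p$. So the lemma reduces to bounding $\PPr{E_1\ge(100\log(1/\eps))^p}$. Using the standard exponential tail $\PPr{E_1\ge t}=e^{-t}$, this failure probability equals $\exp\bigl(-(100\log(1/\eps))^p\bigr)$. Since the statement is being used in the context $p\ge 1$ (and in fact $p>2$ throughout the paper), we have $(100\log(1/\eps))^p\ge 100\log(1/\eps)$ for $\eps$ small enough, so the failure probability is at most $\eps^{100}=\poly(\eps)$, as claimed.

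I do not foresee any real obstacle: the only substantive input is the $k=1$ case of Lemma \ref{lem:exp_inverse}, and the rest is a direct exponential tail computation. The one small care point is to verify that the exponent $(100\log(1/\eps))^p$ really does exceed $100\log(1/\eps)$ in the regime of interest; this holds whenever $100\log(1/\eps)\ge 1$, i.e.\ for all sufficiently small $\eps$, and one can absorb the remaining range of $\eps$ into the implicit constant in $\poly(\eps)$ (or assume $\eps\le 1/e$ without loss of generality). This yields the stated bound $1-\poly(\eps)$.
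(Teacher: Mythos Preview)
Your proposal is correct and follows essentially the same approach as the paper: both invoke \lemref{lem:exp_inverse} at $k=1$ to obtain $|z_{D(1)}|=\|x\|_p/E_1^{1/p}$ and then apply the exponential tail bound. If anything, your write-up is more careful than the paper's, which writes $|z_{D(1)}|=\|x\|_p/\be$ (dropping the $1/p$ exponent) and bounds $\PPr{\be<\log(100/\eps)}$ directly; your explicit handling of the $p$-th power and the resulting bound $\exp(-(100\log(1/\eps))^p)\le\eps^{100}$ is the cleaner version of the same argument.
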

\begin{proof}
By \lemref{lem:exp_inverse}, we have
\[|z_{D(1)}|= \frac{\|x\|_p}{\be},\]
where $\be$ is a random exponential variable with rate $1$, independent of $(D(1), \cdots, D(n))$. Notice that by the cdf of exponential variables, we have
\[\PPr{\be < \log 100 \cdot \frac{1}{\eps}} > 1- \poly(\eps),\]
which proves our desired result.
\end{proof}
We upper bound the $L_2$ norm of the scaled vector $z$ in terms of the original input vector $f$. 
\begin{lemma}
\lemlab{lem:z_second_moment}
Let $f \in \mathbb{R}^n$ be a vector, let $\left(\be_1, \ldots, \be_n\right)$ be i.i.d. exponential random variables with rate $1$, and let $z_i=x_i / \be_i^{1 / p}$. Then, we have $\|z\|_2 = \O{\|f\|_2}$ with probability $\Omega(1)$.
\end{lemma}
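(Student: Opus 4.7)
The plan is to bound $\|z\|_2^2$ in expectation and then apply Markov's inequality. Writing $\|z\|_2^2 = \sum_{i \in [n]} x_i^2 / \be_i^{2/p}$ and using linearity of expectation and independence of the $\be_i$'s, I would first compute
\[
\Ex{\|z\|_2^2} = \sum_{i \in [n]} x_i^2 \cdot \Ex{\be_i^{-2/p}}.
\]

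The key observation is that the moment $\Ex{\be^{-2/p}}$ for a standard exponential $\be$ is finite precisely because we are in the regime $p > 2$. Indeed,
\[
\Ex{\be^{-2/p}} = \int_0^\infty t^{-2/p} e^{-t}\,dt = \Gamma\!\left(1 - \tfrac{2}{p}\right),
\]
which is a finite absolute constant depending only on $p$, since the integrand is integrable near $0$ as soon as $2/p < 1$ (and integrability at infinity is automatic). Thus $\Ex{\|z\|_2^2} = \Gamma(1 - 2/p) \cdot \|x\|_2^2 = O(\|x\|_2^2)$.

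Finally, I would apply Markov's inequality: for any constant $C > 1$,
\[
\PPr{\|z\|_2^2 > C \cdot \Gamma(1 - 2/p) \cdot \|x\|_2^2} \le \frac{1}{C}.
\]
Choosing $C$ to be a sufficiently large absolute constant (e.g., $C = 10$) yields $\|z\|_2 = O(\|x\|_2)$ with probability at least $\Omega(1)$, as desired. There is no real obstacle here; the only subtle point is ensuring that the integral defining $\Ex{\be^{-2/p}}$ converges near zero, which is exactly what the hypothesis $p > 2$ guarantees, and this is also why the resulting constant $\Gamma(1 - 2/p)$ may blow up as $p \downarrow 2$ but remains bounded for any fixed $p > 2$.
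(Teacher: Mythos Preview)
Your proposal is correct and follows essentially the same approach as the paper: compute $\Ex{\|z\|_2^2}$ by linearity, show the moment $\Ex{\be^{-2/p}}$ is a finite constant for $p>2$, and finish with Markov's inequality. The only cosmetic difference is that the paper bounds the integral $\int_0^\infty t^{-2/p}e^{-t}\,dt$ by splitting at $t=1$, whereas you identify it directly as $\Gamma(1-2/p)$; these are equivalent ways of verifying the same fact.
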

\begin{proof}
Now we have
\[\|z\|_2^2 = \sum_{i \in [n]} \frac{x_i^2}{\be_i^{2/p}}.\]
where the $\be_i$'s are i.i.d. exponential random variables with rate $1$. Taking the expectation of $\|z\|_2^2$, we have
\begin{align*}
\Ex{\|z\|_2^2} = \sum_{i \in [n]} \Ex{\frac{x_i^2}{\be_i^{2/p}}} = \sum_{i \in [n]} x_i^2 \cdot \Ex{\frac{1}{\be^{2/p}}}.
\end{align*}
Consider $\Ex{\frac{1}{\be^{2/p}}}$, let $g(x)$ denote the pdf of $\be$, we have
\begin{align*}
\Ex{\frac{1}{\be^{2/p}}} &= \int_0^\infty t^{-2/p} g(t) \d t \\
&= \int_0^1 t^{-2/p} e^{-t} \d t +\int_1^\infty t^{-2/p} e^{-t} \d t .
\end{align*}
For the first term, since $e^{-t} \le 1$ we have
\[\int_0^1 t^{-2/p} e^{-t} \d t \le \int_0^1 t^{-2/p} \d t = \frac{t^{1-2/p}}{1 - 2/p}  ~\Big|_0^1 = \O{1}.\]
For the second term, since $p > 2$, we have $t^{-2/p} \le 1$ for $t \ge 1$. Thus, we have
\[\int_1^\infty t^{-2/p} e^{-t} \d t \le \int_1^\infty e^{-t} \d t = - e^{-t} ~\Big|_1^\infty = \O{1}.\]
Combining the bounds, we have $\Ex{\|z\|_2^2} = \O{\|f\|_2^2}$. Then, by Markov's inequality, we have the desired guarantee.
\end{proof}

\section{Perfect Sampling}
\seclab{sec:perfect:sampling}
In this section, we describe our implementations for both the perfect $L_p$ samplers and the perfect polynomial samplers. 



\subsection{Integer \texorpdfstring{$p$}{p}}
We first provide the intuition for our perfect $L_p$ sampler for integer $p>2$. 
A natural starting point would be to adapt the techniques for existing perfect $L_p$ samplers with $p\in(0,2]$. 
The only existing implementation requires duplicating each coordinate a polynomial number of times to utilize the max-stability property of exponential random variables. 
To identify the maximum coordinate, \cite{JayaramW18} only needs polylogarithmic space to find the $L_2$-heavy hitters, but to find the $L_p$-heavy hitters for $p>2$, the space required is polynomial in the universe size, which is now substantially large due to the duplication. 

Instead, we use perfect $L_2$ samplers as a black-box subroutine to extract a coordinate $i\in[n]$. 
We would like to output $i$ with probability $\frac{x_i^p}{\|x\|_p^p}+\frac{1}{\poly(n)}$ and we have sampled $i$ with probability roughly $\frac{x_i^2}{\|x\|_2^2}+\frac{1}{\poly(n)}$. 
Thus, conditioned on the $L_2$ sampler outputting $i$, we would like to output $i$ with probability $x_i^{p-2}\cdot\frac{F_2(x)}{F_p(x)}$.  
Unfortunately, this expression may not be a well-defined probability because it may be larger than $1$, for instance if $x_i=n^{1/p}$ and $F_2(x)=F_p(x)=\Theta(n)$. 
Therefore, we instead would like to only output $i$ with probability $x_i^{p-2}\cdot\frac{F_2(x)}{n^{1-2/p}F_p(x)}$. 
Although we do not have each of the terms $x_i^{p-2}$, $F_2(x)$, and $F_p(x)$, we can obtain constant-factor approximations to $F_2(x)$ and $F_p(x)$ using existing procedures~\cite{AlonMS99,Ganguly15}. 
It thus remains to estimate $x_i^{p-2}$. 

In fact, the index returned by the perfect $L_2$ sampler of \cite{JayaramW18} is the largest scaled coordinate $\frac{x_i}{\sqrt{\be_i}}$, where $\be_i$ is an independent exponential random variable for all $i\in[n]$ and in particular, a heavy-hitter of the resulting scaled vector. 
We can thus acquire an unbiased estimate $\widehat{x_i^{p-2}}$ to $x_i^{p-2}$ by running $(p-2)$ independent instances of CountSketch on the scaled coordinates, which has a small relative variance since $\frac{x_i}{\sqrt{\be_i}}$ is a heavy-hitter. 
However, this is still not enough because there is a non-trivial probability that $\widehat{x_i^{p-2}}\cdot\frac{F_2(x)}{F_p(x)}$ is still larger than $1$ if the estimate is procured through $(p-2)$ instances of CountSketch alone. 
Hence we further show that by the Khintchine inequality, a sufficiently tight approximation of $x_i^{p-2}$ can be obtained using $\polylog(n)$ instances of CountSketch. 
Finally, we show that with $n^{1-2/p}\cdot\polylog(n)$ number of perfect $L_2$ samples, one of these samples will be passed through the subsequent rejection sampling. 
Our algorithm appears in full in \algref{alg:lp:sampler}. 

\begin{algorithm}[!htb]
\caption{Perfect $L_p$ sampler for integer $p>2$}
\alglab{alg:lp:sampler}
\begin{algorithmic}[1]
\Require{Input vector $x\in\mathbb{R}^n$ in a stream}
\Ensure{Perfect $L_p$ sample}
\State{Let $C$ be the constant from \corref{cor:sample:tail} and let $N= \O{n^{1-2/p}}$}
\State{$s_1,\ldots,s_N$ be $N$ perfect $L_2$ samples from $x$} \Comment{See \thmref{thm:lp_sampler_p_small}}
\State{Use $\AMS$ to get a $2$-approximation $\widehat{F_2}$ to $F_2(x)$}
\State{Use $\FpEst$ to get a $2$-approximation $\widehat{F_p}$ to $F_p(x)$}
\For{each $i\in[N]$}
\State{Let $j$ be the index of $s_i$}
\For{$a \in [p-2]$}
\State{Run $\polylog(n)$ instances of $\CountSketch$ to acquire estimates $\widehat{x^{(q,1)}_j},\ldots,\widehat{x^{(q,\polylog(n))}_j}$}
\State{$\widehat{x_j^{(q)}} \gets \frac{1}{\polylog(n)} \sum_{l \in [\polylog(n)]} \widehat{x^{(q,l)}_j}$}
\EndFor
\State{\Return $j$ with probability $\frac{\widehat{F_2}}{8 n^{1-2/p}\cdot\widehat{F_p}}\cdot\prod_{a\in[p-2]} \left|\widehat{x^{(a)}_j}\right|$}
\State{Otherwise, continue to next $i$}
\EndFor
\end{algorithmic}
\end{algorithm}

To analyze our algorithm, we first show that the magnitude of a signed sum of coordinates can be bounded in terms of the $L_2$ of the vector with high probability. 
\begin{lemma}
\lemlab{lem:cs_bounded_est_mean}
Let $m$ be some large constant to be determined later. For all $l \in [\polylog (n)]$, let $r_1^{(l)},\ldots,r_n^{(l)}\in\{-1,+1\}$ be independent random signs. 
Then we have,
\[\left|\frac{1}{\log^m (n)} \cdot \sum_{l \in [\log^m (n)]} r_1^{(l)}x_1+\ldots+r_n^{(l)}x_n\right|\le \frac{1}{\log^{m/4}(n)} \cdot \|x\|_2.\]
with probability at least $1-\frac{1}{\poly(n)}$. 
\end{lemma}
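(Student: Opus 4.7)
The plan is to write the inner double sum as a single Rademacher sum, apply the Khintchine inequality from \thmref{thm:khintchine} with a well-chosen moment $p = \Theta(\log n)$, and then apply Markov's inequality to get the tail bound.

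Concretely, let $L = \log^m(n)$ and define
\[
S \;=\; \frac{1}{L}\sum_{l=1}^{L}\sum_{i=1}^{n} r_i^{(l)} x_i \;=\; \sum_{l=1}^{L}\sum_{i=1}^{n} r_i^{(l)}\cdot \frac{x_i}{L}.
\]
The first step is to view this as a signed sum with $nL$ independent Rademacher signs $\{r_i^{(l)}\}$ and coefficient vector $y\in\mathbb{R}^{nL}$ whose entries are $x_i/L$ (each $x_i$ appearing $L$ times). A direct calculation gives $\|y\|_2^2 = L\cdot\|x\|_2^2 / L^2 = \|x\|_2^2 / L$, so $\|y\|_2 = \|x\|_2/\sqrt{L} = \|x\|_2 / \log^{m/2}(n)$.

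Next, I would apply \thmref{thm:khintchine} to $S$ with an exponent $q \ge 2$ (using $q$ to avoid a clash with the $p$ of the $L_p$-sampler), yielding
\[
\Ex{|S|^q} \;\le\; (B_q)^q \cdot \|y\|_2^q \;=\; (B_q)^q \cdot \frac{\|x\|_2^q}{\log^{mq/2}(n)}.
\]
By \propref{prop:gamma}, $B_q = \Theta(\sqrt{q})$, so $(B_q)^q = q^{\Theta(q)}$. Setting the target threshold $t = \|x\|_2/\log^{m/4}(n)$ and applying Markov's inequality to $|S|^q$ gives
\[
\PPr{|S| \ge t} \;\le\; \frac{\Ex{|S|^q}}{t^q} \;\le\; \frac{(B_q)^q}{\log^{mq/4}(n)}.
\]

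Finally I would choose $q = \Theta(\log n)$, so $(B_q)^q \le (C\log n)^{q/2}$ for an absolute constant $C$, and hence
\[
\frac{(B_q)^q}{\log^{mq/4}(n)} \;\le\; (C\log n)^{-q(m/4 - 1/2)} \;=\; n^{-\Theta(\log\log n)\cdot (m/4 - 1/2)},
\]
which is $1/\poly(n)$ for any desired polynomial provided $m$ is a sufficiently large constant (any $m\ge 4$ suffices; larger $m$ handles any fixed polynomial exponent). The only real tuning is matching the exponents: we need $m/4 > 1/2$ so the $\log^{mq/2}$ in the denominator beats the $q^{q/2}$ coming from Khintchine, and then the choice $q=\Theta(\log n)$ converts the poly-log savings into a poly-$n$ savings. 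Note also that the proof ignores a factor of $n^{-c}$ due to the fact that there are nontrivial lower-order terms, but these are absorbed into the high-probability statement. The main (minor) obstacle is the bookkeeping between the three exponents $m$, $q$, and the polynomial in $n$; no deeper probabilistic tool beyond Khintchine and Markov is needed.
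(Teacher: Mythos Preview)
Your proposal is correct and follows essentially the same approach as the paper: both rewrite the averaged double sum as a single Rademacher sum over $nL$ signs, apply \thmref{thm:khintchine} together with \propref{prop:gamma} to bound the $q$-th moment, and then take $q=\Theta(\log n)$ in Markov's inequality to obtain the $1/\poly(n)$ tail. Your presentation is slightly cleaner in that you compute $\|y\|_2 = \|x\|_2/\log^{m/2}(n)$ upfront, but the substance is identical.
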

\begin{proof}
Let $m$ be a large enough constant, by the linearity of expectation, we have
\begin{align*}
\Ex{\left|\frac{1}{\log^m (n)} \cdot \sum_{l \in [\log^m (n)]} r_1^{(l)}x_1+\ldots+r_n^{(l)}x_n\right|^p} &= \frac{1}{\log^{mp}(n)} \cdot \Ex{\left| \sum_{l \in [\log^m (n)]} r_1^{(l)}x_1+\ldots+r_n^{(l)}x_n\right|^p}.
\end{align*}
Then, by \thmref{thm:khintchine} and \propref{prop:gamma}, we have that
\begin{align*}
\Ex{\left|\frac{1}{\log^m (n)} \cdot \sum_{l \in [\log^m (n)]} r_1^{(l)}x_1+\ldots+r_n^{(l)}x_n\right|^p}
&\le\frac{1}{\log^{mp}(n)} \cdot (B\cdot\sqrt{p})^p\cdot ( \log^m (n) \cdot \|x\|_2^2)^{p/2} \\
&\le\frac{1}{\log^{(m/2)\cdot p} n}(B\cdot\sqrt{p})^{p}\cdot\|x\|_2^p.
\end{align*}
for some absolute constant $B>0$. 
Then for any constant $c>0$, we have by Markov's inequality,
\[\PPr{\left|\frac{1}{\log^m (n)} \cdot \sum_{l \in [\log^m (n)]} r_1^{(l)}x_1+\ldots+r_n^{(l)}x_n\right|^p\ge \frac{n^c}{\log^{(m/2)\cdot p} n}\cdot(B\cdot\sqrt{p})^p\cdot\|x\|_2^p}\le\frac{1}{\poly(n)}.\]
Thus for $p=\log(n)$, we have
\[\PPr{\left|\frac{1}{\log^m (n)} \cdot \sum_{l \in [\log^m (n)]} r_1^{(l)}x_1+\ldots+r_n^{(l)}x_n\right|\ge \frac{2^c}{\log^{m/2} n}\cdot(B\cdot\sqrt{\log n})\cdot\|x\|_2}\le\frac{1}{\poly(n)}.\]
\end{proof}
As a result, it follows that $\CountSketch$ can be used to give a good additive estimate $\widehat{x_i}$ to each coordinate $x_i$. 
\begin{corollary}
\corlab{cor:sample:tail:init}
For each $i\in[n]$, the mean of $\log^m(n)$ instances of $\CountSketch$ gives an estimate $\widehat{x_i}$ such that with probability at least $1-\frac{1}{\poly(n)}$
\[\max_{i\in[n]}|\widehat{x_i}-x_i|\le \frac{1}{\log^{m/4}(n)}\cdot\|x\|_2.\]
\end{corollary}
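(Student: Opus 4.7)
My plan is to observe that each single-bucket $\CountSketch$ instance produces an estimator whose error is exactly a signed sum of the form bounded by \lemref{lem:cs_bounded_est_mean}, so the corollary should follow by combining that lemma with a union bound over $i \in [n]$.

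To set up the reduction, I would let $r_j^{(l)} \in \{-1, +1\}$ for $j \in [n]$ and $l \in [\log^m(n)]$ be the independent random signs used by the $l$-th $\CountSketch$ instance. The standard single-bucket estimator for $x_i$ from instance $l$ is $r_i^{(l)} \sum_{j \in [n]} r_j^{(l)} x_j = x_i + r_i^{(l)} \sum_{j \neq i} r_j^{(l)} x_j$, so the mean estimate $\widehat{x_i}$ has error
\[
\widehat{x_i} - x_i \;=\; \frac{1}{\log^m(n)} \sum_{l \in [\log^m(n)]} \sum_{j \neq i} r_i^{(l)} r_j^{(l)} \, x_j.
\]
I would then argue that the products $t_j^{(l)} := r_i^{(l)} r_j^{(l)}$, taken over $j \neq i$ and $l \in [\log^m(n)]$, are mutually independent uniform $\pm 1$ random variables. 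This falls out from conditioning on each $r_i^{(l)}$: for any fixed value of $r_i^{(l)} \in \{-1,+1\}$, the conditional joint law of $(r_i^{(l)} r_j^{(l)})_{j \neq i}$ is i.i.d.\ uniform on $\{-1, +1\}$ and is independent across $l$, so the same holds unconditionally.

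With this in hand, $\widehat{x_i} - x_i$ has the same distribution as $\frac{1}{\log^m(n)} \sum_l \sum_{j \in [n]} t_j^{(l)} x_j'$, where $x'$ is the vector $x$ with its $i$-th coordinate zeroed out (so $\|x'\|_2 \le \|x\|_2$). Invoking \lemref{lem:cs_bounded_est_mean} applied to $x'$ then gives $|\widehat{x_i} - x_i| \le \frac{1}{\log^{m/4}(n)} \|x\|_2$ with probability at least $1 - \frac{1}{\poly(n)}$ for each fixed $i \in [n]$. A union bound over $i \in [n]$, absorbing the factor of $n$ into the polynomial by choosing the Markov constant $c$ in the proof of the lemma sufficiently large, will yield the uniform bound on $\max_{i \in [n]} |\widehat{x_i} - x_i|$ claimed in the corollary. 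The only non-trivial point in the argument is the sign-symmetry step that upgrades $(r_i^{(l)} r_j^{(l)})$ to a genuinely independent family; everything else is direct bookkeeping on top of the preceding lemma.
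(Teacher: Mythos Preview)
Your proposal is correct and matches the paper's intended argument: the paper states the corollary as an immediate consequence of \lemref{lem:cs_bounded_est_mean} without giving an explicit proof, and your write-up supplies exactly the natural details (single-bucket $\CountSketch$ error as a signed sum, the sign-symmetry observation that $r_i^{(l)}r_j^{(l)}$ for $j\neq i$ are i.i.d.\ Rademacher, and the union bound over $i$).
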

Similarly, we can bound the error of the estimated value of the sampled coordinate. 
\begin{corollary}
\corlab{cor:sample:tail}
For the index $i$ output by a perfect $L_2$ sample, we have an estimate $\widehat{x_i}$ such that with probability at least $1-\frac{1}{\poly(n)}$, 
\[|\widehat{x_i}-x_i|\le \frac{1}{\polylog(n)}\cdot|x_i|.\]
\end{corollary}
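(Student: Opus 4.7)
The plan is to run \CountSketch\ not on $x$ itself, but on the scaled vector $z=(x_1/\sqrt{\be_1},\ldots,x_n/\sqrt{\be_n})$ used by the perfect $L_2$ sampler of \thmref{thm:lp_sampler_p_small}; then to convert the resulting additive guarantee into a relative one by exploiting the fact that the sampled coordinate is a heavy hitter of $z$; and finally to rescale by $\sqrt{\be_i}$ to recover an estimate of $x_i$ itself.

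More concretely, I would first apply \corref{cor:sample:tail:init} to the vector $z$ (the corollary's statement is really a generic bound on the mean of $\log^m(n)$ independent \CountSketch\ estimates applied to whatever vector one likes). This yields estimates $\widehat{z_j}$ satisfying $\max_{j\in[n]} |\widehat{z_j}-z_j| \le \frac{1}{\log^{m/4}(n)}\|z\|_2$ with probability $1-1/\poly(n)$. By \lemref{lem:exp_inverse} with $p=2$, the index $i$ output by the $L_2$ sampler is precisely $\arg\max_j |z_j|$, so it suffices to control $|\widehat{z_i}-z_i|/|z_i|$. Applying \lemref{lem:max:heavy} with $\alpha_j=x_j^2$ gives $z_i^2/\|z\|_2^2 \ge \frac{1}{C\log^2 n}$, i.e., $|z_i|\ge \|z\|_2/(\sqrt{C}\,\log n)$, also with probability $1-1/\poly(n)$.

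Chaining the two bounds gives
\[
\frac{|\widehat{z_i}-z_i|}{|z_i|}\ \le\ \frac{\|z\|_2/|z_i|}{\log^{m/4}(n)}\ \le\ \frac{\sqrt{C}}{\log^{m/4-1}(n)}\ \le\ \frac{1}{\polylog(n)},
\]
where the last inequality holds by taking the constant $m$ in \corref{cor:sample:tail:init} sufficiently large. Setting $\widehat{x_i}:=\widehat{z_i}\cdot\sqrt{\be_i}$, the factor $\sqrt{\be_i}$ cancels cleanly: $|\widehat{x_i}-x_i|=|\widehat{z_i}-z_i|\cdot\sqrt{\be_i}\le |z_i|\sqrt{\be_i}/\polylog(n)=|x_i|/\polylog(n)$, which is the desired conclusion. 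A union bound over the two high-probability events preserves the overall $1-1/\poly(n)$ guarantee.

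The only subtle point — and what I would flag as the main obstacle — is making sure we are entitled to invoke \corref{cor:sample:tail:init} on the random vector $z$ rather than the deterministic input $x$. Since the sign-symmetrization underlying \lemref{lem:cs_bounded_est_mean} is independent of the exponentials $\{\be_j\}$, one can condition on the realization of the $\be_j$'s and apply the lemma to the fixed vector $z$; the stated $1/\poly(n)$ failure probability then holds after taking expectation over the $\be_j$'s. Everything else is a routine chain of heavy-hitter and rescaling arguments.
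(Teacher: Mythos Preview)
Your proposal is correct and follows essentially the same approach as the paper's proof: run \CountSketch\ on the exponentially scaled vector, invoke \corref{cor:sample:tail:init} for the additive error, use \lemref{lem:max:heavy} to show the sampled coordinate is a $\Theta(1/\log^2 n)$-heavy hitter, and rescale by $\sqrt{\be_i}$. Your write-up is in fact slightly more careful than the paper's in two places---you correctly rescale by $\sqrt{\be_i}$ (the paper writes $\be_i$, evidently a typo), and you explicitly address the conditioning issue when applying \corref{cor:sample:tail:init} to the random vector $z$---but the underlying argument is identical.
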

\begin{proof}
First, we describe how we estimate the sampled entry. In the $L_p$ sampler introduced by \cite{JayaramW18}, for a vector $x$, we scale each entry by $g_i := \frac{x_i}{\be_i^{1/2}}$ where $\be_i$'s are i.i.d. exponential random variables with rate $1$, and we find the maximum entry of $g$. Notice that we have the following property of exponential variables (c.f. \lemref{lem:max:heavy}):
\[\PPr{\max g_i^2 \ge \frac{1}{C \log^2 n} \|g\|_2^2} \ge 1-\frac{1}{\poly(n)}.\]
Therefore, we implement $\CountSketch$ on the scaled vector $g$ to get an estimation of the sampled item $g_i$. By \corref{cor:sample:tail:init}, it has error at most $\frac{1}{\log^{m/4}(n)} \cdot \|g\|_2$ with probability at least $1-\frac{1}{\poly(n)}$. Hence, the error is bounded by $\frac{1}{\polylog(n)} \cdot g_i$ with probability at least $1-\frac{1}{\poly(n)}$ for sufficiently large $m$. We output $g_i\cdot \be_i$ as an estimation to $x_i$, so that the error is bounded by $\frac{1}{\polylog(n)} \cdot x_i$ with probability at least $1-\frac{1}{\poly(n)}$.
\end{proof}

We now show that our algorithm returns a random coordinate under the correct probability distribution for $L_p$ sampling. 
\begin{lemma}
\lemlab{lem:lp:int:correct}
With high probability, \algref{alg:lp:sampler} outputs an index $i\in[n]$. 
Moreover, for each $j\in[n]$, we have that
\[\PPr{i=j}=\frac{|x_j|^p}{\|x\|_p^p}\pm\frac{1}{\poly(n)}.\]
\end{lemma}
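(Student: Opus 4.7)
\begin{proofof}{\lemref{lem:lp:int:correct} (Proposal)}

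The plan is to condition on a high-probability good event, analyze a single iteration of the rejection loop exactly, and then argue that the conditional distribution given a successful acceptance is the $L_p$ distribution up to $\frac{1}{\poly(n)}$ error. First, I would define a good event $\calE$ on which: (i) $\widehat{F_2}\in[\tfrac{1}{2}F_2(x),2F_2(x)]$ and $\widehat{F_p}\in[\tfrac{1}{2}F_p(x),2F_p(x)]$ from \AMS{} and \FpEst; (ii) each of the $N$ invocations of the perfect $L_2$ sampler of \thmref{thm:lp_sampler_p_small} returns index $j$ with probability $\tfrac{x_j^2}{F_2(x)}\pm\tfrac{1}{\poly(n)}$; (iii) for every trial, each of the $p-2$ averaged \CountSketch{} estimates satisfies $\widehat{x^{(a)}_j}=(1\pm\tfrac{1}{\polylog(n)})x_j$ by \corref{cor:sample:tail}. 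By a union bound over $N=\tO{n^{1-2/p}}$ iterations and the $p-2=O(1)$ estimators per iteration, $\Pr[\calE]\ge 1-\tfrac{1}{\poly(n)}$, and all subsequent calculations are conditioned on $\calE$.

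The second step is to verify the acceptance probability lies in $[0,1]$ so that the line ``return $j$ with probability $\ldots$'' is well defined. Using conditioning on $\calE$, it suffices to bound $\tfrac{F_2(x)\,|x_j|^{p-2}}{n^{1-2/p}\,F_p(x)}$. By power-mean/Hölder inequalities, $\|x\|_2\le n^{1/2-1/p}\|x\|_p$, i.e.\ $F_2(x)\le n^{1-2/p}F_p(x)^{2/p}$, and $|x_j|^{p-2}\le F_p(x)^{(p-2)/p}$; multiplying these gives $F_2(x)|x_j|^{p-2}\le n^{1-2/p}F_p(x)$, so the ideal acceptance ratio is at most $1$. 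Incorporating the constant-factor slack $\widehat{F_2}/\widehat{F_p}\le 4 F_2(x)/F_p(x)$ and the $(1+o(1))^{p-2}$ factor from the estimators, dividing by $8$ in the algorithm keeps the acceptance probability below $1$ on $\calE$.

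Third, I would compute the per-trial success probability $p_j$ that a single iteration returns $j$: the product of the $L_2$ sampling probability $\tfrac{x_j^2}{F_2(x)}$ and the acceptance expression evaluates, on $\calE$, to
\[
p_j \;=\; \frac{|x_j|^p}{8\,n^{1-2/p}\,F_p(x)}\bigl(1\pm o(1)\bigr)\;\pm\;\frac{1}{\poly(n)},
\]
after absorbing the $(1\pm \tfrac{1}{\polylog n})$ factors from $\widehat{F_2},\widehat{F_p}$ and the $p-2$ \CountSketch{} estimates. Summing over $j\in[n]$ yields a per-trial acceptance probability $q=\sum_j p_j=\tfrac{1}{8n^{1-2/p}}(1\pm o(1))$. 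With $N$ chosen as $\Theta(n^{1-2/p}\log n)$ (hidden inside the $\tO{\cdot}$ notation of the algorithm), the probability that every one of the $N$ independent trials rejects is at most $(1-q)^N\le \tfrac{1}{\poly(n)}$, so the algorithm returns some $i\in[n]$ with high probability.

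Finally, because iterations are independent and each successful iteration selects $j$ with probability proportional to $p_j$, the conditional distribution of the returned index given acceptance is $p_j/q = \tfrac{|x_j|^p}{F_p(x)}\pm\tfrac{1}{\poly(n)}$. Combining this with the $\tfrac{1}{\poly(n)}$ probability of complete failure and the $\tfrac{1}{\poly(n)}$ probability that $\calE$ fails gives the claimed additive bound. The main technical obstacle is the acceptance-probability bound in step two: the algorithm is tuned precisely so that the Hölder inequality $F_2(x)|x_j|^{p-2}\le n^{1-2/p}F_p(x)$ keeps the acceptance ratio in $[0,1]$, and this is what dictates the $\tfrac{1}{n^{1-2/p}}$ factor and hence the $\tO{n^{1-2/p}}$ sample count. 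A secondary bookkeeping obstacle is propagating the $\tfrac{1}{\polylog(n)}$ multiplicative errors from the CountSketch/norm estimators through the product of $p-2$ factors and verifying that the resulting error in $p_j/q$ is dominated by the $\tfrac{1}{\poly(n)}$ slack guaranteed by the high-probability events.

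\end{proofof}
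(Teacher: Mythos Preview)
There is a genuine gap. Your analysis uses only the high-probability multiplicative bound $\widehat{x^{(a)}_j}=(1\pm\tfrac{1}{\polylog n})x_j$ from \corref{cor:sample:tail}, so your per-trial probability $p_j$ carries a $(1\pm\tfrac{1}{\polylog n})^{p-2}$ factor that \emph{depends on $j$}. That factor does not cancel when you form $p_j/q$, and what results is an approximate sampler with relative error $\tfrac{1}{\polylog n}$, not the additive $\tfrac{1}{\poly(n)}$ required for a perfect sampler. Your final paragraph flags this as a ``secondary bookkeeping obstacle,'' but it cannot be resolved by bookkeeping: a $\tfrac{1}{\polylog n}$ multiplicative distortion of a probability that may be $\Theta(1)$ is never dominated by any $\tfrac{1}{\poly(n)}$ slack. (Separately, $\widehat{F_2}$ and $\widehat{F_p}$ are only $2$-approximations in the algorithm, not $(1\pm\tfrac{1}{\polylog n})$-approximations as you write in step three.)

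The paper closes this gap by using \corref{cor:sample:tail} for only two things: verifying the acceptance probability is at most $1$, and guaranteeing each $\widehat{x^{(a)}_j}$ has the same sign as $x_j$. The computation of $p_j$ itself instead exploits that the averaged \CountSketch{} estimates are \emph{unbiased}, $\Ex{\widehat{x^{(a)}_j}}=x_j$; together with sign consistency this gives $\Ex{\,|\widehat{x^{(a)}_j}|\,}=|x_j|$, and by independence of the $p-2$ fresh instances, $\Ex{\prod_{a}|\widehat{x^{(a)}_j}|}=|x_j|^{p-2}$ exactly, up to the $\tfrac{1}{\poly(n)}$ cost of conditioning on $\calE_1$. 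The constant-factor errors in $\widehat{F_2},\widehat{F_p}$ then do no damage because these quantities are computed \emph{once} and shared across all trials, so the common factor $\widehat{F_2}/\widehat{F_p}$ cancels exactly when one normalizes $p_j$ by $\sum_\ell p_\ell$. This unbiasedness-plus-cancellation mechanism is the missing ingredient in your proposal.
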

\begin{proof}
For each $i\in[N]$, we sample $j\in[n]$ with probability $\frac{|x_j|^2}{\|x\|_2^2}\pm\frac{1}{\poly(n)}$ by the correctness of perfect $F_2$ sampling. 
For each $a\in[p-2]$, $\widehat{x_j^{(a)}}$ is an unbiased estimate of $x_j$ such that with high probability, $|x_j-\widehat{x_j^{(a)}}|\le\frac{1}{\polylog(n)}\cdot|x_j|$. 
Then the index $j$ is returned with probability
\[\left(\frac{|x_j|^2}{\|x\|_2^2}\pm\frac{1}{\poly(n)}\right)\cdot\frac{\widehat{F_2}}{N\cdot\widehat{F_p}}\cdot\prod_{a\in[p-2]}\widehat{x^{(a)}_j}.\]
Let $\calE_1$ be the event that $|x_j-\widehat{x_j^{(a)}}|\le\frac{1}{\polylog(n)}\cdot|x_j|$ for all $a\in[p-2]$ so that by \corref{cor:sample:tail} and a union bound, $\PPr{\calE_1}\ge1-\frac{1}{\poly(n)}$. Conditioned on $\calE_1$, we have that
\[\prod_{a\in[p-2]}\widehat{x^{(a)}_j} = \prod_{a\in[p-2]}x_j \cdot \left(1 \pm \frac{1}{\polylog(n)}\right).\] Let $\calE_2$ be the event that $\AMS$ and $\FpEst$ return $2$-approximations of $\|x\|_2^2$ and $\|x\|_p^p$ respectively. 
Conditioned on $\calE_1$ and $\calE_2$, then we have that 
\begin{align*}
\frac{\widehat{F_2}}{8 n^{1-2/p}\cdot\widehat{F_p}}\cdot\prod_{a\in[p-2]}|\widehat{x^{(a)}_j}|&\le\frac{|x_j|^{p-2}\cdot\|x\|_2^2}{n^{1-2/p}\cdot\|x\|_p^p}\\
&\le\frac{|x_j|^{p-2}\cdot\|x\|_2^2}{n^{1-2/p}\cdot\left(|x_1|^p+\ldots+|x_n|^p\right)^{1-2/p}\cdot\|x\|_p^2}\\
&\le\frac{|x_j|^{p-2}\cdot\|x\|_2^2}{n^{1-2/p}\cdot|x_j|^{p-2}\cdot\|x\|_p^2}\le 1.
\end{align*}
Therefore, our rejection probability is smaller than $1$, so it is well-defined. Moreover, we have $\Ex{\widehat{x^{(a)}_j}}=x_j$ for all $a\in[p-2]$. Notice that $x^{(a)}_j$ has the same sign as $x_j$ conditioned on $\calE_1$, we have $\Ex{|\widehat{x^{(a)}_j}|}=|x_j|$.
Thus in expectation, the probability that $j$ is returned is
\[\frac{|x_j|^{p}}{8 n^{1-2/p}\cdot\|x\|_p^p}\pm\frac{1}{\poly(n)}.\]
Hence, conditioned on some index being output, the probability that $j$ is returned is
\begin{align*}
\left(\frac{|x_j|^{p}}{8 n^{1-2/p}\cdot\|x\|_p^p}\pm\frac{1}{\poly(n)}\right)&\cdot\left(\sum_{\ell\in[n]}\frac{|x_\ell|^{p}}{8 n^{1-2/p}\cdot\|x\|_p^p}\pm\frac{1}{\poly(n)}\right)^{-1}\\
&=\frac{|x_j|^{p}}{\|x\|_p^p}\pm\frac{1}{\poly(n)},
\end{align*}
which is the correct probability distribution. 
Finally, the probability that some index being output for each sample $i\in[N]$ is
\[\sum_{j\in[n]}\frac{|x_j|^{p}}{8 n^{1-2/p}\cdot\|x\|_p^p}\pm\frac{1}{\poly(n)}\ge\frac{1}{16 n^{1-2/p}}.\]
Thus by repeating $N=\O{ n^{1-2/p}}$ times, we have that a sample is returned with probability at least $1-\frac{1}{\poly(n)}$. 
\end{proof}

Next, we analyze the space complexity of our algorithm. 
\begin{lemma}
\lemlab{lem:lp:int:space}
\algref{alg:lp:sampler} uses $n^{1-2/p}\cdot \polylog(n)$ bits of space.
\end{lemma}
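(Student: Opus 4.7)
The plan is to bound the total space by tallying the contributions from each of the four main subroutines invoked by \algref{alg:lp:sampler} and observing that each is at most $n^{1-2/p}\cdot\polylog(n)$ bits.

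The dominant contributors are the $N=\O{n^{1-2/p}}$ perfect $L_2$ samplers of line 2 and the $\FpEst$ of line 4. By \thmref{thm:lp_sampler_p_small}, each perfect $L_2$ sampler (with $p=2$, constant failure and estimation parameters) uses $\polylog(n)$ bits of space, so running the $N$ independent instances in parallel on the stream consumes $n^{1-2/p}\cdot\polylog(n)$ bits in total. Standard $F_p$ estimation for $p>2$ uses $n^{1-2/p}\cdot\polylog(n)$ bits, as referenced in the algorithm setup. The $\AMS$ estimator for a constant-factor approximation to $F_2(x)$ adds only $\polylog(n)$ bits and is therefore negligible.

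The remaining space is spent on the $\CountSketch$ instances inside the loop. For each of the $N$ samples and each $a\in[p-2]$, we maintain $\polylog(n)$ copies of $\CountSketch$, and each copy uses $\polylog(n)$ bits of space. These $\CountSketch$ data structures operate on the scaled vector $g^{(i)}$ with $g_j^{(i)}=x_j/\be_{i,j}^{1/2}$ associated with the $i$\th perfect $L_2$ sampler (cf.\ the proof of \corref{cor:sample:tail}); since the exponential scalings across different $L_2$ samplers are independent, the $\CountSketch$ instances cannot be shared across samples. They can nonetheless be maintained directly on the turnstile stream by rescaling each stream update $(j,\Delta)$ to $(j,\Delta/\be_{i,j}^{1/2})$ before feeding it to the $i$\th group of sketches. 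Because $p$ is a fixed constant, the total $\CountSketch$ footprint across all $N$ samples is $N\cdot (p-2)\cdot\polylog(n)\cdot\polylog(n)=n^{1-2/p}\cdot\polylog(n)$ bits.

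Summing the four contributions yields the claimed $n^{1-2/p}\cdot\polylog(n)$ bit bound. The argument is essentially routine space accounting; the only mild subtlety is justifying that, although the per-sample $\CountSketch$ instances cannot be shared across samples, the constancy of $p$ keeps their total footprint within the target bound.
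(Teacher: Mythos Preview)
Your proof is correct and follows essentially the same space-accounting approach as the paper. One small discrepancy: you assume constant failure probability for each $L_2$ sampler, whereas the paper sets $\delta_1=\frac{1}{\poly(n)}$ so that a union bound over all $N$ samplers goes through in the correctness analysis; this yields $\O{\log^4 n}$ rather than $\O{\log^3 n}$ bits per sampler, but either way the per-sampler cost is $\polylog(n)$ and the overall bound is unchanged.
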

\begin{proof}
For each $L_2$ sampler, we need its failure probability $\delta_1 $ to be $\frac{1}{\poly(n)}$, so that by a union bound, all $L_2$ samplers succeed with probability $1 - \frac{1}{\poly(n)}$. Then, by \thmref{thm:lp_sampler_p_small}, we uses $N \cdot \O{\log^4 n} = \O{n^{1-2/p} \cdot \log^4 n}$ to acquire the $L_2$ samples. Moreover, we implement $\polylog(n)$ instances of $\CountSketch$ to estimate the value of each sample, which uses $N \cdot \polylog(n)$ bits of space. Notice that the space consumption of AMS, \textsc{FpEst} and the $p-2$ instances of \textsc{CountSketch} is dominated by the $L_2$ samplers. Thus, we have the stated space complexity.
\end{proof}
Putting together \lemref{lem:lp:int:correct} and \lemref{lem:lp:int:space}, we obtain the full guarantees for our perfect $L_p$ sampler for integer $p>2$. 
\begin{theorem}
\thmlab{thm:lp_sampler_int}
Given an integer $p>2$, there exists a perfect $L_p$ sampler that uses $\tO{n^{1-2/p}}$ bits of space. 
\end{theorem}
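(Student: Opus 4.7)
The plan is to simply combine the two preceding lemmas. The algorithm in question is \algref{alg:lp:sampler}, and \lemref{lem:lp:int:correct} already establishes that it outputs an index $j \in [n]$ with probability $\frac{|x_j|^p}{\|x\|_p^p} \pm \frac{1}{\poly(n)}$, which is precisely the definition of a perfect $L_p$ sampler. \lemref{lem:lp:int:space} already establishes the space complexity bound of $\tO{n^{1-2/p}}$ bits. So the proof of \thmref{thm:lp_sampler_int} is essentially a one-line appeal to these two lemmas.

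In slightly more detail, I would first instantiate \algref{alg:lp:sampler} with $N = \O{n^{1-2/p}}$ perfect $L_2$ samples obtained via \thmref{thm:lp_sampler_p_small}, each with failure probability $\frac{1}{\poly(n)}$, together with the constant-factor estimates $\widehat{F_2}$ and $\widehat{F_p}$ from $\AMS$ and $\FpEst$, and $\polylog(n)$ instances of $\CountSketch$ per sample for estimating the $(p-2)$-fold product $\prod_{a \in [p-2]} \widehat{x_j^{(a)}}$. Applying \lemref{lem:lp:int:correct} yields the correct output distribution, and applying \lemref{lem:lp:int:space} yields the claimed space bound; a union bound over the $L_2$-sampler, the norm estimators, and the $\CountSketch$ guarantees (each holding with probability $1-\frac{1}{\poly(n)}$) shows that the overall failure probability remains $\frac{1}{\poly(n)}$.

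Since all the nontrivial work has already been pushed into the two lemmas, there is no real obstacle left at this stage — the only thing to double-check is that the constants in the $\O{\cdot}$ and $\tO{\cdot}$ expressions from \lemref{lem:lp:int:space} (which aggregates the $\O{\log^4 n}$ cost per $L_2$ sampler, the $\polylog(n)$ $\CountSketch$ copies per sample, and the sub-dominant contributions from $\AMS$ and $\FpEst$) indeed collapse into $n^{1-2/p} \cdot \polylog(n) = \tO{n^{1-2/p}}$, which they do by construction. Thus the theorem statement follows immediately.
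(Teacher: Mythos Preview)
Your proposal is correct and matches the paper's approach exactly: the paper simply states that putting together \lemref{lem:lp:int:correct} and \lemref{lem:lp:int:space} yields the theorem, with no further argument given. Your slightly expanded discussion of the union bound and space accounting is entirely consistent with this.
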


\subsection{Fractional \texorpdfstring{$p$}{p}}
Next, we generalize our results from integer $p>2$ to general $p>2$. 
The main challenge in the previous techniques is that to estimate $x_i^{p-2}$ for integer $p>2$, an intuitive approach would be to acquire $p-2$ independent estimates for $x_i$. 
To estimate $x_i^{p-2}$ for fractional $p>2$, we utilize the Taylor series expansion of $x_i^{p-2}$. 
Our full algorithm appears in \algref{alg:lp:sampler_frac}. 
\begin{algorithm}[!htb]
\caption{Perfect $L_p$ sampler for general $p>2$}
\alglab{alg:lp:sampler_frac}
\begin{algorithmic}[1]
\Require{Input vector $x\in\mathbb{R}^n$ in a stream}
\Ensure{Perfect $L_p$ sample}
\State{Let $Q=\O{\log n}$ with large enough constant term}
\State{Let $N'= n^{1-2/p} \cdot \polylog(n)$. Let $N = N' \cdot \O{\log n}$}
\State{$s_1,\ldots,s_N$ be $N$ perfect $L_2$ samples from $x$} \Comment{See \thmref{thm:lp_sampler_p_small}}
\For{each $i \in [N]$}
\State{Let $y_{s_i}$ be the constant approximation to $x_{s_i}$ satisfying $y_{s_i} \in [\frac{99x_{s_i}}{100 e^p \log n},\frac{101x_{s_i}}{100 e^p \log n}]$}
\EndFor
\State{Use $\AMS$ to get a $2$-approximation $\widehat{F_2}$ to $F_2(x)$}
\State{Use $\FpEst$ to get a $2$-approximation $\widehat{F_p}$ to $F_p(x)$}
\For{each $i\in[N]$}
\State{Let $j$ be the index of $s_i$}
\For{$q \in [Q]$}
\State{Run $\polylog(n)$ instances of $\CountSketch$ to acquire estimates $\widehat{x^{(q,1)}_j},\ldots,\widehat{x^{(q,\polylog(n))}_j}$}
\State{$\widehat{x_j^{(q)}} \gets \frac{1}{\polylog(n)} \sum_{l \in [\polylog(n)]} \widehat{x^{(q,l)}_j}$}
\EndFor
\State{$\widehat{x_j^{p-2}} \gets \sum_{q=0}^Q \left(\binom{p-2}{q}y_j^{p-2-q} \cdot \prod_{a \in [q]} (\widehat{x_j^{(a)}}-y_j) \right)$} \Comment{See \lemref{lem:truncated_taylor_sampler}}
\State{\Return $j$ with probability $\frac{\widehat{F_2}}{4N'\cdot\widehat{F_p}}\cdot \left|\widehat{x_j^{p-2}}\right|$}
\State{Otherwise, continue to next $i$}
\EndFor
\end{algorithmic}
\end{algorithm}

In the analysis, we first show that a Taylor series expansion truncated at $Q=\O{\log n}$ terms is a good approximation to $x^p$ for any fixed constant $p>2$. 
\begin{lemma} \lemlab{lem:truncated_taylor_sampler}
Let $Q=\O{\log n}$ and $y_i \in\left[\frac{99x_i}{100},\frac{101x_i}{100}\right]$. 
Then 
\[\left\lvert x_i^p-\sum_{q=0}^Q \binom{p}{q}y_i^{p-q}(x_i-y_i)^q\right\rvert\le\frac{1}{\poly(n)}\cdot x_i^p.\]
\end{lemma}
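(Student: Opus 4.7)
The statement is a quantitative truncation bound for the generalized binomial (Taylor) expansion of $(1+r)^p$ around $r=0$, specialized to the regime where $y_i$ is a $(1\pm 1/100)$-relative approximation of $x_i$. Setting $r := (x_i - y_i)/y_i$, we have $|r|\le (0.01\,x_i)/(0.99\,x_i) \le 1/99$, which lies strictly inside the unit disk, so the generalized binomial series converges absolutely and
\[
x_i^p \;=\; y_i^p(1+r)^p \;=\; \sum_{q=0}^{\infty}\binom{p}{q}\, y_i^{p-q}(x_i-y_i)^q .
\]
The truncation error we must bound therefore equals $y_i^p\cdot\Bigl|\sum_{q=Q+1}^{\infty}\binom{p}{q}r^q\Bigr|$, so the task reduces to showing the tail of this series is $n^{-\Omega(1)}$ whenever $Q=\Theta(\log n)$ with a large enough constant.

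The main step is to prove the tail is geometric. Using $\binom{p}{q+1}\big/\binom{p}{q} = (p-q)/(q+1)$, a direct check shows that for every $q\ge \lceil p\rceil$ we have $|(p-q)/(q+1)| = (q-p)/(q+1) < 1$, so $|\binom{p}{q}|$ is non-increasing once $q\ge q_0 := \lceil p\rceil$; in particular $|\binom{p}{q}|\le C_p := |\binom{p}{\lceil p\rceil}|$, a constant depending only on $p$. Since $p$ is fixed and $Q=\Omega(\log n)$, we may assume $Q>q_0$, whence
\[
\Bigl|\sum_{q=Q+1}^{\infty}\binom{p}{q}r^q\Bigr| \;\le\; C_p\sum_{q=Q+1}^{\infty}|r|^q \;\le\; \frac{C_p}{1-|r|}\cdot|r|^{Q+1} \;\le\; 2C_p\cdot (1/99)^{Q+1}.
\]
Choosing the hidden constant in $Q=\Theta(\log n)$ so that $(1/99)^{Q+1}\le n^{-c}$ for any desired constant $c$ produces a tail bounded by $n^{-c}$.

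Finally, I would convert the $y_i^p$ prefactor back to $x_i^p$ using $y_i\le (101/100)\,x_i$, which gives $y_i^p \le (1.01)^p\, x_i^p = O(x_i^p)$ since $p$ is a fixed constant. Combining these estimates yields the desired bound
\[
\Bigl|x_i^p-\sum_{q=0}^{Q}\binom{p}{q}y_i^{p-q}(x_i-y_i)^q\Bigr| \;\le\; \frac{1}{\poly(n)}\cdot x_i^p .
\]
There is no real obstacle here: the only nontrivial input is the monotonicity of $|\binom{p}{q}|$ beyond $q=\lceil p\rceil$, which is an elementary ratio computation, and after that the argument is just geometric-tail bookkeeping together with absolute convergence of the binomial series inside the unit disk. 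The choice of the $1/100$ slack in the approximation $y_i\approx x_i$ is exactly what guarantees $|r|$ is bounded away from $1$ by a constant, which is what drives the geometric decay.
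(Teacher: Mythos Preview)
Your proposal is correct and follows essentially the same approach as the paper: write $x_i^p$ as the full generalized binomial series about $y_i$, then bound the tail beyond $Q=\Theta(\log n)$ geometrically using $|x_i-y_i|/y_i\le 1/99$ together with a uniform constant bound on $|\binom{p}{q}|$. The only cosmetic difference is that the paper invokes the crude inequality $|\binom{p}{q}|\le (2e)^{p/2}$ directly, whereas you justify the constant bound via the ratio $\binom{p}{q+1}/\binom{p}{q}=(p-q)/(q+1)$ and monotonicity past $q=\lceil p\rceil$; your version is arguably cleaner, but the underlying argument is the same.
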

\begin{proof}
Using the Taylor expansion we have
\[ x_i^p = \sum_{q=0}^\infty \binom{p}{q}y_i^{p-q}(x_i-y_i)^q.\]
Note that $|x_i - y_i| \le \frac{x_i}{100}$, so for $q > p$ we have
\[\left| \binom{p}{q}y_i^{p-q}(x_i-y_i)^q \right| 
\le \O{(2e)^{p/2} \cdot x_i^{p-q} \cdot \left(\frac{x_i}{100}\right)^q}=\O{x_i^p \cdot \left(\frac{1}{100}\right)^q},\]
where the first step is by $\binom{p}{q} \le (2e)^{p/2}$.
Hence, we have that for $Q = \O{\log n}$,
\[\sum_{q=Q+1}^\infty \binom{p}{q}y_i^{p-q}(x_i-y_i)^q \le \O{1} \cdot x_i^p \cdot \sum_{q=Q+1}^\infty \left(\frac{1}{100}\right)^q \le \frac{1}{\poly(n)} \cdot x_i^p.\]
\end{proof}
We show that our algorithm produces a sample $i$ according to the correct probability distribution. 
\begin{lemma}
\lemlab{lem:frac:correct}
With high probability, \algref{alg:lp:sampler_frac} outputs an index $i\in[n]$. 
Moreover, for each $j\in[n]$, we have that
\[\PPr{i=j}=\frac{|x_j|^p}{\|x\|_p^p}\pm\frac{1}{\poly(n)}.\]
\end{lemma}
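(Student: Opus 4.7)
The plan is to mirror the structure of the integer-case proof (\lemref{lem:lp:int:correct}), replacing the direct product $\prod_{a \in [p-2]} \widehat{x_j^{(a)}}$ by the truncated-Taylor estimator $\widehat{x_j^{p-2}} = \sum_{q=0}^Q \binom{p-2}{q} y_j^{p-2-q} \prod_{a \in [q]} (\widehat{x_j^{(a)}} - y_j)$ supplied by the algorithm. The two new ingredients needed are (i) the truncation bound from \lemref{lem:truncated_taylor_sampler} applied to the exponent $p-2$ with the constant-factor approximation $y_j$, and (ii) the observation that, because each $\widehat{x_j^{(a)}}$ is formed from an \emph{independent} batch of $\polylog(n)$ $\CountSketch$ instances, the products $\prod_{a \in [q]}(\widehat{x_j^{(a)}} - y_j)$ are unbiased estimates of $(x_j - y_j)^q$ by independence and by \corref{cor:sample:tail}.

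First I would condition on the intersection $\calE$ of the following events, each holding with probability $1-\frac{1}{\poly(n)}$: (a) every one of the $N$ perfect $L_2$ samplers succeeds and, given that $s_i$ is returned, the sample equals $j$ with probability $\frac{x_j^2}{\|x\|_2^2}\pm\frac{1}{\poly(n)}$; (b) $\widehat{F_2}$ and $\widehat{F_p}$ are $2$-approximations of $F_2(x)$ and $F_p(x)$; (c) for every sample index $j$ and every $a \in [Q]$, the mean estimate satisfies $|\widehat{x_j^{(a)}} - x_j| \le \frac{1}{\polylog(n)}|x_j|$ by \corref{cor:sample:tail}; (d) the deterministic approximation $y_j$ lies in the interval specified in the algorithm, so in particular $|x_j - y_j| \le \tfrac{1}{100}|x_j|$ (up to the $e^p\log n$ scaling, which only tightens the bound). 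Conditional on $\calE$, \lemref{lem:truncated_taylor_sampler} gives $|\widehat{x_j^{p-2}}| = |x_j|^{p-2}\cdot\bigl(1 \pm \frac{1}{\polylog(n)}\bigr)$ with matching sign, since each factor $(\widehat{x_j^{(a)}} - y_j)$ is a $(1\pm\frac{1}{\polylog(n)})$-approximation of $(x_j - y_j)$.

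Next I would verify that the acceptance probability is well-defined. Using $\widehat{F_2}\le 2\|x\|_2^2$, $\widehat{F_p}\ge \tfrac12\|x\|_p^p$, the $\calE$-conditional approximation $|\widehat{x_j^{p-2}}| \le 2|x_j|^{p-2}$, and the elementary inequalities $|x_j|^{p-2}\le \|x\|_p^{p-2}$ and $\|x\|_2^2 \le n^{1-2/p}\|x\|_p^2$, the rejection probability satisfies
\[
\frac{\widehat{F_2}}{4N'\cdot\widehat{F_p}}\cdot\bigl|\widehat{x_j^{p-2}}\bigr|
\;\le\; \frac{\|x\|_2^2 \cdot |x_j|^{p-2}}{N'\cdot \|x\|_p^p/4}
\;\le\; \frac{4 n^{1-2/p}\|x\|_p^{p}}{N'\cdot \|x\|_p^p} \;\le\; 1
\]
provided $N' \ge 4 n^{1-2/p}$, which is consistent with the algorithm's choice. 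Now I compute the expected acceptance probability. Because the $Q$ batches of $\CountSketch$ sketches use fresh randomness, $\widehat{x_j^{(1)}},\ldots,\widehat{x_j^{(Q)}}$ are mutually independent and each unbiased for $x_j$, so $\Ex{\prod_{a \in [q]}(\widehat{x_j^{(a)}} - y_j)} = (x_j - y_j)^q$. Hence, conditioned on $\calE$ (which fixes signs and only perturbs probabilities by $\frac{1}{\poly(n)}$), the expected value of $\widehat{x_j^{p-2}}$ equals the truncated Taylor sum, which by \lemref{lem:truncated_taylor_sampler} equals $x_j^{p-2}\bigl(1 \pm \frac{1}{\poly(n)}\bigr)$. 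Multiplying by the $L_2$ sampling probability $\frac{x_j^2}{\|x\|_2^2}$ and the rejection weight $\frac{\widehat{F_2}}{4N'\widehat{F_p}}$ yields per-iteration probability $\frac{|x_j|^p}{4N'\|x\|_p^p}\pm\frac{1}{\poly(n)}$ of returning $j$, exactly as in the proof of \lemref{lem:lp:int:correct}.

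Finally, summing over $j \in [n]$, each iteration outputs some index with probability at least $\frac{1}{8N'}$, so choosing $N = N'\cdot\O{\log n}$ guarantees that some iteration returns an index with probability $1 - \frac{1}{\poly(n)}$. Conditioning on termination, Bayes' rule converts the joint per-iteration probabilities into the correct $L_p$-sampling distribution $\frac{|x_j|^p}{\|x\|_p^p}\pm\frac{1}{\poly(n)}$. The step I expect to be most delicate is justifying that conditioning on the good event $\calE$ does not corrupt the unbiasedness of the independent-product estimator, i.e., that $\Ex{\widehat{x_j^{p-2}} \mid \calE}$ is still within a $\bigl(1\pm\frac{1}{\poly(n)}\bigr)$ factor of its unconditional expectation; this follows from $\Pr[\calE^c]\le\frac{1}{\poly(n)}$ together with the crude deterministic upper bound $|\widehat{x_j^{p-2}}|\le\poly(n)\cdot\|x\|_\infty^{p-2}$ obtained from the sketch precision, so the contribution of $\calE^c$ to the expectation is absorbed into the $\frac{1}{\poly(n)}$ additive term.
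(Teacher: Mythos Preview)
Your proposal is correct and follows essentially the same route as the paper's proof: condition on the good events for the $L_2$ samplers, the $\CountSketch$ estimates, and the moment estimators; use \lemref{lem:truncated_taylor_sampler} together with \corref{cor:sample:tail} to bound $\widehat{x_j^{p-2}}$ multiplicatively and verify the acceptance probability is at most $1$; then invoke independence of the $Q$ $\CountSketch$ batches to get $\Ex{\widehat{x_j^{p-2}}}=x_j^{p-2}(1\pm\frac{1}{\poly(n)})$ and finish exactly as in \lemref{lem:lp:int:correct}. Your final paragraph on why conditioning on $\calE$ does not spoil unbiasedness is a point the paper leaves implicit, so if anything your sketch is slightly more careful there.
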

\begin{proof}
By a similar argument as \thmref{thm:lp_sampler_int}, the index $j$ is returned with probability
\[\left(\frac{|x_j|^2}{\|x\|_2^2}\pm\frac{1}{\poly(n)}\right)\cdot\frac{\widehat{F_2}}{4N'\cdot\widehat{F_p}}\cdot\widehat{x_j^{p-2}}.\]
Let $\calE_1$ be the event that $|x_j-\widehat{x_j^{(a)}}|\le\frac{1}{\polylog(n)}\cdot|x_j|$ for all $a\in[N]$ so that by \corref{cor:sample:tail} and a union bound, $\PPr{\calE_1}\ge1-\frac{1}{\poly(n)}$. Conditioned on $\calE_1$, we have that for all $q \in [Q]$
\[\left|\left(\widehat{x^{(q)}_j} - y_j\right) - (x_j - y_j)\right|\le\frac{1}{\polylog(n)}\cdot|x_j|.\]
For $Q = \O{\log n}$ and $|x_j - y_j| \le \frac{x_j}{100p \log(n)}$, we have
\[\prod_{a \in [q]} (\widehat{x_i^{(a)}}-y_j) = \left((x_j-y_j)\pm \frac{x_j}{\polylog(n)}\right)^q = (x_j-y_j)^q  \pm \frac{y_j^q}{100p^q\log^q(n)}.\]
Recall our truncated Taylor estimator for $x^{p-2}$ is
\begin{align*}
\widehat{x_j^{p-2}} := \sum_{q=0}^Q \left(\binom{p-2}{q}y_j^{p-2-q} \cdot \prod_{a \in [q]} (\widehat{x_j^{(a)}}-y_j) \right).
\end{align*}
We define $\widetilde{x_j^{p-2}}$ to be the truncated Taylor series evaluated by the real value,
\[\widetilde{x^{p-2}} := \sum_{q=0}^Q \binom{p-2}{q}y_j^{p-2-q} \cdot (x_j-y_j)^q. \]
Then, we have
\begin{align*}
\widehat{x_j^{p-2}} &= \sum_{q=0}^Q \binom{p-2}{q}y_j^{p-2-q} \cdot \left((x_j-y_j)^q  \pm \frac{y_j^q}{100p^q\log^q(n)} \right) \\
&= \widetilde{x^{p-2}} \pm y_j^{p-2} \cdot \sum_{q=0}^Q \left|\binom{p-2}{q}\right| \cdot \frac{1}{100p^q\log^q(n)}.
\end{align*}
Notice that $|\binom{p-2}{q}| \le p^q$ and $Q = \O{\log n}$, then we have
\[\widehat{x_j^{p-2}} = \widetilde{x_j^{p-2}} \cdot \left(1 \pm \frac{1}{100\log(n)}\right).\]
Then, we have
\[|\widehat{x_j^{p-2}} - x_j^{p-2}| \le |\widehat{x_j^{p-2}} - \widetilde{x_j^{p-2}}| + |\widetilde{x_j^{p-2}} - x_j^{p-2}| \le x_j^{p-2} \cdot \frac{1}{10 \log(n)}.\]
where the second step follows by \lemref{lem:truncated_taylor_sampler}. 
Now, let $\calE_2$ be the event that $\AMS$ and $\FpEst$ return $2$-approximations of $\|x\|_2^2$ and $\|x\|_p^p$ respectively. 
Conditioned on $\calE_1$ and $\calE_2$, for our choice of $N' = n^{1-2/p} \cdot \polylog(n) $ we have that 
\[\frac{\widehat{F_2}}{4N'\cdot\widehat{F_p}}\cdot\widehat{x_j^{p-2}}\le\frac{x_j^{p-2}\cdot\|x\|_2^2}{n^{1-2/p}\cdot\|x\|_p^p}\le 1.\]
Therefore, our rejection probability is smaller than $1$, so it is well-defined. Moreover, by \lemref{lem:truncated_taylor_sampler} we have $\Ex{\widehat{x^{p-2}_j}}=\widetilde{x^{p-2}} = x^{p-2}_j \cdot \left(1\pm \frac{1}{\poly(n)}\right)$. Notice that $\widehat{x^{p-2}_j}$ has the same sign as $x^{p-2}_j$ conditioned on $\calE_1$, so $\Ex{\left|\widehat{x^{p-2}_j}\right|}= | x^{p-2}_j| \cdot \left(1\pm \frac{1}{\poly(n)}\right)$.
Thus, in expectation, the probability that $j$ is returned is
\[\frac{|x_j|^{p}}{4N'\cdot\|x\|_p^p}\pm\frac{1}{\poly(n)}.\]
Hence, conditioned on some index being output, the probability that $j$ is returned is
\begin{align*}
\left(\frac{|x_j|^{p}}{4N'\cdot\|x\|_p^p}\pm\frac{1}{\poly(n)}\right)&\cdot\left(\sum_{\ell\in[n]}\frac{|x_\ell|^{p}}{4N'\cdot\|x\|_p^p}\pm\frac{1}{\poly(n)}\right)^{-1}\\
&=\frac{x_j^{p}}{\|x\|_p^p}\pm\frac{1}{\poly(n)},
\end{align*}
which is the correct probability distribution. 
Finally, the probability that some index being output for each sample $i\in[N]$ is
\[\sum_{j\in[n]}\frac{|x_j|^{p}}{4N'\cdot\|x\|_p^p}\pm\frac{1}{\poly(n)}\ge\frac{1}{8N'}.\]
Thus by repeating $N = N' \cdot \O{\log n}$ times, we have that a sample is returned with probability at least $1-\frac{1}{\poly(n)}$. 
\end{proof}
Finally, we analyze the space complexity of our algorithm. 
\begin{lemma}
\lemlab{lem:frac:space}
\algref{alg:lp:sampler_frac} uses $n^{1-2/p}\cdot \polylog(n)$ bits of space.
\end{lemma}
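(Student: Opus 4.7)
\begin{proofhack}
The plan is to mirror the space analysis of \lemref{lem:lp:int:space}, with the only new wrinkle being the extra $Q = \O{\log n}$ terms introduced by the truncated Taylor expansion. I would separate the total space into four pieces: the $N$ perfect $L_2$ samplers used to propose candidates; the two single sketches $\AMS$ and $\FpEst$ used to approximate $F_2(x)$ and $F_p(x)$; the $\CountSketch$ tables used to build the estimators $\widehat{x_j^{(q)}}$; and the constant-factor approximations $y_{s_i}$ for the sampled coordinates.

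First I would handle the $L_2$ samplers. Since I want all $N = N' \cdot \O{\log n} = n^{1-2/p} \cdot \polylog(n)$ of them to succeed simultaneously, I set each per-sampler failure probability to $\frac{1}{\poly(n)}$ and union bound. By \thmref{thm:lp_sampler_p_small}, each such $L_2$ sampler uses $\O{\log^4 n}$ bits of space, giving total cost $N \cdot \O{\log^4 n} = n^{1-2/p} \cdot \polylog(n)$, which already matches the target bound.

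Next I would bound the $\CountSketch$ cost for evaluating the Taylor polynomial. For each of the $N$ candidate indices we maintain $Q = \O{\log n}$ estimators $\widehat{x_j^{(q)}}$, each the average of $\polylog(n)$ independent $\CountSketch$ instances, and each instance uses $\polylog(n)$ bits. Multiplying gives $N \cdot Q \cdot \polylog(n) = n^{1-2/p} \cdot \polylog(n)$. The constant approximations $y_{s_i}$ can be read off from a $\CountSketch$-style estimator as in \corref{cor:sample:tail}, so they incur no additional asymptotic cost. Finally, $\AMS$ and $\FpEst$ (e.g.\ via \cite{AlonMS99,Ganguly15}) each require at most $\polylog(n)$ bits for the $2$-approximation needed here, and so are dominated by the terms above.

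The main obstacle, if any, is just careful bookkeeping: verifying that the polylogarithmic factors from the sampler failure probability, from the truncation depth $Q$, from the $\polylog(n)$ repetitions of $\CountSketch$, and from each individual sketch all collapse into a single $\polylog(n)$ term in the final bound. Summing the contributions then yields $n^{1-2/p} \cdot \polylog(n)$ bits, as claimed.
\end{proofhack}
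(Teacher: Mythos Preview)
Your proposal is correct and follows essentially the same decomposition as the paper's proof: bound the $N$ perfect $L_2$ samplers by $N\cdot\O{\log^4 n}$ via \thmref{thm:lp_sampler_p_small} with $\frac{1}{\poly(n)}$ failure probability, then observe that the remaining sketches are dominated by this term. One small slip: $\FpEst$ for $p>2$ requires $\tO{n^{1-2/p}}$ bits (c.f.\ \thmref{thm:ganguly}), not $\polylog(n)$, but since it is a single instance this is still absorbed into the final $n^{1-2/p}\cdot\polylog(n)$ bound and the conclusion stands.
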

\begin{proof}
For each $L_2$ sampler, we have its failure probability $\delta_1 $ and $\delta_2$ to be $\frac{1}{\poly(n)}$, so that by a union bound, all $N$ $L_2$ samplers succeed with probability $1 - \frac{1}{\poly(n)}$. Since we only require a constant-fractional approximation of the frequency of each $L_2$ sample $x_i$, by \thmref{thm:lp_sampler_p_small}, we use $N \cdot \O{\log^4 n} = \O{n^{1-2/p} \cdot \polylog(n)}$ to acquire the $L_2$ samples. Notice that the space consumption of AMS, \textsc{FpEst} and the $\O{\polylog(n)}$ instances of \textsc{CountSketch} is dominated by the $L_2$ samplers. Thus, we have the stated space complexity.
\end{proof}

Putting together \lemref{lem:frac:correct} and \lemref{lem:frac:space}, we have:
\begin{theorem}
\thmlab{thm:lp_sampler_frac}
Given $p>2$, there exists a perfect $L_p$ sampler that uses $\tO{n^{1-2/p}}$ bits of space. In addition, it gives an $(1+\eps)$-estimate to the sampled item using extra $\tO{\eps^{-2}n^{1-2/p}}$ bits of space.
\end{theorem}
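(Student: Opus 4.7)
The plan is to observe that the first part of the theorem is essentially already established by the two preceding lemmas, and then to bolt on a more accurate estimation procedure for the second part. Specifically, for the perfect $L_p$ sampling claim, combining \lemref{lem:frac:correct} (which gives that \algref{alg:lp:sampler_frac} outputs index $j$ with probability $\frac{|x_j|^p}{\|x\|_p^p}\pm\frac{1}{\poly(n)}$) with \lemref{lem:frac:space} (which bounds the space by $n^{1-2/p}\cdot\polylog(n)$) immediately yields the stated perfect $L_p$ sampler. To upgrade the failure probability to an arbitrary $\delta$ one can repeat $\O{\log(1/\delta)}$ independent trials.

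For the additional $(1+\eps)$-estimate on the sampled item, the idea is to augment each of the $N=\tO{n^{1-2/p}}$ perfect $L_2$ samplers with a sharper \CountSketch{} instance sharing that sampler's exponential scaling $g_i=x_i/\be_i^{1/2}$. By \lemref{lem:max:heavy}, the maximum scaled coordinate is a $\frac{1}{\O{\log^2 n}}$-heavy hitter of $g$ with high probability, so running a \CountSketch{} on $g$ with $\eps^{-2}\cdot\polylog(n)$ buckets yields additive error at most $\eps\cdot|g_j|$ on the estimate of $g_j$ for the sampled index $j$, with probability $1-\frac{1}{\poly(n)}$. Rescaling by $\be_j^{1/2}$ then converts this into a $(1+\eps)$-multiplicative estimate of $x_j$.

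Since each $L_2$ sampler uses its own independent exponential scaling, these $(1+\eps)$-accuracy \CountSketch{} instances cannot be shared across samplers, so the total additional space is $N\cdot\tO{\eps^{-2}}=\tO{\eps^{-2}n^{1-2/p}}$, as stated. Setting the per-instance failure probability to $\frac{1}{\poly(n)}$ and applying a union bound over the $N$ samplers gives the high-probability estimation guarantee; incorporating the failure parameter $\delta$ contributes an extra $\log(1/\delta)$ factor in the usual way.

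The main obstacle, which I expect to be routine in light of the groundwork laid earlier, is verifying that the heavy-hitter gap of $1/\polylog(n)$ guaranteed by \lemref{lem:max:heavy} is preserved simultaneously across all $N$ independent scalings, and that the \CountSketch{} tail error $\|g_{-j}\|_2^2/k$ scales as claimed given that the tail $L_2$-mass can still be a constant fraction of $\|g\|_2^2$. Both reduce to a union bound over the $N$ instances together with the standard \CountSketch{} guarantee applied at the heavy-hitter threshold; no new probabilistic machinery beyond what is already used in \corref{cor:sample:tail} is required.
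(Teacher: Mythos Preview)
Your proposal is correct and matches the paper's approach for the first claim: the perfect $L_p$ sampling guarantee follows directly from \lemref{lem:frac:correct} and \lemref{lem:frac:space}.

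For the $(1+\eps)$-estimate, your approach is also correct but slightly more laborious than the paper's. The paper simply invokes \thmref{thm:lp_sampler_p_small} as a black box: the perfect $L_2$ sampler of \cite{JayaramW18} already returns a $(1+\eps)$-estimate of the sampled coordinate using $\O{\eps^{-2}\log^4 n}$ bits per instance, so multiplying by the $\tO{n^{1-2/p}}$ instances gives the stated bound in one line. You instead unroll what that black box does internally---run a wider $\CountSketch$ on the exponentially scaled vector $g$, use \lemref{lem:max:heavy} to certify the sampled coordinate is a $1/\polylog(n)$ heavy hitter, and rescale by $\be_j^{1/2}$. This is exactly the mechanism underlying both \corref{cor:sample:tail} and the estimation component of \thmref{thm:lp_sampler_p_small}, so there is no new idea here, just an explicit reconstruction of what the cited theorem already provides. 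Your version is fine, but you can shorten it considerably by citing the black box directly.
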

\begin{proof}
By \thmref{thm:lp_sampler_p_small}, the $L_2$ sampler gives an $(1+\eps)$-estimate to the sampled item with high probability using $\O{\eps^{-2} \log^4 n}$ bits of space. Our space bound follows from the fact that we have $\tO{n^{1-2/p}}$ $L_2$ samplers.
\end{proof}

\subsection{Perfect Polynomial Sampler}
Now, we further generalize our results from perfect $L_p$ samplers to the following notion of perfect polynomial samplers:
\begin{definition}[Perfect polynomial sampler]
Let polynomial $G(z) = \sum_{d \in [D]} \alpha_d |z|^{p_d}$ with $0<\alpha_d<M$ for all $d \in [D]$, where $M$ and $D$ are some fixed constants. 
Given a vector $x \in \mathbb{R}^n$, a perfect polynomial sampler reports an index $i^* \in [n]$ such that for each $i \in [n]$, we have
\[\PPr{i^* = i} = \frac{G(x_i)}{\sum_{j \in [n]} G(x_j)} + \frac{1}{\poly(n)}.\]
Note that parameters $D$ and $M$ are considered constants in our setting.
\end{definition}

Similar to the above approach, we acquire an unbiased estimate to $\frac{G(x_i)}{x_i^p}$ for an index $x_i$ acquired from perfect $L_p$ sampling. 
We can then choose to accept the sampled index with a probability that must be well-defined, regardless of the sampled index $i\in[n]$. 
Our full algorithm appears in \algref{alg:poly:sampler}.

\begin{algorithm}[!htb]
\caption{Perfect polynomial sampler for polynomial with degree at most $p$}
\alglab{alg:poly:sampler}
\begin{algorithmic}[1]
\Require{Input vector $x\in\mathbb{R}^n$ in a stream, polynomial $G_p(u) = \sum_{d \in [D]} \alpha_d |u^{p_d}|$ with $0 < p_1 < p_2 < \ldots < p_d = p$ and $0<\alpha_d<M$ for all $d \in [D]$}
\Ensure{Perfect $p$-polynomial sample}
\State{Let $N = \O{\log n}$}
\State{$s_1,\ldots,s_N$ be $N$ perfect $L_p$ samples from $x$} \Comment{See \thmref{thm:lp_sampler_p_small} and \thmref{thm:lp_sampler_frac}}
\For{each $i\in[N]$}
\State{Let $j$ be the index of $s_i$}
\For{each $d \in [D]$}
\State{Let $\widehat{x_j^{p_d-p}}$ be an unbiased $(1 + \frac{1}{\log(n)})$-estimate of $x_j^{p_d-p}$} \Comment{See \thmref{thm:lp_sampler_frac}}
\EndFor
\State{\Return $j$ with probability $\frac{1}{5DM}\cdot \sum_{d \in [D]} \alpha_d \left|\widehat{x_j^{p_d-p}}\right|$}
\State{Otherwise, continue to next $i$}
\EndFor
\end{algorithmic}
\end{algorithm}

We show that our algorithm outputs a sample according to the correct probability distribution. 
\begin{lemma}
\lemlab{lem:poly:sampler:correct}
With high probability, \algref{alg:poly:sampler} outputs a perfect polynomial sample.
\end{lemma}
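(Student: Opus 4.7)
The proof follows the rejection-sampling template of \lemref{lem:lp:int:correct} and \lemref{lem:frac:correct}, this time using a perfect $L_p$ sampler as the proposal distribution for $p=p_D$. For each candidate index $j$ produced by the $L_p$ sampler (which has marginal $\frac{|x_j|^p}{\|x\|_p^p}\pm\frac{1}{\poly(n)}$), I would accept with probability $\frac{1}{5DM}\sum_{d\in[D]}\alpha_d|\widehat{x_j^{p_d-p}}|$, whose expectation equals $\frac{1}{5DM}\sum_{d\in[D]}\alpha_d|x_j|^{p_d-p}$ up to $\frac{1}{\poly(n)}$ error by the unbiasedness of the estimators. Multiplying through yields the per-iteration output probability
\[
\frac{|x_j|^p}{\|x\|_p^p}\cdot\frac{1}{5DM}\sum_{d\in[D]}\alpha_d|x_j|^{p_d-p}=\frac{G(x_j)}{5DM\,\|x\|_p^p}\pm\frac{1}{\poly(n)},
\]
which is already proportional to the target distribution $G(x_j)/\sum_\ell G(x_\ell)$.

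First I would verify that the acceptance probability is well-defined. Because stream updates are integers, any nonzero $x_j$ satisfies $|x_j|\ge 1$, and since $p_d\le p$ for all $d\in[D]$ this forces $|x_j|^{p_d-p}\le 1$. Combined with $\alpha_d<M$, the $(1+1/\log n)$-accuracy of each $\widehat{x_j^{p_d-p}}$, and the fact that there are only $D$ terms, one obtains $\frac{1}{5DM}\sum_{d\in[D]}\alpha_d|\widehat{x_j^{p_d-p}}|\le\frac{1}{4}<1$, so the rejection step is legitimate. Next I would bound the per-iteration success probability from below: since $\alpha_D|x_j|^p$ is a summand of $G(x_j)$, we have $\sum_j G(x_j)\ge\alpha_D\|x\|_p^p$, so each iteration returns some index with probability at least $\frac{\alpha_D}{5DM}=\Omega(1)$. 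Taking $N=\O{\log n}$ independent iterations then ensures that at least one iteration returns an index with probability $1-\frac{1}{\poly(n)}$, and conditioning on success converts the per-iteration distribution into the required
\[
\frac{G(x_j)/(5DM\,\|x\|_p^p)\pm1/\poly(n)}{\sum_\ell G(x_\ell)/(5DM\,\|x\|_p^p)\pm1/\poly(n)}=\frac{G(x_j)}{\sum_\ell G(x_\ell)}\pm\frac{1}{\poly(n)}.
\]

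The main technical obstacle is justifying the existence of the unbiased $(1+1/\log n)$-accurate estimator $\widehat{x_j^{p_d-p}}$ when $p_d<p$, since \lemref{lem:truncated_taylor_sampler} was stated only for positive exponents. I would extend that lemma by applying the generalized binomial series $|x|^r=|y|^r\sum_{q\ge 0}\binom{r}{q}\bigl(\tfrac{|x|-|y|}{|y|}\bigr)^q$ at $r=p_d-p<0$: when $y_j$ is a constant-factor approximation of $|x_j|$ (obtained once per sample exactly as in \algref{alg:lp:sampler_frac}), we have $\bigl|(|x_j|-|y_j|)/|y_j|\bigr|\le\frac{1}{100}$, the coefficients $|\binom{r}{q}|$ grow only polynomially in $q$ for the fixed real $r$, and truncating at $Q=\O{\log n}$ therefore leaves only a $1/\poly(n)$ relative error. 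Substituting independent $\CountSketch$-based unbiased estimates of $(x_j-y_j)$ into the truncated series yields an unbiased estimator with the claimed accuracy, after which union-bounding over the $N$ iterations and the $D$ exponents proceeds exactly as in \lemref{lem:frac:correct}, completing the argument.
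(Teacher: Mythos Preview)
Your proposal is correct and follows essentially the same rejection-sampling argument as the paper: use the perfect $L_p$ sampler (with $p=p_D$) as the proposal, accept with probability proportional to $G(x_j)/|x_j|^p$, verify the acceptance probability is at most $1$ via $|x_j|\ge 1$ and $\alpha_d<M$, and lower-bound the per-iteration success probability by the $\alpha_D$-term to justify $N=\O{\log n}$ repetitions. Your treatment is in fact more careful than the paper's on one point: you explicitly flag and address the negative exponent $p_d-p<0$ in the truncated Taylor estimator, whereas the paper simply invokes \thmref{thm:lp_sampler_frac} without adapting \lemref{lem:truncated_taylor_sampler} to negative powers.
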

\begin{proof}
Note that the index $j$ is returned with probability
\[\left(\frac{|x_j|^p}{\|x\|_p^p}\pm\frac{1}{\poly(n)}\right)\cdot\frac{1}{5DM}\cdot\sum_{d \in [D]} \alpha_d \left|\widehat{x_j^{p_d-p}}\right|.\]
Recall that in \thmref{thm:lp_sampler_frac} we show our estimation to each $x_j^{p_d-p}$ gives a $(1+\frac{1}{\log(n)})$-approximation for each $d \in [D]$
with probability $1 - \frac{1}{\poly(n)}$. 
Conditioned on this event, we have that 
\begin{align*}
\frac{1}{5DM}\cdot\sum_{d \in [D]} \alpha_d \left|\widehat{x_j^{p_d-p}}\right| \le\frac{\sum_{d \in [D]} \alpha_d |x_j|^{p_d-p}}{DM} 
\le 1.
\end{align*}
Therefore, our rejection probability is smaller than $1$, so it is well-defined. Moreover, by \lemref{lem:truncated_taylor_sampler} we have $\Ex{\widehat{x^{p_d-p}_j}}= x^{p_d-p}_j \cdot \left(1\pm \frac{1}{\poly(n)}\right)$. 
Notice that $\widehat{x^{p_d-p}_j}$ has the same sign as $x^{p_d-p}_j$, so $\Ex{\left|\widehat{x^{p_d-p}_j}\right|}= | x^{p_d-p}_j| \cdot \left(1\pm \frac{1}{\poly(n)}\right)$.
Thus, in expectation, the probability that $j$ is returned is
\[\frac{G_p(x_j)}{5DM \cdot\|x\|_p^p}\pm\frac{1}{\poly(n)}.\]
Hence, conditioned on some index being output, the probability that $j$ is returned is
\begin{align*}
\left(\frac{G_p(x_j)}{5DM\cdot \|x\|_p^p}\pm\frac{1}{\poly(n)}\right)&\cdot\left(\sum_{\ell\in[n]}\frac{G_p(x_l)}{5DM\cdot \|x\|_p^p}\pm\frac{1}{\poly(n)}\right)^{-1}\\
&=\frac{G_p(x_j)}{\sum_{l \in [n]} G_p(x_l)}\pm\frac{1}{\poly(n)},
\end{align*}
which is the correct probability distribution. 
Finally, the probability that some index being output for each sample $i\in[N]$ is
\[\sum_{j\in[n]}\frac{G_p(x_j)}{5DM\cdot \|x\|_p^p}\pm\frac{1}{\poly(n)}\ge \frac{\alpha_{D} \cdot \|x\|_{p}^{p}}{5DM\cdot \|x\|_p^p} = \Omega(1).\]
Thus by repeating $N = \O{\log n}$ times, we have that a sample is returned with probability at least $1-\frac{1}{\poly(n)}$. 
\end{proof}
We next bound the space used by our polynomial sampler. 
\begin{lemma}
\lemlab{lem:poly:sampler:space}
\algref{alg:poly:sampler} uses $ n^{\max \{0, 1-2/p\}}\cdot \polylog(n)$ bits of space.
\end{lemma}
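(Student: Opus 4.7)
The plan is to account for the space used by each component of \algref{alg:poly:sampler} separately and then argue that everything is dominated by the cost of maintaining the perfect $L_p$ samples. There are only two components: (i) maintaining $N = \O{\log n}$ perfect $L_p$ samplers on the turnstile stream, and (ii) for each sampled index $j$ and each $d \in [D]$, producing a $(1+\frac{1}{\log n})$-accurate unbiased estimate $\widehat{x_j^{p_d - p}}$ of $x_j^{p_d - p}$.

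For component (i), I would split into two cases based on the largest exponent $p = p_D$. If $p \le 2$, then by \thmref{thm:lp_sampler_p_small} each perfect $L_p$ sampler, set to failure probability $\frac{1}{\poly(n)}$, uses $\polylog(n)$ bits of space. If $p > 2$, then by \thmref{thm:lp_sampler_frac} each perfect $L_p$ sampler uses $\tO{n^{1-2/p}}$ bits of space. Multiplying by $N = \O{\log n}$ and absorbing the extra logarithmic factor, the total cost across all $N$ copies is $\tO{n^{\max\{0,\,1-2/p\}}}$ bits.

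For component (ii), the estimators $\widehat{x_j^{p_d-p}}$ are obtained by running additional $L_p$-sampler-based estimation subroutines on the stream together with a constant number of $\CountSketch$ instances, exactly as in the proof of \thmref{thm:lp_sampler_frac}. Since $D$ is a constant and the target relative error is only $\frac{1}{\log n}$, each such estimator has space complexity of the same order as a single perfect $L_p$ sampler, and hence does not increase the asymptotic bound. A union bound over the $N$ samples and the $D$ estimators controls the overall failure probability at $\frac{1}{\poly(n)}$.

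The only nuance in the argument is making the case analysis on $p$ explicit so that the bound $n^{\max\{0, 1-2/p\}}\cdot \polylog(n)$ is obtained uniformly: for $p \le 2$ the $\max$ evaluates to $0$ and one uses \thmref{thm:lp_sampler_p_small}, while for $p > 2$ the $\max$ evaluates to $1-2/p$ and one invokes \thmref{thm:lp_sampler_frac}. Apart from this bookkeeping, the proof is a direct summation of space costs and presents no real obstacle.
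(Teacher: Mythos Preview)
Your proposal is correct and takes essentially the same approach as the paper: account separately for the $N=\O{\log n}$ perfect $L_p$ samplers and the $D$ estimators $\widehat{x_j^{p_d-p}}$, split on whether $p\le 2$ or $p>2$ to invoke \thmref{thm:lp_sampler_p_small} or \thmref{thm:lp_sampler_frac}, and conclude that the total is $n^{\max\{0,1-2/p\}}\cdot\polylog(n)$. The paper's proof is terser and states that each estimator uses only $\polylog(n)$ bits, whereas you bound it more conservatively by the cost of a single sampler, but both reach the same conclusion.
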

\begin{proof}
By \thmref{thm:lp_sampler_p_small} and \thmref{thm:lp_sampler_frac}, we use $n^{\max \{0, 1-2/p\}}\cdot \polylog(n)$ bits of space to acquire each $L_p$ sampler with high probability. This takes $n^{\max \{0, 1-2/p\}}\cdot \polylog(n)$ bits in total since we draw $ \polylog(n)$ samples. In addition, by \thmref{thm:lp_sampler_frac} we use $\polylog(n)$ bits of space to estimate each $x_j^{p_d-p}$. Thus, we have the stated space complexity.
\end{proof}
Putting together \lemref{lem:poly:sampler:correct} and \lemref{lem:poly:sampler:space}, we have the following full guarantees of our polynomial sampler. 
\begin{theorem}[Perfect polynomial sampler] 
\thmlab{thm:poly:sampler}
There exists a perfect polynomial sampler for a polynomial of degree at most $p$, which uses $\tO{n^{1-2/p}}$ bits of space. In addition, it gives an $(1+\eps)$-estimate $\widehat{x_i}$ to the sampled item using extra $\tO{\eps^{-2}n^{1-2/p}}$ bits of space.
\end{theorem}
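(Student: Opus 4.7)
The plan is to combine \lemref{lem:poly:sampler:correct} and \lemref{lem:poly:sampler:space}, which together yield the distributional correctness and the space bound for \algref{alg:poly:sampler}, and then to augment the returned sample with the $(1+\eps)$-approximation functionality already built into the underlying perfect $L_p$ samplers of \thmref{thm:lp_sampler_p_small} and \thmref{thm:lp_sampler_frac}.

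First, I would instantiate \algref{alg:poly:sampler} by plugging in the perfect $L_p$ sampler of \thmref{thm:lp_sampler_frac} (or \thmref{thm:lp_sampler_p_small} if $p \le 2$), together with the truncated Taylor series estimators from \algref{alg:lp:sampler_frac} to produce unbiased $(1 + 1/\log n)$-accurate estimates $\widehat{x_j^{p_d - p}}$ of each fractional power required by the rejection step. Applying \lemref{lem:poly:sampler:correct} then gives $\PPr{i^* = j} = \frac{G(x_j)}{\sum_{\ell \in [n]} G(x_\ell)} \pm \frac{1}{\poly(n)}$ for every $j \in [n]$, which matches the definition of a perfect polynomial sampler, while \lemref{lem:poly:sampler:space} bounds the total space by $\tO{n^{\max\{0,\,1 - 2/p\}}}$; for $p > 2$ this is $\tO{n^{1-2/p}}$ as claimed. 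The mechanism underlying \lemref{lem:poly:sampler:correct} is that the normalization constant $\tfrac{1}{5DM}$, combined with the $(1+1/\log n)$-accuracy of the estimators and the fact that $D$ and $M$ are constants bounding $|D|$ and the $\alpha_d$'s, keeps the rejection probability in $[0,1]$ regardless of which candidate index $j$ is proposed; summing the expected acceptance probability over $j$ yields $\Omega(1)$ per round, so $N = \O{\log n}$ rounds suffice to succeed with high probability.

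For the $(1+\eps)$-estimate of the sampled coordinate, I would use the auxiliary estimation functionality in \thmref{thm:lp_sampler_frac}: each perfect $L_p$ sampler can be equipped to report a $(1+\eps)$-approximation to the value of its sampled coordinate using $\tO{\eps^{-2} n^{1-2/p}}$ additional bits. Since \algref{alg:poly:sampler} draws only $N = \O{\log n}$ perfect $L_p$ samples and ultimately returns the first index that survives the rejection step, I would simply output the $(1+\eps)$-approximation attached to that accepted sample; a union bound over the $\O{\log n}$ samplers preserves the high-probability accuracy, and the overall extra space remains $\tO{\eps^{-2} n^{1-2/p}}$.

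The step I expect to be most delicate is verifying that conditioning on acceptance by the rejection step does not bias the $(1+\eps)$-value estimate. The resolution I would argue is that the acceptance probability is computed solely from the independent estimators $\widehat{x_j^{p_d - p}}$ produced by the auxiliary $\CountSketch$ copies inside the Taylor expansion, rather than from the separate $(1+\eps)$-value estimate returned by the $L_p$ sampler. Hence the approximation guarantee from \thmref{thm:lp_sampler_frac} for the value of $x_j$ is independent of the acceptance event and propagates unchanged to the returned coordinate, completing the proof.
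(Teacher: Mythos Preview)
Your proposal is correct and follows essentially the same approach as the paper: the paper simply states that the theorem follows by combining \lemref{lem:poly:sampler:correct} and \lemref{lem:poly:sampler:space}, with the $(1+\eps)$-estimation inherited from the underlying $L_p$ samplers of \thmref{thm:lp_sampler_p_small} and \thmref{thm:lp_sampler_frac}. Your write-up is in fact more careful than the paper's, which gives no explicit proof block for this theorem and in particular does not discuss the conditioning-on-acceptance issue you raise and resolve.
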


\section{Approximate \texorpdfstring{$L_p$}{Lp} Sampler for \texorpdfstring{$p>2$}{p>2} with Fast Update Time} 
\seclab{sec:approx}

In this section, we present an approximate $L_p$ sampler with optimal space dependence on $n$ and $\log n$, which achieves fast update time. 
We give our algorithm in \algref{alg:lp:approx:sampler:fast}. 

\begin{algorithm}[!htb]
\caption{Approximate $L_p$ Sampler for $p>2$ with fast-update}
\alglab{alg:lp:approx:sampler:fast}
\begin{algorithmic}[1]
\Require{Input vector $x\in\mathbb{R}^n$ in a stream, accuracy $\eps$, discretization factor $\eta$}
\Ensure{Approximate $L_p$ sampler with accuracy $\eps$}
\State{Let $c\gets\Theta(1)$ be sufficiently large}
\State{For each $i\in[n],j\in[n^c]$, generate exponential random variables $\be_{i,j}^{1/p}$}
\State{For each $i\in[n]$, let $\bv_i=\max_{j\in[n^c]}|x_i| \cdot \rnd_{\eta}(1/\be_{i,j}^{1/p})$}
\State{Let $\bu\in\mathbb{R}^{n^{c+1}}$ be vector consisting of $|x_i| \cdot \rnd_{\eta}(1/\be_{i,j})$ for all $i\in[n]$, $j\in[n^c]$}
\State{Let $\bar{\bu}$ be $\bu$ with the entries of $\bv$ zeroed out}
\State{Keep a $\CountSketch_1$ with $\Theta(\log n)$ rows and $n^{1-2/p}\cdot\log \frac{1}{\eps}$ buckets on $\bv$}
\State{Let $v\in\mathbb{R}^n$ be the estimated frequencies by the resulting $\CountSketch_1$ table}
\State{For each item $v_i$ in $v$, add $v_i$ to set $B$ if its absolute value is bigger than $\frac{n^{c/p}\|x\|_p}{200\log \frac{1}{\eps}}$}
\State{Return FAIL if $B$ is empty}
\State{Keep a $\CountSketch_2$ with $\O{\log n}$ rows and the first $|B|$ of $(n^{c+1})^{1-2/p}$ buckets on $\bar{\bu}$}
\State{Add the entries in set $B$ of $\CountSketch_1$ to $\CountSketch_2$}
\State{Let $y\in\mathbb{R}^n$ be the estimated frequencies by the resulting $\CountSketch_2$ table}
\State{Let $i^*=\argmax_{i\in[n]}|y_i|$}
\State{Let $R_\bu$ be an estimation of $\|\bu\|_2$ satisfying $R \in [\frac{\|\bu\|_2}{2}, 2\|\bu\|_2]$}
\State{Let $\mu \in [\frac{1}{2}, \frac{3}{2}]$ be a uniform random variable}
\State{Return $y_{i^*}$ if $|y_{D(1)}| - |y_{D(2)}| >\frac{100 R}{\mu n^{(c+1)(1/2-1/p)}}$. Otherwise, return FAIL}
\end{algorithmic}
\end{algorithm}

As we mentioned before, the existing implementation of the perfect $L_p$ sampler requires duplicating each coordinate a polynomial number of times. This is because of the adversarial error we encounter when we condition on some specific index achieving the max. In the implementation, we fail the sampler if we do not witness a large gap between our estimate of the max and second max, since we cannot distinguish them from our CountSketch estimation with additive error. However, the failure probability may change drastically if we conditioned on different indices achieving the max. For example, consider vector $x=(100n, 1, \ldots,1) \in \mathbb{R}^n$. We would expect the first index to be the maximum in the scaled vector. And if we condition on the second index achieving the max, we would expect the max and the second max to not have a large gap. Then, the failure probability shifts by an additive constant which leads to an incorrect sampling distribution. Therefore, we duplicate each entry polynomial times so that there are no heavy-hitters in the resulting vector. This would reduce the dependency on the anti-ranks to a negligible small value. Below, we state the formal definition of the duplication vector.
\begin{definition}[Duplication]
Let $X$ denote the duplication vector where $X_{i,j} := x_i$ for all $i \in [n], j \in [n^c]$. Notice that $\|X\|_p^p = n^c \cdot \|x\|_p^p$.
\end{definition}

It is challenging to adapt this duplication method to the $p>2$ setting since we are not able to maintain a CountSketch table with $n^{(c+1)^{1-2/p}}$ buckets. 
On the other hand, we cannot implement CountSketch with a lower number of buckets or the additive error would be too high. 
Therefore, we introduce a simulation scheme by keeping a two-stage CountSketch table. 
Consider an index $i$, we generate $n^c$ exponential variables $\be_{i,j}$, we observe the maximum of the scaled vector can only be obtained from all $x_i / \be_{i,j^*}^{1/p}$, where $j^* = \argmax_{j \in [n^c]} x_i / \be_{i,j}^{1/p}$. 

Thus, in our first-stage CountSketch, we only record the maximum item $\bw_i = x_i / \be_{i,j^*}^{1/p}$ of each entry. We select a set $B$ with roughly $\polylog \frac{1}{\eps}$ indices that contains the maximum of $\bw$ with probability $1-\eps$. 
Then, in our second stage, we maintain a CountSketch table for the scaled vector $\bz$ with $\bw$ zeroed out. We only record the first $|B|$ buckets, and we discard everything that hashes into other buckets. 
We add up the entries in the two CountSketch tables to obtain an estimate for each item in $B$. Then, we do a statistical test to fail the instances that do not have anti-concentration. 
Notice that since we hash the items uniformly randomly, this simulation will not change the distribution of our estimation. In this way, we implement the approximate sampler with a small space occupation.

However, upon each arrival in the stream, if we calculate each of the $n^c$ duplicated scaled entries explicitly in the above simulation method, the update time would be $\O{n^c}$, which is prohibitively large in practice.
To speed up the update time, instead of calculating the explicit value of $\frac{X_i}{\be_i^{1/p}}$, we scale $X_i$ by $\rnd_\eta(1/\be^{1/p})$, where $\rnd_\eta(x)$ rounds $x$ down to the nearest power of $(1+\eta)^q$ for $q \in \mathbb{Z}$. Therefore, we maintain a different vector $\bu \in \mathbb{R}^{n^{(c+1)}}$ in our CountSketch table, where $\bu_i = X_i \cdot \rnd_\eta(1/\be^{1/p})$. Notice that $\bu_i = \bz_i \cdot (1 \pm \O{\eta})$ for all $i \in [n^{c+1}]$, where $\bz_i = X_i / \be^{1/p}$ is the scaled entry without round-up. Thus, this gives us an approximate $L_p$ sampler with accuracy $\O{\eta}$. 

Last, we state a modification to the classic CountSketch algorithm introduced by \cite{JayaramW18}, which is used to reduce the dependency on the anti-ranks. Let $A \in \mathbb{R}^{d \times l}$ be a $d \times l$ CountSketch matrix. Instead of uniformly hashing each item into a bucket in each row, we generate variables $h_{i, j, k} \in\{0,1\}$ for $(i, j, k) \in[d] \times[l] \times[n]$, where $h_{i, j, k}$ are all i.i.d. and equal to 1 with probability $1 / l$. 
We also let $g_{i, k} \in\{1,-1\}$ be i.i.d. Rademacher variables (1 with probability $1 / 2$ ). Then $A_{i, j}=\sum_{k=1}^n x_k g_{i, k} h_{i, j, k}$, and the estimate $y_l$ of $x_k$ is given by:
\[y_k=\operatorname{median}\left\{g_{i, k} A_{i, j} \mid h_{i, j, k}=1\right\}\]
Thus, it performs the same as hashing the item to $k$ random buckets in the whole $\CountSketch$ table. Note the element $f_k$ can be hashed into multiple buckets in the same row of $A$, or even be hashed into none of the buckets in a given row. We remark that the error bound of the original $\CountSketch$ algorithm can be applied as usual (see Section A.1 of \cite{JayaramW18} for detailed analysis).

\paragraph{Analysis of $\CountSketch_1$.}
We recall that in $\CountSketch_1$, for each index $k \in [n]$, we select the maximum $1/\be_{k,j^*}$ of $1/\be_{k,j}, j \in [n^c]$. Then, we implement countsketch on the resulting scaled vector $\bv_k = x_k/\be_{k,j^*}$.
Now, we show that the $\CountSketch_1$ table recovers the maximum index of the scaled vector with high probability.

First, we state Bernstein's inequality.
\begin{theorem}[Bernstein's inequality]
\thmlab{thm:bernstein}
Let $Y_1,\ldots,Y_n$ be independent random variables such that $|Y_i|\le M$ for all $i\in[n]$, and $\text{Var}\left(\sum_i Y_i\right)\le\sigma^2$. 
Then there exist constants $C_1,C_2>0$ such that for all $t>0$,
\[\PPr{\sum_i Y_i-\Ex{\sum_i Y_i}>t}\le C_1\left(e^{-C_2t^2/\sigma^2}+e^{-C_2t/M}\right).\]
\end{theorem}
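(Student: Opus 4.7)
The statement in question is Bernstein's inequality, which is a classical concentration result, so my plan is to sketch the standard Chernoff--Cramér proof via the moment generating function (MGF).

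The first step is to apply the exponential Markov inequality: for any $\lambda > 0$, letting $Z_i = Y_i - \Ex{Y_i}$ and $S = \sum_i Z_i$, we have
\[\PPr{S > t} \le e^{-\lambda t}\Ex{e^{\lambda S}} = e^{-\lambda t}\prod_i \Ex{e^{\lambda Z_i}},\]
using independence of the $Y_i$ (hence of the $Z_i$) in the last equality. The core of the argument is to bound each individual MGF $\Ex{e^{\lambda Z_i}}$. Since $|Z_i| \le 2M$ and $\Ex{Z_i} = 0$, a Taylor expansion gives
\[\Ex{e^{\lambda Z_i}} = 1 + \sum_{k\ge 2}\frac{\lambda^k\Ex{Z_i^k}}{k!} \le 1 + \sum_{k\ge 2}\frac{\lambda^k (2M)^{k-2}\Var(Y_i)}{k!},\]
where I used $|Z_i^k| \le (2M)^{k-2} Z_i^2$ for $k \ge 2$. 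Summing the geometric-like series and applying $1+x \le e^x$ yields a bound of the form $\Ex{e^{\lambda Z_i}} \le \exp\bigl(\lambda^2 \Var(Y_i) \cdot \phi(\lambda M)\bigr)$ for a suitable increasing function $\phi$ (concretely, $\phi(u) = \tfrac{1}{2(1-u/3)}$ when $\lambda M < 3$). Multiplying across $i$ and using the variance bound $\sum_i \Var(Y_i) \le \sigma^2$ gives
\[\PPr{S > t} \le \exp\Bigl(-\lambda t + \tfrac{\lambda^2\sigma^2}{2(1-\lambda M/3)}\Bigr),\]
valid for $\lambda M < 3$. Optimizing over $\lambda$ (taking $\lambda = t/(\sigma^2 + Mt/3)$) produces the standard one-term Bernstein bound $\exp\bigl(-\tfrac{t^2/2}{\sigma^2 + Mt/3}\bigr)$.

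The final step is to convert this single-term bound into the two-term form stated in the theorem. I would do a case split on whether $\sigma^2 \ge Mt/3$ or $\sigma^2 < Mt/3$. In the first (sub-Gaussian) regime the denominator is at most $2\sigma^2$, yielding a bound of $\exp(-C_2 t^2/\sigma^2)$; in the second (sub-exponential) regime the denominator is at most $2Mt/3$, yielding a bound of $\exp(-C_2 t/M)$. Taking the sum of the two bounds and absorbing the factor $2$ into the constant $C_1$ produces the desired inequality.

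The only mildly delicate step is the MGF bound for bounded zero-mean variables with a variance constraint, since one needs the restriction $\lambda M < 3$ (or some analogous condition) for the Taylor series to be summable in closed form; this is the standard obstacle in Bernstein's inequality, and it is handled by restricting to $\lambda$ values in the admissible range when optimizing. Since the theorem is quoted as a black-box tool rather than proved, I would in practice simply cite a standard reference (e.g., Boucheron--Lugosi--Massart) rather than reproduce the calculation.
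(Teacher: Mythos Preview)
Your proposal is correct: you give the standard Chernoff--Cram\'er argument for Bernstein's inequality, and you correctly anticipate that the paper quotes this as a black-box classical result without proof. Indeed, the paper states the theorem and immediately moves on to use it, so there is nothing to compare against; your closing remark about simply citing a reference matches exactly what the paper does.
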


Next, we introduce a lemma that bounds the number of large items in the scaled vector.

\begin{lemma}
\lemlab{lem:large:index}
Let $C$ be a constant. We call an index $k\in[n]$ \emph{large} if $\bv_k\ge\frac{n^{c/p}\|x\|_p}{C\log \frac{1}{\eps}}$. Let $\calE$ be the event that the number of large indices is at most $2C^p\log^{p+1} \frac{1}{\eps}$, we have $\PPr{\calE}\ge 1 - \frac{1}{\poly(n)}$.
\end{lemma}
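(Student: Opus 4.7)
My plan is to reduce the event that index $k$ is large to a tail event for a single standard exponential random variable, then bound the expected number of large indices by $O(\log^p(1/\eps))$, and finally apply concentration.

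First I would use max-stability of exponentials applied to the $n^c$ variables $\be_{k,1},\ldots,\be_{k,n^c}$ for each fixed index $k\in[n]$. By \propref{prop:exp_max_prob} and \propref{prop:exp_scaling}, $\min_j \be_{k,j}$ is exponential with rate $n^c$, so $\be_k' := n^c \min_j \be_{k,j}$ is a standard exponential. Up to the $(1+\eta)$ factor from the rounding operator $\rnd_\eta$ (which I will absorb into the constant $C$ by taking $\eta$ suitably small), we have $\bv_k = |x_k|\cdot n^{c/p}/(\be_k')^{1/p}$. The event $\bv_k \ge n^{c/p}\|x\|_p/(C\log(1/\eps))$ therefore becomes the simple tail event
\[
\be_k' \;\le\; \frac{C^p \log^p(1/\eps)\cdot|x_k|^p}{\|x\|_p^p}.
\]

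Next, letting $Y_k$ be the indicator of this event and using $\PPr{\be_k'\le t}\le t$ for a standard exponential, I get $\Ex{Y_k}\le \min\!\bigl(1,C^p\log^p(1/\eps)|x_k|^p/\|x\|_p^p\bigr)$. Summing over $k\in[n]$ telescopes the $|x_k|^p$ terms into $\|x\|_p^p$, yielding
\[
\Ex{\textstyle\sum_{k\in[n]} Y_k} \;\le\; C^p\log^p(1/\eps),
\]
which is a factor of $\log(1/\eps)$ below the target threshold $2C^p\log^{p+1}(1/\eps)$ in the claim.

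For concentration, I would exploit that the indicators $\{Y_k\}_{k\in[n]}$ are independent (since the blocks $\{\be_{k,j}\}_{j\in[n^c]}$ are independent across $k$, and $Y_k$ depends only on its own block). Applying Bernstein's inequality (\thmref{thm:bernstein}) with $M=1$, variance proxy $\sigma^2\le \Ex{\sum_k Y_k}\le C^p\log^p(1/\eps)$, and deviation $t=C^p\log^{p+1}(1/\eps)$ gives a tail bound of order $\exp(-\Omega(C^p\log^{p+1}(1/\eps)))$, since both exponents in Bernstein's bound are at least $\Omega(C^p\log^{p+1}(1/\eps))$.

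The main obstacle is making this final tail into a clean $1/\poly(n)$ statement uniformly over the permitted range of $\eps$. When $\log(1/\eps)$ is comparable to or larger than $\log n$, Bernstein directly delivers $1/\poly(n)$ by taking $C$ large; when $\log(1/\eps)$ is very small, I would instead boost the constant $C$ (which the algorithm is free to choose) so that $C^p\log^{p+1}(1/\eps)\ge c\log n$ for any desired $c$, since $C$ only affects the definition of ``large'' by a constant factor and hence the downstream recovery guarantees in $\CountSketch_1$ absorb this change. Together, the exponential tail computation and the Bernstein step establish $\PPr{\calE}\ge 1-1/\poly(n)$.
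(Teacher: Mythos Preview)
Your approach is essentially identical to the paper's: reduce the event that index $k$ is large to a tail event for a single standard exponential via max-stability (the paper phrases it as $w_k \sim x_k n^{c/p}/\be^{1/p}$, you phrase it via $\be'_k = n^c\min_j \be_{k,j}$, which is the same thing), bound the expected number of large indices by $C^p\log^p(1/\eps)$, and then invoke Bernstein's inequality (\thmref{thm:bernstein}) on the independent indicators.

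The one point worth noting is the ``obstacle'' you flag in your last paragraph. The paper simply asserts that Bernstein yields $1-1/\poly(n)$ without commenting on the parameter regime, so you are being more careful than the source here. However, your proposed remedy of ``boosting the constant $C$'' does not quite work as written: if $\log(1/\eps)=O(1)$ then forcing $C^p\log^{p+1}(1/\eps)\ge c\log n$ requires $C$ to depend on $n$, so it is no longer a constant and the bucket count $n^{1-2/p}\log(1/\eps)$ in $\CountSketch_1$ would not absorb it. The cleaner fix is to observe that in the algorithm's regime one may assume $\log(1/\eps)\ge c_0\log n$ for a small constant $c_0$ (otherwise replace $\eps$ by $\min(\eps,n^{-c_0})$, which only changes the space by a constant factor), after which Bernstein directly gives the $1/\poly(n)$ tail. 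This is a minor technicality that the paper leaves implicit; your core argument is correct and matches the paper.
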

\begin{proof}
We define $w_k = \max_{j \in [n^c]} x_k/\be_{k,j}^{1/p}$. 
By the max-stability of exponential variables, we have $w_k\sim\frac{x_k\cdot n^{c/p}}{\be^{1/p}}$ for an exponential random variable $\be$, which is independent of $\be_{i,j}$. Then, we have
\[\PPr{w_k\ge\frac{n^{c/p}\|x\|_p}{C\log \frac{1}{\eps}}}=\PPr{\be\le\frac{x_k^p}{\|x\|_p^p}\cdot C^p\log^p \frac{1}{\eps}}\le\frac{x_k^p}{\|x\|_p^p}\cdot C^p\log^p \frac{1}{\eps},\]
where the last inequality is from the cumulative density function of exponential random variables. 
Consider the discretization vector $\bv$, since $\bv_k = w_k \cdot (1 \pm \O{\eta})$ for each $k \in [n]$, the above equation still holds for $\eta < \frac{1}{10}$.

Now, for each $k \in [n]$, we define variable $Y_k$ to be $1$ if $\bv_k$ is large and $0$ otherwise. Note that $Y_k$'s are independent.
Hence, we have that $\Ex{\sum_{k \in [n]} Y_k} \le C^p\log^p \frac{1}{\eps}$ and $\Var{\sum_{k \in [n]} Y_k} \le \sum_{k \in [n]} \Ex{Y_k^2} \le C^p\log^p \frac{1}{\eps}$.
Let $\calE$ be the event that the number of large indices is at most $2C^p\log^{p+1} \frac{1}{\eps}$. Thus, by Bernstein's inequality (c.f., \thmref{thm:bernstein}), we have $\PPr{\calE}\ge 1 - \frac{1}{\poly(n)}$. 
\end{proof}

The following lemma upper bounds the error of the first CountSketch table.
\begin{lemma}
\lemlab{lem:cs:err:w:descrete}
With probability at least $1-\frac{1}{\poly(n)}$, the error of $\CountSketch_1$ with $n^{1-2/p}\cdot\log \frac{1}{\eps}$ buckets and $\Theta(\log n)$ rows is at most $\frac{n^{c/p}\|x\|_p}{400\log \frac{1}{\eps}}$.  
\end{lemma}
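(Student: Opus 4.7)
The plan is to combine a second-moment estimate for $\bv$ with the standard tail-error analysis of the modified $\CountSketch$ scheme recalled immediately above the lemma. I will first bound $\Ex{\|\bv\|_2^2}$ using max-stability of exponential random variables, then apply a per-coordinate noise analysis with median amplification, and finally invoke the structural bound from \lemref{lem:large:index} to obtain the sharper $1/\log(1/\eps)$ factor.

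For the moment bound, by \propref{prop:exp_max_prob} together with \propref{prop:exp_scaling}, $\min_{j\in[n^c]}\be_{k,j}$ is exponential with rate $n^c$, so $\max_{j\in[n^c]}1/\be_{k,j}^{1/p}$ is distributionally equal to $n^{c/p}/\be_k^{1/p}$ for a standard exponential $\be_k$; the rounding operator $\rnd_\eta$ perturbs by only a factor $(1\pm O(\eta))$, which we absorb into constants. Repeating the integral-splitting calculation from the proof of \lemref{lem:z_second_moment}, $\Ex{\be^{-2/p}} = O(1)$ for any fixed $p>2$, so $\Ex{\bv_k^2} = O(|x_k|^2 n^{2c/p})$. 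Summing and invoking the power-mean inequality $\|x\|_2^2 \le n^{1-2/p}\|x\|_p^2$ yields $\Ex{\|\bv\|_2^2} = O(n^{2c/p+1-2/p}\|x\|_p^2)$. Plugged into the per-row $\CountSketch$ noise variance, which in expectation over the hashing is $\|\bv\|_2^2/B$ for $B=n^{1-2/p}\log(1/\eps)$ buckets, Chebyshev gives a per-row error of $O(\|\bv\|_2/\sqrt{B})$ with constant probability, and the median over $\Theta(\log n)$ rows boosts this to $1-1/\poly(n)$ via a union bound over the $n$ coordinates.

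The main technical step is to sharpen the resulting $\sqrt{\log(1/\eps)}$ factor to $\log(1/\eps)$. For this I apply \lemref{lem:large:index}: with probability $1-1/\poly(n)$ at most $O(\log^{p+1}(1/\eps))$ coordinates of $\bv$ exceed the threshold $T = n^{c/p}\|x\|_p/(C\log(1/\eps))$, and since $B$ greatly exceeds this count, a balls-and-bins argument shows that in each row no two large coordinates land in the same bucket. Consequently, the noise in the bucket used to estimate coordinate $k$ is a Rademacher sum over ``small'' entries, each bounded by $T$; applying Bernstein's inequality (\thmref{thm:bernstein}) with $M=T$ and variance at most $\|\bv\|_2^2/B$ trades the subgaussian tail for the bounded-increment regime and yields the extra $\sqrt{\log(1/\eps)}$ factor needed to reach $O(n^{c/p}\|x\|_p/\log(1/\eps))$. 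The chief obstacle is that $\|\bv\|_2^2$ is heavy-tailed for $p$ close to $2$ (its variance can be infinite), so I cannot simply condition on a deterministic upper bound for $\|\bv\|_2^2$; instead the Bernstein step uses only the expected bucket mass, and the truncation at $T$ supplied by \lemref{lem:large:index} provides the needed boundedness of each summand.
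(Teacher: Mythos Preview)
Your approach is essentially the paper's: bound $\Ex{\|\bv\|_2^2}=O(n^{2c/p}\|x\|_2^2)$ via max-stability and the integral computation from \lemref{lem:z_second_moment}, use \lemref{lem:large:index} to cap individual summands at $T$, apply Bernstein with $M=T$ and the expected-variance bound $\sigma^2=O(n^{2c/p}\|x\|_p^2/\log(1/\eps))$, and amplify by the median over $\Theta(\log n)$ rows. The paper goes directly to Bernstein rather than passing through a Chebyshev warm-up, but the ingredients and their roles are identical, and you correctly flag the same heavy-tail obstacle the paper sidesteps by working with the expected bucket mass.

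One small gap to fix: your balls-and-bins step (``no two large coordinates land in the same bucket'') only ensures that the noise for a \emph{large} coordinate $k$ is a Rademacher sum over small entries; if $k$ is itself small, a single large coordinate could still share $k$'s bucket and violate the $|Y_r|\le T$ hypothesis. The paper instead argues per bucket and per target: for a fixed bucket $(i,j)$ containing $k$, with probability at least $0.99$ none of the $O(\log^{p+1}(1/\eps))$ large indices other than $k$ hash there, simply because there are $n^{1-2/p}\log(1/\eps)\gg\log^{p+1}(1/\eps)$ buckets. This per-bucket event is then absorbed into the constant-probability success before median amplification, and it delivers the bounded-increment hypothesis for \emph{every} $k$, which is what the uniform error statement of the lemma requires.
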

\begin{proof}
Let $L=n^{1-2/p}\cdot\log \frac{1}{\eps}$ be the number of buckets in the $\CountSketch_1$ table. 
Consider an index $k \in [n]$, recall that we generate a hash variable $h_{i,j,k}$ which is $1$ with probability $\frac{1}{L}$ for each bucket $(i,j)$, and we hash $k$ to a bucket if $h_{i,j,k} = 1$. Then, since we have $d = \Theta(\log n)$ rows, $k$ is hashed to $\Theta(\log n)$ buckets in the $\CountSketch_1$ table with probability $1-\frac{1}{\poly(n)}$. We use $\calE_1$ to denote the event that each index $k \in [n]$ is hashed to $\Theta(\log n)$ buckets, and $\calE_1$ happens with high probability by a union bound. We condition on event $\calE_1$ in the following analysis.

We call an index $k\in[n]$ \emph{large} if $\bv_k\ge\frac{n^{c/p}\|x\|_p}{C\log \frac{1}{\eps}}$. We define $\calE_2$ to be the event that the number of large indices is at most $2C^p\log^{p+1} \frac{1}{\eps}$. By \lemref{lem:large:index}, $\calE_2$ happens with probability $1 - \frac{1}{\poly(n)}$.

For a fixed bucket $(i,j) \in [n^{1-2/p} \cdot \log \frac{1}{\eps}] \times [\Theta(\log n)]$. Let $\calE_3$ be the event that none of the large indices are hashed to $(i,j)$, so conditioned on $\calE_1$ and $\calE_2$, we have $\PPr{\calE_3}\ge 1 - \O{\frac{\log^{p+1} \frac{1}{\eps}}{n^{1-2/p}}} \ge 0.99$. 

For a fixed index $k$ such that $h_{i,j,k} = 1$, the error in the estimate of $\bv_k$ given by bucket $(i,j)$ is $S_{i,j}=\sum_{r\in[n],r\neq k}I(h_{i,j,r}=1)\cdot g_{i,r}\cdot \bv_r$, where $I(h_{i,j,r}=1)$ is the indicator function so that $I(h_{i,j,r}=1)=1$ if $h_{i,j,r}=1$ and $I(h_{i,j,r}=1)=0$ otherwise, and $g_{i,k}\in\{\pm1\}$ is the random sign corresponding to $k$ in row $i$ of $\CountSketch_1$. 
Then we have $\Ex{S_{i,j}}=0$ and from the analysis in \lemref{lem:z_second_moment}, $\Ex{S_{i,j}^2}\le\frac{1}{n^{1-2/p} \log \frac{1}{\eps}}\cdot\O{n^{2c/p}}\cdot\|x\|_2^2 \le \O{\frac{n^{2c/p}\cdot\|x\|_p^2}{\log \frac{1}{\eps}}}$. 
Conditioned on $\calE_3$, we have $|\bv_r|<\frac{n^{c/p}\|x\|_p}{C\log \frac{1}{\eps}}$ for all $r\neq k$ such that $h_{i,j,r} = 1$. 
Now by Bernstein's inequality, c.f., \thmref{thm:bernstein} for $t=\frac{n^{c/p}\|x\|_p}{200\log \frac{1}{\eps}}$ and $L=n^{1-2/p}\cdot\log \frac{1}{\eps}$, we have
\[\PPr{|S_{i,j}|>\frac{n^{c/p}\|x\|_p}{400\log \frac{1}{\eps}}\,\mid\,\calE_1\wedge\calE_2}\le\frac{1}{100}.\]
Thus by a union bound, we have that with probability at least $0.97$, the estimate of $k$ in bucket $(i,j)$ of $\CountSketch_1$ has an additive error at most $\frac{n^{c/p}\|x\|_p}{400\log \frac{1}{\eps}}$. 
By taking the median across $\Theta(\log n)$ buckets and a union bound across all $k\in[n]$, we have that the estimate for $\CountSketch_1$ for each index $k\in[n]$ is at most $\frac{n^{c/p}\|x\|_p}{400\log \frac{1}{\eps}}$ with high probability. 
\end{proof}

The following lemma shows the maximum entry in the scaled vector is $\frac{1}{\log \frac{1}{\eps}}$-heavy compared to the $p$-th norm of the unscaled vector.
\begin{lemma}
\lemlab{lem:countsketch_max_w}
We have $\max_{i\in[n]} |\bv_i|\ge\frac{n^{c/p}\|x\|_p}{100\log  \frac{1}{\eps}}$ with probability $1-\poly(\eps)$. 
\end{lemma}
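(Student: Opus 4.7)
The plan is to reduce the claim to a direct application of \lemref{lem:max_stability} via max-stability of exponentials, and then absorb the one-sided loss introduced by $\rnd_\eta$.

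First I would eliminate the ``$\max$ over $j$'' by the standard fact that the minimum of $n^c$ i.i.d.\ standard exponentials is $\exp(n^c)$-distributed; equivalently it has the distribution of $\be'/n^c$ for a single $\be'\sim\exp(1)$. Taking $1/p$-th powers,
\[w_i \;:=\; |x_i|\cdot\max_{j\in[n^c]}\frac{1}{\be_{i,j}^{1/p}} \;=\; \frac{|x_i|}{(\min_{j}\be_{i,j})^{1/p}} \;\stackrel{d}{=}\; \frac{n^{c/p}\,|x_i|}{(\be'_i)^{1/p}},\]
where $\be'_1,\dots,\be'_n$ are i.i.d.\ standard exponentials (independence across $i$ follows from the joint independence of the $\be_{i,j}$'s). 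In particular, the joint distribution of $(w_1,\dots,w_n)$ matches that of $n^{c/p}$ times the scaled vector appearing in \lemref{lem:max_stability} applied to $x$.

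Second, I would apply \lemref{lem:max_stability} to $x$ together with the exponentials $\be'_i$. It yields that with probability at least $1-\poly(\eps)$,
\[\max_{i\in[n]} \frac{|x_i|}{(\be'_i)^{1/p}} \;>\; \frac{\|x\|_p}{100\log(1/\eps)},\]
and multiplying by $n^{c/p}$ gives $\max_i w_i > n^{c/p}\|x\|_p / (100\log(1/\eps))$ with the same probability.

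Finally, I would pass from $w_i$ to $\bv_i$. Since $\rnd_\eta$ rounds down to the nearest power of $1+\eta$, we have $\rnd_\eta(y)\ge y/(1+\eta)$ for $y>0$, so $|\bv_i|\ge w_i/(1+\eta)$ and hence $\max_i |\bv_i|\ge \max_i w_i/(1+\eta)$. The only (mild) obstacle is that this naive combination loses an extra $(1+\eta)$ factor relative to the stated constant $100$; this is absorbed by observing that the proof of \lemref{lem:max_stability} actually shows a stronger bound using $\be'<\log(100/\eps)$, which is bounded by $C\log(1/\eps)$ for a constant $C$ much smaller than $100$. Thus the $100$ in the denominator is loose by a constant factor, leaving enough slack to swallow the $(1+\eta)$ factor for any $\eta\le 1$ and obtain the stated bound exactly.
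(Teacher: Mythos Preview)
Your proposal is correct and follows essentially the same approach as the paper: both reduce to \lemref{lem:max_stability} and then absorb the $(1+\eta)$ loss from the rounding. The only cosmetic difference is that the paper applies \lemref{lem:max_stability} directly to the duplicated vector $X\in\mathbb{R}^{n^{c+1}}$ (so $\|X\|_p=n^{c/p}\|x\|_p$ falls out automatically), whereas you first collapse the $n^c$ duplicates per coordinate via min-stability of exponentials and then apply \lemref{lem:max_stability} to the $n$-dimensional vector; these are equivalent, and your handling of the constant slack is the same in spirit as the paper's use of $110$ in the intermediate step.
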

\begin{proof}
Notice that $\max_{i\in[n]} |w_i| = \max_{i \in [n], j \in [n^c]} |\bz_{i,j}|$. Then, by \lemref{lem:max_stability}, the following holds with probability $1-\poly(\eps)$:
\[\max_{i\in[n]}| w_i| \ge\frac{1}{110\log \frac{1}{\eps}}\cdot\|X\|_p = \frac{1}{110\log \frac{1}{\eps}}\cdot n^{c/p}\cdot\|x\|_p.\]
Consider the discretization vector $\bv_k = w_k \cdot (1\pm \O{\eta})$, the claim holds for $\eta < \frac{1}{10}$.
\end{proof}

Combining  \lemref{lem:cs:err:w:descrete} and \lemref{lem:countsketch_max_w}, we have the following result showing the correctness of $\CountSketch_1$.
\begin{lemma}
\lemlab{lem:cs:discrete}
We recover the maximum index $i^* = \argmax \bv_i$ in $\CountSketch_1$ with probability $1 - \poly(\eps) - \frac{1}{\poly(n)}$.
\end{lemma}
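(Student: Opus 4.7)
The plan is to combine the two preceding lemmas via a union bound. By \lemref{lem:cs:err:w:descrete}, with probability at least $1 - \frac{1}{\poly(n)}$ the CountSketch estimate for every index $k\in[n]$ deviates from the true value $\bv_k$ by at most $\frac{n^{c/p}\|x\|_p}{400\log\frac{1}{\eps}}$. By \lemref{lem:countsketch_max_w}, with probability at least $1 - \poly(\eps)$ the true maximum satisfies $\max_{i\in[n]}|\bv_i| \ge \frac{n^{c/p}\|x\|_p}{100\log\frac{1}{\eps}}$. Let $\calE$ denote the intersection of these two events, which occurs with probability at least $1-\poly(\eps)-\frac{1}{\poly(n)}$.

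Condition on $\calE$. Let $i^*=\argmax_i|\bv_i|$, and let $v_{i^*}$ denote the estimate produced by $\CountSketch_1$. Then
\[
|v_{i^*}| \;\ge\; |\bv_{i^*}| - \frac{n^{c/p}\|x\|_p}{400\log\frac{1}{\eps}} \;\ge\; \frac{n^{c/p}\|x\|_p}{100\log\frac{1}{\eps}} - \frac{n^{c/p}\|x\|_p}{400\log\frac{1}{\eps}} \;=\; \frac{3\,n^{c/p}\|x\|_p}{400\log\frac{1}{\eps}},
\]
which strictly exceeds the admission threshold $\frac{n^{c/p}\|x\|_p}{200\log\frac{1}{\eps}}$ used in \algref{alg:lp:approx:sampler:fast}. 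Hence $i^*$ is added to $B$.

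It remains only to note that, on $\calE$, no other index $k\neq i^*$ can masquerade as the maximum in the wrong sense that matters here: the lemma statement asks us only to recover $i^*$ (i.e.\ ensure it is picked up), not to rule out other indices from $B$. Therefore the conclusion follows directly. The argument is straightforward; the only subtle point is to verify that the gap between the lower bound on $|\bv_{i^*}|$ and the CountSketch additive error comfortably dominates the threshold defining $B$, which is guaranteed by the constants $100$, $200$, and $400$ chosen in the preceding lemmas and in the algorithm.
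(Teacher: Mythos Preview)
Your proposal is correct and follows exactly the approach the paper takes: the paper states only that the result follows by combining \lemref{lem:cs:err:w:descrete} and \lemref{lem:countsketch_max_w}, and you have simply spelled out the arithmetic showing that on the intersection event the estimate $|v_{i^*}|$ exceeds the admission threshold, so $i^*\in B$.
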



\paragraph{Analysis of $\CountSketch_2$.}
We prove the correctness of $\CountSketch_2$. As we mentioned in \secref{sec:approx}, $\CountSketch_2$ fails the statistical test with constant probability, we need to bound the dependency of this probability on which index achieves the maximum.

We introduce the following corollary of the Khintchine inequality, which is used to bound the error of each CountSketch bucket.

\begin{lemma}
\lemlab{lem:cs_bounded_est}
\label{lem:cs_bounded_est}
Let $r_1,\ldots,r_n\in\{-1,+1\}$ be independent random signs. 
Then for any $c>0$, there exists a constant $C$ such that 
\[|r_1x_1+\ldots+r_nx_n|\le C\cdot\sqrt{\log n}\cdot\|x\|_2.\]
with probability at least $1-\frac{1}{n^c}$. 
\end{lemma}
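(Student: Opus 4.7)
The plan is to apply the Khintchine inequality (\thmref{thm:khintchine}) to a sufficiently high moment of the Rademacher sum and then convert the moment bound into a tail bound via Markov's inequality. This is a standard strategy for upgrading Khintchine's $L_p$-norm estimate into a high-probability concentration statement.

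Concretely, let $Y = r_1 x_1 + \ldots + r_n x_n$ and fix an exponent $p = \alpha \log n$ for a constant $\alpha > 0$ to be determined in terms of $c$. By \thmref{thm:khintchine},
\[\Ex{|Y|^p} \le (B_p)^p \cdot \|x\|_2^p,\]
and by \propref{prop:gamma}, $B_p = \sqrt{2} \cdot \left(\frac{1}{\sqrt{\pi}}\Gamma\!\left(\frac{p+1}{2}\right)\right)^{1/p} = \Theta(\sqrt{p}) = \Theta(\sqrt{\log n})$. Hence there exists an absolute constant $B>0$ with $\Ex{|Y|^p} \le (B\sqrt{\log n})^p \cdot \|x\|_2^p$.

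Next, I would apply Markov's inequality to the nonnegative random variable $|Y|^p$ at the threshold $t = C \sqrt{\log n}\cdot\|x\|_2$:
\[\PPr{|Y| \ge C\sqrt{\log n}\cdot\|x\|_2} = \PPr{|Y|^p \ge (C\sqrt{\log n}\cdot\|x\|_2)^p} \le \left(\frac{B\sqrt{\log n}\cdot\|x\|_2}{C\sqrt{\log n}\cdot\|x\|_2}\right)^p = \left(\frac{B}{C}\right)^{\alpha \log n}.\]
Choosing $C$ large enough that $B/C \le 1/e$ and then $\alpha$ large enough in terms of $c$, the right-hand side is at most $e^{-\alpha \log n} \le n^{-c}$, which is the desired tail bound.

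Since the argument is essentially a one-line combination of Khintchine plus Markov, there is no real conceptual obstacle; the only care needed is in the parameter choice, namely taking $p$ on the order of $\log n$ so that the moment method yields polynomially small failure probability, and then absorbing the $\Theta(\sqrt{\log n})$ factor from $B_p$ into the constant $C$. The same proof template already appears in \lemref{lem:cs_bounded_est_mean}, and the present lemma is the pointwise (single-draw) analogue.
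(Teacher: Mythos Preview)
Your proposal is correct and follows essentially the same approach as the paper: apply the Khintchine inequality with $p$ of order $\log n$, use \propref{prop:gamma} to bound $B_p$ by $\Theta(\sqrt{\log n})$, and then apply Markov's inequality to obtain a $1/n^c$ tail bound. The only cosmetic difference is that the paper fixes $p=\log n$ and lets the constant $C=2^c B$ absorb the dependence on $c$, whereas you introduce an extra parameter $\alpha$ in $p=\alpha\log n$; note that in your version $B$ is not truly absolute (it carries a $\sqrt{\alpha}$ factor), but this is harmless since the lemma allows $C$ to depend on $c$.
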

\begin{proof}
By \thmref{thm:khintchine} and \propref{prop:gamma}, we have that
\[\Ex{|r_1x_1+\ldots+r_nx_n|^p}\le(B\cdot\sqrt{p})^p\cdot\|x\|_2^p,\]
for some absolute constant $B>0$. 
Then for any constant $c>0$, we have by Markov's inequality,
\[\PPr{|r_1x_1+\ldots+r_nx_n|^p\ge n^c\cdot(B\cdot\sqrt{p})^p\cdot\|x\|_2^p}\le\frac{1}{n^c}.\]
Thus for $p=\log(n)$, we have
\[\PPr{|r_1x_1+\ldots+r_nx_n|\ge (2^c)\cdot(B\cdot\sqrt{\log n})\cdot\|x\|_2}\le\frac{1}{n^c}.\]
\end{proof}

The next lemma shows that the dependency on the anti-ranks is reduced by the duplication.
\begin{lemma}\cite{JayaramW18} 
\lemlab{lem:bz_independence}
Let $N$ be the cardinality of the support of the duplicated vector $X$. Let $\bz$ be the scaled duplication vector. For every $1 \leq k<N-n^{9 (c+1) / 10}$, with probability $1-\O{e^{-n^{(c + 1)/ 3}}}$ we have
$$ \left|\bz_{D(k)}\right|=\left[\left(1 \pm \O{n^{- (c+1) / 10}} \right) \sum_{\tau=1}^k \frac{E_\tau}{\mathbb{E}\left[\sum_{j=\tau}^N\left|X_{D(j)}\right|^p\right]}\right]^{-1 / p}, $$
where $(E_1,\cdots, E_N)$ are i.i.d. exponential variables with rate $1$ which are independent of the anti-rank vector $D = (D(1),\cdots,D(N))$.   
\end{lemma}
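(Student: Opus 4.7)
The plan is to first write $|\bz_{D(k)}|^{-p}$ in closed form using Proposition \propref{prop:exp_order_statistics}, and then reduce the lemma to a concentration statement about the tail sums $S_\tau := \sum_{j=\tau}^N |X_{D(j)}|^p$ around their expectations. Observe that $|\bz_j|^{-p} = \be_j / |X_j|^p$, so the variables $|\bz_j|^{-p}$ are independent exponentials with rates $|X_j|^p$. Applying Proposition \propref{prop:exp_order_statistics} to this collection yields
\[|\bz_{D(k)}|^{-p} \;=\; \sum_{\tau=1}^k \frac{E_\tau}{S_\tau},\]
with $E_1,\ldots,E_k$ i.i.d.\ standard exponentials independent of $D$. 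Hence it suffices to prove that $S_\tau = (1 \pm O(n^{-(c+1)/10}))\,\mathbb{E}[S_\tau]$ simultaneously for every $1 \leq \tau \leq k$, with failure probability $O(e^{-n^{(c+1)/3}})$; substituting this multiplicative approximation into each summand above, and absorbing the error into the stated $(1 \pm O(n^{-(c+1)/10}))$ factor outside the sum, completes the proof.

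The core step is the concentration of $S_\tau$, equivalently of $T_{\tau-1} := \|X\|_p^p - S_\tau = \sum_{j=1}^{\tau-1} |X_{D(j)}|^p$. The role of the $n^c$-fold duplication is exactly to make this concentration sharp: every coordinate of $X$ is repeated $n^c$ times, so each summand $|X_{D(j)}|^p$ is bounded by $\|X\|_\infty^p \leq \|X\|_p^p / n^c$, meaning that no single term can perturb the partial sum by more than a $1/n^c$ fraction of the total. I would first bound $\mathbb{E}[S_\tau]$ from below by showing that, as long as $\tau \leq N - n^{9(c+1)/10}$, the tail contains at least $n^{9(c+1)/10}$ terms, each with expectation on the order of $\|X\|_p^p / N$ in an amortized sense; this gives $\mathbb{E}[S_\tau] = \Omega(n^{-c/10}\,\|X\|_p^p)$, which is large enough that a multiplicative $n^{-(c+1)/10}$ deviation translates into an additive deviation much larger than any single term.

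Having set the scale, I would then establish the concentration of $T_{\tau-1}$ by a Bernstein-type argument. Because the $|X_{D(j)}|^p$ are dependent through the permutation $D$, one cannot apply Chernoff directly to them; instead I would return to the independent exponential clocks $\be_j$ and represent $T_{\tau-1}$ as a weighted count of coordinates whose clocks fall into a certain interval determined by the $\tau$-th smallest value. This count is a sum of independent indicators (one per duplicated coordinate), each weighted by $|X_j|^p \leq \|X\|_p^p / n^c$. Bernstein's inequality (\thmref{thm:bernstein}) with $M = \|X\|_p^p / n^c$ and variance proxy bounded by $\mathbb{E}[T_{\tau-1}] \cdot M$ then gives concentration of $T_{\tau-1}$ around its mean with relative error $O(n^{-(c+1)/10})$ and failure probability $e^{-\Omega(n^{(c+1)/3})}$, after which a union bound over all $\tau \leq k \leq N = n^{c+1}$ is absorbed by the exponential tail.

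The main obstacle will be handling the dependence induced by the anti-rank permutation $D$ cleanly: the raw summands $|X_{D(j)}|^p$ are not independent, and the "which coordinates are among the top $\tau-1$" event is precisely an order-statistics event on the $\be_j / |X_j|^p$'s. The standard fix, which is the crux of the Jayaram--Woodruff duplication argument, is to Poissonize by conditioning on a threshold for the exponential clocks, which decouples the indicator variables and makes Bernstein directly applicable; the price paid is only a negligible loss from the boundary $\tau$-th order statistic, which is itself concentrated by the same reasoning. Once this decoupling is carried out, the rest of the argument is routine algebraic manipulation of the formula from Proposition \propref{prop:exp_order_statistics}.
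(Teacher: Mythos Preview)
The paper does not provide its own proof of this lemma; it is cited directly from \cite{JayaramW18} without argument. Your sketch correctly identifies the structure of the original proof: rewriting $|\bz_{D(k)}|^{-p}$ via the order-statistics identity of \propref{prop:exp_order_statistics}, exploiting that $n^c$-fold duplication caps every summand $|X_{D(j)}|^p$ at $\|X\|_p^p/n^c$, and then establishing concentration of the tail sums $S_\tau$ around their expectations via a Bernstein-type bound after handling the dependence induced by the anti-rank permutation. This is essentially the Jayaram--Woodruff argument, so there is nothing in the present paper to compare against, and your outline is consistent with the source.

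One minor point to tighten: your lower bound $\mathbb{E}[S_\tau] = \Omega(n^{-c/10}\|X\|_p^p)$ relies on the claim that each $\mathbb{E}[|X_{D(j)}|^p]$ is about $\|X\|_p^p/N$ ``in an amortized sense,'' but the anti-rank permutation $D$ is not uniform, so individual expectations $\mathbb{E}[|X_{D(j)}|^p]$ can differ. The amortized statement is only that their total is $\|X\|_p^p$; to get the lower bound on $\mathbb{E}[S_\tau]$ you should instead argue directly that the complementary head sum $T_{\tau-1}$ has at most $\tau-1 \le N - n^{9(c+1)/10}$ terms each bounded by $\|X\|_p^p/n^c$, giving $\mathbb{E}[S_\tau] \ge \|X\|_p^p - (N - n^{9(c+1)/10})\cdot\|X\|_p^p/n^c$, which yields the scale you need. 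The rest of your plan is sound.
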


The next lemma shows that if we decompose a random variable $Z$ as the sum of a variable $A$ which is independent of an event $I$ and a variable $B$ that depends on $I$ but has a small value. Then, we can bound the linear dependency of $Z$ on $I$.
\begin{lemma}
\lemlab{lem:sum_independent}
\cite{JayaramW18} Let $A, B \in \mathbb{R}^d$ be random variables where $Z=A+B$. Suppose $A$ is independent of some event $E$, and let $M>0$ be such that for every $i \in[d]$ and every $a<b$ we have $\operatorname{Pr}[a \leq$ $\left.A_i \leq b\right] \leq M(b-a)$. Suppose further that $|B|_{\infty} \leq \eps$. Then if $I=I_1 \times I_2 \times \cdots \times I_d \subset \mathbb{R}^n$, where each $I_j=\left[a_j, b_j\right] \subset \mathbb{R},-\infty \leq a_j<b_j \leq \infty$ is a (possibly unbounded) interval, then
$$\operatorname{Pr}[Z \in I \mid E]=\operatorname{Pr}[Z \in I]+\O{\eps d M}.$$
\end{lemma}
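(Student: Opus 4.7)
The plan is to prove the statement by a \emph{sandwich argument}: since $|B|_\infty \le \eps$, the event $\{Z \in I\}$ is squeezed between two events that depend only on $A$, namely that $A$ lies in an $\eps$-shrunken version of $I$ and that $A$ lies in an $\eps$-expanded version of $I$. Formally, define the expanded product $I^+ = \prod_{j} [a_j - \eps, b_j + \eps]$ and the shrunken product $I^- = \prod_{j} [a_j + \eps, b_j - \eps]$ (with the convention that $I^-_j$ is empty if the shrinkage collapses the interval). From $|B|_\infty \le \eps$ we get the pointwise implications $A \in I^- \Rightarrow Z \in I \Rightarrow A \in I^+$, both on the raw probability space and after conditioning on $E$.

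Next I would plug in the independence of $A$ from $E$. Since $A \perp E$, we have $\Pr[A \in I^+ \mid E] = \Pr[A \in I^+]$ and similarly for $I^-$. Combining with the sandwich in the previous paragraph,
\begin{equation*}
\Pr[A \in I^-] \le \Pr[Z \in I \mid E] \le \Pr[A \in I^+], \qquad \Pr[A \in I^-] \le \Pr[Z \in I] \le \Pr[A \in I^+].
\end{equation*}
Hence both quantities lie in a common interval of length $\Pr[A \in I^+] - \Pr[A \in I^-]$, and bounding this gap will finish the proof.

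To bound the gap I would decompose $I^+ \setminus I^-$ coordinate-wise. A point $a \in I^+ \setminus I^-$ must have at least one coordinate $i$ with $a_i \in [a_i - \eps, a_i + \eps] \cup [b_i - \eps, b_i + \eps]$ (the boundary slabs in coordinate $i$), while still lying in $I^+$ in that coordinate. By a union bound,
\begin{equation*}
\Pr[A \in I^+ \setminus I^-] \;\le\; \sum_{i=1}^{d} \Big(\Pr[A_i \in [a_i-\eps, a_i+\eps]] + \Pr[A_i \in [b_i-\eps, b_i+\eps]]\Big) \;\le\; \sum_{i=1}^{d} 2 \cdot M \cdot 2\eps \;=\; 4 \eps d M,
\end{equation*}
using the marginal density bound $\Pr[a \le A_i \le b] \le M(b-a)$ twice per coordinate. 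Therefore $|\Pr[Z \in I \mid E] - \Pr[Z \in I]| \le \Pr[A \in I^+ \setminus I^-] = \O{\eps d M}$, as claimed.

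The only mild subtlety I anticipate is handling the unbounded intervals (where $a_j = -\infty$ or $b_j = +\infty$); in that case the corresponding boundary slab on the infinite side is vacuous and simply does not contribute to the sum, so the bound remains valid. Everything else is a routine application of the union bound together with the marginal density hypothesis, so the real content of the lemma is the sandwich-plus-independence step rather than any delicate calculation.
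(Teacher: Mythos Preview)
Your proof is correct; the sandwich argument via the $\eps$-expanded and $\eps$-shrunken boxes $I^\pm$, together with independence of $A$ from $E$ and a coordinate-wise union bound on $I^+\setminus I^-$, is exactly the natural approach. The paper itself does not supply a proof of this lemma---it is quoted from \cite{JayaramW18}---so there is nothing to compare against here, but your argument matches the standard proof.
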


Next, we bound the linear dependency of $\CountSketch_2$ on the anti-ranks. We remark that the following proof is based on the analysis of Lemma 12 in \cite{JayaramW18}.

\begin{lemma}
\lemlab{lem:indep:approx:discrete}
Let $\neg \mathrm{FAIL}$ denote the event that $\CountSketch_2$ does not fail. Suppose $\eta > n^{-c}$, we have $\PPr{\neg \mathrm{FAIL} ~|~ D(1)} = \PPr{\neg \mathrm{FAIL}} \pm \O{\eta\sqrt{\log n}}$.
\end{lemma}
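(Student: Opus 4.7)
The plan is to apply the decomposition principle of Lemma \lemref{lem:sum_independent}, splitting each $\CountSketch_2$ estimate $y_{D(k)}$ into a component that is independent of the anti-rank vector $D$ and a small residual perturbation. This parallels the approach used in Lemma 12 of \cite{JayaramW18}, but must be adapted to $p>2$ and to the additional discretization introduced by $\rnd_\eta$.

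First, I would rewrite the non-failure event $\{|y_{D(1)}| - |y_{D(2)}| > \tfrac{100R}{\mu n^{(c+1)(1/2-1/p)}}\}$ as the event that the vector $(y_{D(1)}, y_{D(2)})$ lies in a fixed product of intervals $I \subset \mathbb{R}^2$, once we condition on the realizations of $R$ and $\mu$. The uniform random $\mu \in [1/2, 3/2]$ is essential: it randomizes the threshold so that its cumulative distribution function is Lipschitz, producing a usable density bound $M$ in Lemma \lemref{lem:sum_independent}.

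Next, using Lemma \lemref{lem:bz_independence}, I would write $|\bz_{D(k)}| = (1 \pm n^{-(c+1)/10}) \cdot A_k$, where
\[
A_k = \left(\sum_{\tau=1}^k \frac{E_\tau}{\mathbb{E}[\sum_{j=\tau}^N |X_{D(j)}|^p]}\right)^{-1/p}
\]
depends only on the i.i.d. exponentials $E_\tau$ and on the unconditional expectations in the denominator, both of which are independent of any realization of $D$. Combining this with $\bu_{D(k)} = \bz_{D(k)} \cdot (1 \pm O(\eta))$ (from the $\rnd_\eta$ rounding) and with the $\CountSketch_2$ noise bounded via Lemma \lemref{lem:cs_bounded_est}, I obtain a decomposition $y_{D(k)} = A_k + B_k$, where $A_k$ is independent of $D$ and $|B_k| = O\!\left((\eta + n^{-(c+1)/10})\sqrt{\log n}\right) \cdot A_k$ with high probability. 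Plugging this into Lemma \lemref{lem:sum_independent} with $E = \{D(1) = i\}$ and the interval $I$ above, the density of $A_k$ (bounded via Proposition \propref{prop:exp_order_statistics} and the smoothing from $\mu$) combined with $\epsilon = O(\eta \sqrt{\log n}) \cdot A_k$ gives $\epsilon d M = O(\eta \sqrt{\log n})$, which is the desired conclusion.

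The main obstacle will be rigorously separating the $\CountSketch_2$ bucket noise into a $D$-independent component and a small residual, since the set of entries of $\bar{\bu}$ that collide with $D(k)$ in a bucket is itself determined by $D$. I plan to resolve this by conditioning on the hash variables $h_{i,j,k}$ and the Rademacher signs $g_{i,k}$ (all independent of $D$ by construction) before invoking Lemma \lemref{lem:bz_independence}, so that only the magnitudes of the scaled entries $|\bz_{D(j)}|$ for $j > 1$ contribute $D$-dependent randomness. These magnitudes are again $(1 \pm n^{-(c+1)/10})A_j$-approximations of $D$-independent variables, and so the aggregate bucket error inherits the same near-independence from $D$. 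The $\sqrt{\log n}$ overhead in the final bound arises from the Khintchine-type tail of Lemma \lemref{lem:cs_bounded_est} used to control this error.
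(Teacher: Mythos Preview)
Your overall strategy---decompose each relevant random variable into a $D$-independent main term plus an $O(\eta)$ perturbation via \lemref{lem:bz_independence}, control the perturbation with the Khintchine-type tail of \lemref{lem:cs_bounded_est}, and finish with \lemref{lem:sum_independent} using the uniform $\mu$ to provide a density bound---is exactly the paper's approach. Two points need correction, however.

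First, the non-failure event $|y_{D(1)}|-|y_{D(2)}|>\tfrac{100R}{\mu n^{(c+1)(1/2-1/p)}}$ is \emph{not} a product interval in $(y_{D(1)},y_{D(2)})$, even for fixed $R,\mu$, so \lemref{lem:sum_independent} does not apply in the 2D form you describe. The paper instead works with the single scalar $\Lambda=y_{D(1)}-y_{D(2)}-\tfrac{100R}{\mu n^{(c+1)(1/2-1/p)}}$ and the 1D half-line event $\{\Lambda\ge 0\}$; the density bound $M$ for the $D$-independent part of $\Lambda$ then comes directly from the uniform $\mu$ in the threshold term.

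Second, and more substantively, you cannot ``condition on the realization of $R$'' and treat it as fixed. The estimator $R$ is computed from Gaussian inner products with the scaled vector $\bv$, so $R$ itself depends on $D$. Conditioning on $R$ would reintroduce $D$-dependence that you later integrate over, and the error would not be controlled. The paper decomposes $R=U_R^*+V_R^*$ in exactly the same way as the $\CountSketch$ buckets (using $2$-stability of Gaussians in place of Khintchine) to obtain $|V_R^*|=\O{\eta\sqrt{\log n}\,\|\bz\|_2}$, and only then applies \lemref{lem:sum_independent} to the scalar $\Lambda$ built from $U_{D(1)}^*,U_{D(2)}^*,U_R^*$ plus the combined small perturbation. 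Without this step the argument does not close.
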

\begin{proof}
The idea of the proof is decomposing the estimation of $\CountSketch_2$ as the sum of an independent variable and a dependent variable with a small magnitude, so that we can apply \lemref{lem:sum_independent}. We start by decomposing each entry in the scaled vector $\bv$.
\paragraph{Decomposition of each entry of $\bv$.}
Conditioned on \lemref{lem:bz_independence} holding, for every $k<N-n^{9 (c+1) / 10}$ we have 
\[\left|\bz_{D(k)}\right|=U_{D(k)}^{1 / p}\left(1 \pm \O{n^{-(c+1) / 10}}\right)^{1 / p}=U_{D(k)}^{1 / p}\left(1 \pm \O{\frac{1}{p} n^{-(c+1) / 10}}\right),\]
where $U_{D(k)}=\left(\sum_{\tau=1}^k \frac{E_\tau}{\mathbb{E}\left[\sum_{j=\tau}^N\left|F_{D(j)}\right|^p\right]}\right)^{-1}$ is totally determined by $k$ and the hidden exponentials $E_i$, and thus, independent of the anti-rank vector $D$. Thus, we can decompose $\bz_{D(k)}$ as
\[\left|\bz_{D(k)}\right| = \calU_{D(k)}^{1 / p}+\calU_{D(k)}^{1 / p} \calV_{D(k)},\]
where $\calU_{D(k)}$ is independent of the anti-ranks and $\calV_{D(k)}$ is some random variable that satisfies $\left|\calV_{D(k)}\right|=\O{n^{-(c+1)/10}}$. Note that we round-up vector $\bz$ to the nearest power $(1+\eta)^q$ to retrieve $\bv$, so we can decompose $\bv$ as follows,
\[\left|\bz_{D(k)}\right| = U_{D(k)}^{1 / p}+U_{D(k)}^{1 / p} V_{D(k)},\]
where $U_{D(k)}$ is independent of the anti-ranks and $V_{D(k)}$ is some random variable that satisfies $\left|V_{D(k)}\right|=\O{\eta}$.

\paragraph{Decomposition of the $\CountSketch$ table.}
We consider a bucket $A_{i, j}$ in the $\CountSketch$ table for $(i, j) \in[d] \times\left[l\right]$, where $d = \Theta(\log n)$ is the number of rows and $l = n^{c^{1-2/p}}$ is the number of buckets in each row. Let $\sigma_k=\operatorname{sign}\left(\bw_k\right)$ for $k \in\left[n^{c+1}\right]$. Then we have
\[A_{i, j}= \sum_{k \in B_{i j}} \sigma_{D(k)}\left|\bv_{D(k)}\right| g_{i, D(k)}+\sum_{k \in S_{i j}} \sigma_{D(k)}\left|\bv_{D(k)}\right| g_{i, D(k)},\]
where $B_{i j}=\left\{k \leq N-n^{9 (c+1) / 10} \mid h_{i, j, D(k)}=1\right\}$ and $S_{i j}=\left\{n^{c+1} \geq k>N-n^{9 (c+1) / 10} \mid h_{i, j, D(k)}=1\right\}$. Here we define $\left\{D(N+1), \ldots, D\left(n^{c+1}\right)\right\}$ to be the set of indices $i$ with $X_i=0$. So
\[A_{i, j}=\sum_{k \in B_{i j}} g_{i, D(k)} \sigma_{D(k)} U_{D(k)}^{1 / p}+\sum_{k \in B_{i j}} g_{i, D(k)} \sigma_{D(k)} U_{D(k)}^{1 / p} V_{D(k)}+\sum_{k \in S_{i j}} g_{i, D(k)} \bv_{D(k)}.\]
We upper bound the last two terms in the next lemma.
\begin{lemma}
\lemlab{lem:bucket:bound:discrete}
For all $i, j$, we have 
\[\left|\sum_{k \in B_{i j}} g_{i, D(k)} \sigma_{D(k)} U_{D(k)}^{1 / p} V_{D(k)}\right|+\left|\sum_{k \in S_{i j}} g_{i, D(k)} \bv_{D(k)}\right|\le \O{ \eta\sqrt{\log n} \cdot \|\bv\|_2},\]
with probability $1-\O{\frac{1}{\poly(n)}}$.
\end{lemma}
\begin{proof}
For the first term, we have $|V_{D(k)}| = \O{\eta}$. Then, by \lemref{lem:cs_bounded_est}, we have
\[\left|\sum_{k \in B_{i j}} g_{i, D(k)} \sigma_{D(k)} U_{D(k)}^{1 / p} V_{D(k)}\right| \le \O{\eta}\cdot \sqrt{\log n} \cdot \|\bz\|_2,\]
with probability $1 - n^{-c}$.
For the second term, again by \lemref{lem:cs_bounded_est}, we have
\[\left|\sum_{k \in S_{i j}} g_{i, D(k)} \bv_{D(k)}\right| = \O{\sqrt{\log n} \cdot  \| \bz_{S_{ij}}\|_2^2},\]
with probability $1- n^{-c}$, where $r_{D(k)}$'s are the random sign vectors assign to $D(k)$ in the \textsc{CountSketch}. Since $S$ consists of the $n^{9(c+1)/10}$ smallest non-zero coordinates of $\bz$ we have
\[\left|\sum_{k \in S_{i j}} g_{i, D(k)}\bv_{D(k)}\right| =\O{ \eta\sqrt{\log n} \cdot \|\bz\|_2}, \]
with probability $1- n^{-c}$. 
\end{proof}
Conditioned on \lemref{lem:bucket:bound:discrete} holds, we can decompose $\left|A_{i, j}\right|$ into $\left|\sum_{k \in B_{i j}} g_{i, D(k)} \sigma_{D(k)} U_{D(k)}^{1 / p}\right|+V_{i, j}$ where $V_{i, j}$ is some random variable satisfying $\left|V_{i, j}\right|=\O{ \eta\sqrt{\log n} \cdot \|\bz\|_2}$. Let $U_{i, j}^*=\left|\sum_{k \in B_{i j}} g_{i, D(k)} \sigma_{D(k)} U_{D(k)}^{1 / p}\right|$. Let $\Gamma(k)=\left\{(i, j) \in[d] \times[l] \mid h_{i, j, D(k)}=1\right\}$. Then our estimate for $\left|\bz_{D(k)}\right|$ is $$y_{D(k)}=\operatorname{median}_{(i, j) \in \Gamma(l)}\left\{U_{i, j}^*+V_{i, j}\right\}=\operatorname{median}_{(i, j) \in \Gamma(l)}\left\{U_{i, j}^*\right\}+V_{D(k)}^*,$$
where $\left|V_{D(k)}^*\right|=\O{ \eta\sqrt{\log n} \cdot \|\bz\|_2}$ for all $k \in\left[n^{c+1}\right]$. Note that $\operatorname{median}_{(i, j) \in \Gamma(k)}\left\{U_{i, j}^*\right\}$ is independent of the anti-ranks.

\paragraph{Decomposition of $L_2$ estimation.}
We now consider our $L_2$ estimation, which is given by $R=\frac{5}{4} \operatorname{median}_j\left\{\left|\sum_{k \in[n^{c+1}]} \varphi_{k,j} \bv_k\right|\right\}$ where the $\varphi_{k, j}$ 's are i.i.d. normal Gaussians. We can write this as
\[R=\frac{5}{4} \operatorname{median}_j\left\{\left|\sum_{k \in B} \varphi_{D(k),j} \sigma_{D(k)} U_{D(k)}^{1 / p}+\left(\sum_{k \in B} \varphi_{D(k),j} \sigma_{D(k)} U_{D(k)}^{1 / p} V_{D(k)}+\sum_{k \in S} \varphi_{D(k),j} \bv_{D(k)}\right)\right|\right\},\]
where $B$ and $S$ are the union of all $B_{ij}$ and $S_{ij}$ respectfully.

The next lemma upper bounds the last two terms in the above formula.
\begin{lemma}
\lemlab{lem:L2:est:discrete}
For all $i, j$, we have 
\[\left|\sum_{k \in B} \varphi_{ D(k),j} \sigma_{D(k)} U_{D(k)}^{1 / p} V_{D(k)}\right|+\left|\sum_{k \in S} \varphi_{D(k),j} \bv_{D(k)}\right|= \O{ \eta\sqrt{\log n} \cdot \|\bz\|_2},\]
with probability $1-\O{\frac{1}{\poly(n)}}$.
\end{lemma}
\begin{proof}
Notice that Gaussian variables are $2$-stable, which means that for any vector $x \in \mathbb{R}^n$, if $\varphi_1, \ldots, \varphi_n$ are i.i.d. Gaussian, then 
\[\operatorname{Pr}\left[\left|\sum_i \varphi_i x_i\right|>\O{\sqrt{\log n}}\|x\|_2\right]=\operatorname{Pr}\left[|\varphi|\|x\|_2>\O{\sqrt{\log n}}\|x\|_2\right],\] where $\varphi$ is an independent Gaussian variable. Moreover, $\operatorname{Pr}\left[|\varphi|\|x\|_2>\O{\sqrt{\log n}}\|x\|_2\right] < n^{-c}$ due to the pdf of Gaussian variables. Therefore, replacing the inputs of the Khintchine inequality in the proof of \lemref{lem:bucket:bound:discrete} will give us the proof of \lemref{lem:L2:est:discrete}.
\end{proof}
Conditioned on \lemref{lem:L2:est:discrete} holds, we have $R=\frac{5}{4} \operatorname{median}_j\left\{\left(\left|\sum_{k \in B} \varphi_{D(k) j} \sigma_{D(k)} U_{D(k)}^{1 / p}\right|\right\}+V_R\right.$where the median is independent of the anti-ranks and $\left|V_R\right|=\O{ \eta\sqrt{\log n} \cdot \|\bz\|_2}$.

\paragraph{Correctness of the second criterion.}
We define $ U_{D(k)}^*=\operatorname{median}_{(i, j) \in \Gamma(k)}\left\{U_{(i, j)}^*\right\}$ and $$U_R^*=\frac{5}{4} \operatorname{median}_j\left(\left|\sum_{k \in B} \varphi_{D(k) j} \sigma_{D(k)} U_{D(k)}^{1 / p}\right|\right).$$ Then, we can decompose our $\CountSketch_2$ estimation and $L_2$ estimation as $y_{D(k)}=U_{D(k)}^*+V_{D(k)}^*$ and $R = U_R^* + V_R^*$. From \lemref{lem:bucket:bound:discrete} and \lemref{lem:L2:est:discrete}, we have both $U_{D(k)}^*, U_R^*$ are independent of the anti-ranks $D(k)$, and $|V_{D(k)}^*| + |V_R^*| = \O{ \eta\sqrt{\log n} \cdot \|\bz\|_2}$.

Now, we define a deterministic function $\Lambda(x, v)$, such that for vector $x$ and a scalar $v$, set $\Lambda(x, v)=x_{D(1)}-x_{D(2)}- v$. (Indeed, in our algorithm, $D(1)$ and $D(2)$ should be the anti-rank from the scaled vector $\bw$, which consists of the coordinates $x_i$ scaled by the max of the $n^c$ inverse exponential variables, instead of the duplication scaled vector $\bz$. However, we can consider $D(i)$ such that $\bz_{D(i)} = z_{D(2)}$ and set $\Lambda(x, v)=x_{D(1)}-x_{D(i)}- v$, which will not affect the correctness of the analysis). 
Notice that our second criterion is equivalent to $\Lambda(y, \frac{100 R}{\mu n^{(c+1)(1/2-1/p)}}) \ge 0$. Conditioned on \lemref{lem:bucket:bound:discrete} and \lemref{lem:L2:est:discrete}, we can decompose $\Lambda(y, \frac{100 U_R^*}{\mu n^{(c+1)(1/2-1/p)}})$ into
\[\Lambda(y, \frac{100 R}{\mu n^{(c+1)(1/2-1/p)}}) = \Lambda(U^*,\frac{100 U_R^*}{\mu n^{(c+1)(1/2-1/p)}}) + V,\] 
where $U^*$ and $U_R^*$ are independent of the anti-ranks, and $V$ satisfies $V = \O{ \eta\sqrt{\log n} \cdot \|\bz\|_2}$. Now for any interval $I$, since $\nu$ is a uniform random variable, we have
\[ \operatorname{Pr}\left[\Lambda\left(U^*, \frac{100 U_R^*}{\mu n^{(c+1)(1/2-1/p)}}\right) \in I\right]=\operatorname{Pr}\left[\mu \in I^{\prime} \cdot \frac{100 U_R^*}{n^{(c+1)(1/2-1/p)}}\right] 
=\O{|I| \cdot \frac{100 U_R^*}{n^{(c+1)(1/2-1/p)}}},\]
where $I^{\prime}$ is the result of shifting the interval $I$ by a term which is independent of $\mu$. Here $|I| \in[0, \infty]$ denotes the size of the interval $I$. Thus it suffices to lower bound $U_R^*$. We have $2 U_R^*>R>\frac{1}{2}\|\bz\|_2$ after conditioning on the success of our $L_2$ estimator, which holds with probability $1-n^{-c}$. Thus $\operatorname{Pr}\left[\Lambda\left(\vec{U}^*, \frac{100\mu U_R^*}{n^{(c+1)(1/2-1/p)}} \right) \in I\right]=\O{|I| /\frac{100 \|\bz\|_2}{n^{(c+1)(1/2-1/p)}}}$ for any interval $I$. So, applying \lemref{lem:sum_independent} by taking $A = \Lambda\left(U^*, \frac{100 U_R^*}{\mu n^{(c+1)(1/2-1/p)}} \right)$ and $B = V$, we have
\[\operatorname{Pr}\left[\Lambda(y,\frac{100 R}{\mu n^{(c+1)(1/2-1/p)}}) \geq 0\mid D(1)\right]=\operatorname{Pr}\left[\Lambda(y, \frac{100 R}{\mu n^{(c+1)(1/2-1/p)}}) \geq 0 \right] \pm \O{\eta \sqrt{\log n}},\]
for sufficiently large $c$. 
Note that \lemref{lem:bucket:bound:discrete} and \lemref{lem:L2:est:discrete} holds with probability $1-\frac{1}{\poly(n)}$, which completes the proof of the lemma.
\end{proof}

The next statement upper bounds the failure probability of the second-stage CountSketch.
\begin{lemma}
\lemlab{lem:fail:approx:discrete}
Let $\neg \mathrm{FAIL}$ be the event that $\CountSketch_2$ does not fail. Then $\PPr{\neg \mathrm{FAIL}} = \Omega(1)$.
\end{lemma}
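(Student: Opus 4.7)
The plan is to lower bound, by a positive constant, the probability that the gap $|y_{D(1)}|-|y_{D(2)}|$ exceeds the random threshold $\frac{100R}{\mu n^{(c+1)(1/2-1/p)}}$. The argument has three parts: characterize the true scaled gap via the exponential representation already used in \lemref{lem:bz_independence}, bound the CountSketch perturbation via the decompositions from \lemref{lem:bucket:bound:discrete} and \lemref{lem:L2:est:discrete}, and finally pass the comparison to the threshold using the additional $\mu$-randomness.

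First I would condition on the following high-probability events: the representation $|\bz_{D(k)}|=U_{D(k)}^{1/p}(1\pm o(1))$ from \lemref{lem:bz_independence} for $k=1,2$; the bucket-wise additive estimate $y_{D(k)}=|\bz_{D(k)}|\pm O(\eta\sqrt{\log n}\,\|\bz\|_2)$ from \lemref{lem:bucket:bound:discrete}; and the $L_2$ estimate $R\in[\|\bu\|_2/2,\,2\|\bu\|_2]$. By the duplication, for small $\tau$ the denominators $\mathbb{E}[\sum_{j=\tau}^N |X_{D(j)}|^p]$ are all $(1\pm o(1))\|X\|_p^p=(1\pm o(1))n^{c}\|x\|_p^p$, so
\[|\bz_{D(1)}|=(1\pm o(1))\,\frac{\|X\|_p}{E_1^{1/p}},\qquad |\bz_{D(2)}|=(1\pm o(1))\,\frac{\|X\|_p}{(E_1+E_2)^{1/p}}.\]
The event $\calE^{\star}=\{E_1\le\tfrac12,\ E_2\ge 2\}$ has an absolute constant probability, and on $\calE^{\star}$ a one-line calculation gives $|\bz_{D(1)}|-|\bz_{D(2)}|\ge (2^{1/p}-2^{-1/p})\|X\|_p=:c_0\|X\|_p$ for an absolute constant $c_0>0$.

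Next I would deterministically upper bound the threshold. By Hölder $\|X\|_2\le\|X\|_p\cdot n^{(c+1)(1/2-1/p)}$; by \lemref{lem:z_second_moment} applied to the duplicated vector, $\|\bz\|_2=O(\|X\|_2)$ with constant probability; and the rounded vector satisfies $\|\bu\|_2=(1\pm O(\eta))\|\bz\|_2$. Combined with $R\le 2\|\bu\|_2$ this gives
\[\frac{100R}{\mu n^{(c+1)(1/2-1/p)}}\le\frac{C_1\|X\|_p}{\mu}\]
for some absolute constant $C_1$. Since $\mu$ is uniform on $[1/2,3/2]$ and is independent of everything else, the event $\{\mu\ge 2C_1/c_0\}$ has constant probability (if $2C_1/c_0>3/2$ one reduces the constant $100$ in the algorithm, or widens the range of $\mu$; neither change affects correctness elsewhere). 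On the intersection of $\calE^{\star}$, this $\mu$-event, and the conditioning, the true gap $\ge c_0\|X\|_p$ dominates the threshold $\le c_0\|X\|_p/2$, and the CountSketch additive error $O(\eta\sqrt{\log n}\,\|\bz\|_2)=o(\|X\|_p)$ (for $\eta$ small enough that $\eta\sqrt{\log n}\cdot n^{(c+1)(1/2-1/p)}=o(1)$) is negligible compared to this gap. Hence $|y_{D(1)}|-|y_{D(2)}|>\frac{100R}{\mu n^{(c+1)(1/2-1/p)}}$ on an event of probability $\Omega(1)$.

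The main obstacle I expect is keeping all the absolute constants aligned: the constant $100$ in the algorithm, the range of $\mu$, the gap constant $c_0$, and the upper-bound constant $C_1$ must be mutually compatible so that $\{\mu\ge 2C_1/c_0\}$ is a non-degenerate sub-interval of $[1/2,3/2]$. A secondary subtlety is that the $\|\bz\|_2=O(\|X\|_2)$ bound in \lemref{lem:z_second_moment} is only a constant-probability event rather than high-probability, but this can be absorbed into the target $\Omega(1)$ by intersecting with $\calE^{\star}$ and the $\mu$-event, both of which are independent of the scaling randomness that enters $\|\bz\|_2$.
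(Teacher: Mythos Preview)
Your overall strategy is exactly the paper's: establish a constant-probability $\Theta(\|X\|_p)$ gap between $|\bz_{D(1)}|$ and $|\bz_{D(2)}|$ via the exponential representation, upper bound the threshold by $O(\|X\|_p)$ using H\"older and \lemref{lem:z_second_moment}, and absorb the CountSketch perturbation into the constants.

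There is, however, one concrete misstep. You invoke \lemref{lem:bucket:bound:discrete} to claim $y_{D(k)}=|\bz_{D(k)}|\pm O(\eta\sqrt{\log n}\,\|\bz\|_2)$ and then argue this is $o(\|X\|_p)$. That lemma does \emph{not} bound the full CountSketch additive error; it only bounds the $D$-dependent portion $V_{i,j}$ of each bucket (the $U^{1/p}V$ and tail-sum terms). The dominant part of the estimation error is the ordinary CountSketch noise from other items hashed to the same bucket, which is $\Theta\!\left(\|\bz\|_2/n^{(c+1)(1/2-1/p)}\right)$, not $O(\eta\sqrt{\log n}\,\|\bz\|_2)$. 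Moreover, your condition ``$\eta\sqrt{\log n}\cdot n^{(c+1)(1/2-1/p)}=o(1)$'' is not satisfied by the algorithm's choice $\eta=\Theta(\eps/\sqrt{\log n})$: the left-hand side is $\Theta(\eps)\cdot n^{(c+1)(1/2-1/p)}$, which is enormous.

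The fix is simple and is what the paper does: use the standard CountSketch guarantee with $\ell=(n^{c+1})^{1-2/p}$ buckets to get additive error $O(\|\bz\|_2/n^{(c+1)(1/2-1/p)})=O(\|X\|_p)$ (again via H\"older plus \lemref{lem:z_second_moment}). Crucially, this error is the \emph{same order} as both the gap $c_0\|X\|_p$ and the threshold, so you cannot treat it as negligible; you must instead absorb it into the constant comparison. This is precisely the constant-alignment issue you flagged at the end, and it is genuinely the crux of the argument rather than a side remark. The paper handles it by working at the level of $\Theta(\cdot)$ and $\Omega(\cdot)$ throughout, implicitly assuming the algorithmic constant $100$ and the range of $\mu$ are set compatibly.
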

\begin{proof}
First, we show there is a gap between the first max and the second max of the scaled vector with constant probability. By \lemref{lem:exp_inverse}, we have $|\bz_{D(1)}| = \|X\|_p / E_1^{1/p}$ and $|\bz_{D(2)}| = (E_1 / \|X\|_p^p + E_2 / \|F_{-D(1)}\|_p^p)^{-1/p}$, where $E_1$ and $E_2$ are independent exponential variables. Now, we have
\[|\bz_{D(2)}| = \left(\frac{E_1 }{ \|X\|_p^p} + \frac{E_2}{\|X\|_p^p \cdot(1\pm n^c)}\right)^{-1/p} = \frac{\|X\|_p}{(E_1 + E_2\cdot(1\pm n^c))^{1/p}}.\]
Therefore, due to the pdf of exponential variables, 
\[\PPr{|\bz_{D(1)}| - |\bz_{D(2)}| > \Theta(\|X\|_p)} > \Omega(1).\]
Now, consider $\bz_{D(i)} = \bw_{D(2)}$, which is the true value of our estimation $y_{(2)}$, obviously $|\bz_{D(i)}| \le |\bz_{D(2)}|$, so we have
\[\PPr{|\bz_{D(1)}| - |\bz_{D(i)}| > \Theta(\|X\|_p)} > \Omega(1).\]
Since $\bv = \bz\cdot (1\pm \O{\eta})$, the above equation still holds for $\bv$ if $\eta \le \frac{1}{10}$.

Moreover, by \lemref{lem:cs:err:w:descrete}, the error of the estimation to $\bv_i$ in $\CountSketch_1$ is at most $\O{\|X\|_p}$ with high probability. By the standard guarantee of CountSketch, the additive error due to $\CountSketch_2$ with $[(n^{c+1})^{1-2/p} ] \times\left[\O{\log n}\right]$ buckets is at most
\[\O{\frac{\|\bv\|_2}{n^{(c+1)(1/2-1/p)}}} = \O{\frac{\|\bz\|_2}{n^{(c+1)(1/2-1/p)}}} = \O{\frac{\|F\|_2}{n^{(c+1)(1/2-1/p)}}} = \O{\|X\|_p},\]
where the first step holds with probability $\Omega(1)$ by \lemref{lem:z_second_moment}. Thus, the error of the estimate $y_i$ is at most $\O{\|X\|_p}$ for all $i \in B$ with constant probability. Then, we have
\[\PPr{y_{(1)} - y_{(2)}> \Theta(\|X\|_p)} > \Omega(1).\]
Moreover, since $R$ is a $2$-approximation of $\|\bv\|_2$, similarly we have
\[ \frac{R}{n^{(c+1)(1/2-1/p)}} = \O{\frac{\|\bv\|_2}{n^{(c+1)(1/2-1/p)}}} = \O{\|X\|_p}.\]
Combining the above bounds, we have
\[ \PPr{y_{(1)} - y_{(2)} > \frac{100\mu R}{n^{(c+1)(1/2-1/p)}}} = \Omega(1).\]
This implies that $\PPr{\neg \mathrm{FAIL}} = \Omega(1)$.
\end{proof}

\paragraph{Correctness of \algref{alg:lp:approx:sampler:fast}.}
We show that our approximate sampler has the correct sampling distribution.
\begin{theorem}
\thmlab{thm:approx:discrete}
For $\eta = \frac{\O{\eps}}{\sqrt{\log n}}$, \algref{alg:lp:approx:sampler:fast} outputs an index $i \in[n]$ such that for each index $j \in [n]$, we have
\[\PPr{i=j}=\frac{\left|f_j\right|^p}{\|X\|_p^p}(1 \pm \eps) \pm\frac{1}{\poly(n)},\]
and outputs FAIL with probability at most $\O{1} < 1$.
\end{theorem}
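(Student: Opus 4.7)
\begin{proofof}{\thmref{thm:approx:discrete}}
The plan is to combine three ingredients that have been built up in this section: (i) max-stability of exponentials on the duplicated vector, which gives that the anti-rank $D(1)$ of $\bv$ points to a copy of coordinate $j$ with probability exactly $|f_j|^p/\|x\|_p^p$, (ii) the near-independence of the acceptance event from $D(1)$ established in \lemref{lem:indep:approx:discrete}, and (iii) the constant lower bound on the acceptance probability from \lemref{lem:fail:approx:discrete}. The remaining error terms, namely the failure of $\CountSketch_1$ to recover the argmax and the $\O{\eta}$ rounding distortion of $\bv$ relative to $\bz$, are all $\poly(\eps)+1/\poly(n)$ and can be absorbed into the two error terms of the theorem.

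First, I would condition on the good event $\calE^*$ that $\CountSketch_1$ correctly identifies the coordinate $i^* = \argmax_k |\bv_k|$ and places its copy into the set $B$, which by \lemref{lem:cs:discrete} holds with probability $1-\poly(\eps)-1/\poly(n)$. Under $\calE^*$, the index $i^*$ returned on acceptance is exactly the original coordinate whose duplicated copy achieves $D(1)$ of the scaled vector $\bv$. Since each original coordinate $j\in[n]$ corresponds to $n^c$ duplicated entries in $X$ with $X_{(j,\ell)}=x_j$, and $\|X\|_p^p = n^c\|x\|_p^p$, \propref{prop:exp_max_prob} and \lemref{lem:exp_inverse} give
\[\PPr{D(1)\text{ is a copy of }j} \;=\; \sum_{\ell\in[n^c]}\frac{|x_j|^p}{\|X\|_p^p} \;=\; \frac{|x_j|^p}{\|x\|_p^p}.\]
The rounding in the definition of $\bv_k = |x_k|\cdot\rnd_\eta(1/\be_{k,j}^{1/p})$ does not change which duplicated coordinate attains $D(1)$ (the monotone rounding preserves the order up to ties, which occur with probability $0$ for continuous exponentials), so this identity is exact.

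Next, I invoke \lemref{lem:indep:approx:discrete} to write, for every coordinate $j$ and every copy $\ell\in[n^c]$,
\[\PPr{\neg\mathrm{FAIL}\,\mid\,D(1)=(j,\ell)} \;=\; \PPr{\neg\mathrm{FAIL}} \;\pm\; \O{\eta\sqrt{\log n}}.\]
Summing against the exact distribution of $D(1)$ above,
\[\PPr{i=j,\;\neg\mathrm{FAIL}}
\;=\; \frac{|x_j|^p}{\|x\|_p^p}\cdot\bigl(\PPr{\neg\mathrm{FAIL}}\pm\O{\eta\sqrt{\log n}}\bigr)\;\pm\;\tfrac{1}{\poly(n)},\]
where the last additive term absorbs the failure of $\calE^*$. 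Dividing by $\PPr{\neg\mathrm{FAIL}}$, which is $\Omega(1)$ by \lemref{lem:fail:approx:discrete}, gives
\[\PPr{i=j\,\mid\,\neg\mathrm{FAIL}}
\;=\; \frac{|x_j|^p}{\|x\|_p^p}\cdot\bigl(1\pm\O{\eta\sqrt{\log n}}\bigr)\;\pm\;\tfrac{1}{\poly(n)}.\]
Setting $\eta$ to be a sufficiently small constant multiple of $\eps/\sqrt{\log n}$ makes the multiplicative distortion at most $\eps$, yielding the stated distribution. The FAIL probability is at most $1-\Omega(1)<1$ by \lemref{lem:fail:approx:discrete}, completing the proof.

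The main obstacle, which has essentially already been overcome by the structural lemmas, is controlling the conditional dependence of the acceptance event on $D(1)$: without duplication, the second-stage gap test would have a failure probability that shifts by $\Theta(1)$ depending on which coordinate achieves the max, destroying the sampling distribution. Duplication makes every scaled coordinate small relative to $\|\bz\|_2$ so that \lemref{lem:bz_independence} lets us decompose $\bv_{D(k)}$ into an anti-rank-independent part plus an $\O{\eta}$ perturbation, and the $\frac{100R}{\mu n^{(c+1)(1/2-1/p)}}$ threshold with a uniform random scalar $\mu$ invokes \lemref{lem:sum_independent} to trade this perturbation for an additive $\O{\eta\sqrt{\log n}}$ slack in the acceptance probability. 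Choosing $\eta$ polynomially small in $\eps/\sqrt{\log n}$ then converts this slack into the desired $(1\pm\eps)$ relative error while leaving the space bound unchanged up to polylogarithmic factors.
\end{proofof}
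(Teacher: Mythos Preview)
Your overall architecture is the same as the paper's: use max-stability on the duplicated vector to get the exact marginal of $D(1)$, invoke \lemref{lem:indep:approx:discrete} to decouple $\neg\mathrm{FAIL}$ from $D(1)$ up to $\O{\eta\sqrt{\log n}}$, and normalize using the $\Omega(1)$ lower bound from \lemref{lem:fail:approx:discrete}. However, there are two genuine gaps in how you stitch these together.

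First, your claim that ``the monotone rounding preserves the order up to ties, which occur with probability $0$'' is false. Rounding \emph{down} to the nearest power of $(1+\eta)$ maps an entire interval of distinct real values to the same bucket, so ties in $\bv$ occur with positive probability and the argmax of $\bv$ need not coincide with the argmax of the unrounded scaled vector $\bz$. The paper does not rely on this; instead it argues that \emph{conditioned on passing the statistical test}, the gap $|y_{D(1)}|-|y_{D(2)}|$ exceeds a constant multiple of the CountSketch error, which forces the output to equal $\argmax_{i,j}|x_i|/\be_{i,j}^{1/p}$ (the argmax of $\bz$, not of $\bv$). You need this step to link the algorithm's output to the random variable whose distribution you actually know.

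Second, you write that the additive $1/\poly(n)$ in your displayed equation ``absorbs the failure of $\calE^*$,'' but \lemref{lem:cs:discrete} only gives $\PPr{\calE^*}\ge 1-\poly(\eps)-1/\poly(n)$, and $\poly(\eps)$ is not $1/\poly(n)$. The paper handles this by bounding the joint conditional probability $\PPr{\calE^*,\neg\mathrm{FAIL}\mid D(1)=(j,\ell)}$ from above and below as $q\pm\O{\eps}\pm 1/\poly(n)$ (using both \lemref{lem:cs:discrete} and \lemref{lem:indep:approx:discrete} together with a union bound), so that after summing over copies and dividing through, the $\poly(\eps)$ slack becomes part of the multiplicative $(1\pm\eps)$ rather than the additive $1/\poly(n)$.
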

\begin{proof}
First, we show whenever some index $i^*$ is reported, it satisfies $i^*,j^* = \argmax_{i \in [n], j \in [n^c]} |x_i/\be_{i,j}|$. Due to our statistical test, we have $y_{(1)} - y_{(2)} > \frac{100\mu R}{n^{(c+1)(1/2-1/p)}}$. Then the gap between the estimations of the top two coordinates in $\bv$ is at least $50$ times the \textsc{CountSketch} error. This means that $\bu_{(1)} $ is strictly larger than $ \bu_{(2)}$. Let the round-up factor $\eta < \frac{1}{10}$, we have $w_{(1)} $ is strictly larger than $ w_{(2)}$, and hence we output the correct max.

Then, an index $i$ is reported if and only if the following conditions are satisfied,
\begin{enumerate}
    \item $i,j$ is the maximum coordinate in the scaled vector for some $j \in [n^c]$, denoted by $\calE_{i,j}$.
    \item $i$ is recovered by $\CountSketch_1$, denoted by event $\calE^*$.
    \item $\CountSketch_2$ does not fail, denoted by event $\neg \mathrm{FAIL}$.
\end{enumerate}
Then, \algref{alg:lp:approx:sampler:fast} reports $i$ with probability
\begin{align*}
\sum_{j \in\left[n^{c}\right]} &\PPr{\calE^*, \neg \mathrm{FAIL} \mid \calE_{i,j}} \PPr{\calE_{i,j}}.
\end{align*}
By \lemref{lem:cs:discrete}, we have
\[\PPr{\calE^* \mid \calE_{i,j}} = 1 - \poly(\eps) - \frac{1}{\poly(n)}.\]
By \lemref{lem:indep:approx:discrete}, we have
\begin{align*}
\PPr{\neg \mathrm{FAIL} \mid \calE_{i,j}} = \PPr{ \neg \mathrm{FAIL} } \pm \O{\eta \cdot \sqrt{\log n}} = q \pm \O{\eta \cdot \sqrt{\log n}} .
\end{align*}
where $q = \Omega(1)$ from \lemref{lem:fail:approx:discrete}. Thus, we have
\[\PPr{\calE^*, \neg \mathrm{FAIL} \mid \calE_{i,j}}  \le \PPr{ \neg \mathrm{FAIL} \mid \calE_{i,j}} \ge q + \O{\eta \cdot \sqrt{\log n}}. \]
Moreover, by a union bound, we have
\begin{align*}\PPr{\neg \calE^*, \mathrm{FAIL} \mid \calE_{i,j}} & ~ \le \PPr{\neg \calE^* \mid \calE_{i,j}} + \PPr{ \mathrm{FAIL} \mid \calE_{i,j}} \\
& ~ \le \poly(\eps) + \frac{1}{\poly(n)} + 1 - q - \O{\eta \cdot \sqrt{\log n}}.
\end{align*}
Therefore, we have
\[\PPr{\calE^*, \neg \mathrm{FAIL} \mid \calE_{i,j}} \ge q - \O{\eta \cdot \sqrt{\log n}} - \poly(\eps) - \frac{1}{\poly(n)}.\]
Due to our choice of $\eta = \frac{\O{\eps}}{\sqrt{\log n}}$, we have
\[\PPr{\neg \calE^*, \mathrm{FAIL} \mid \calE_{i,j}} = q \pm \left(\O{\eps} - \frac{1}{\poly(n)}\right).\]
Then, \algref{alg:lp:approx:sampler:fast} reports $i$ with probability
\begin{align*}
\sum_{j \in\left[n^{c}\right]} &\PPr{\calE^*, \neg \mathrm{FAIL} \mid \calE_{i,j}} \PPr{\calE_{i,j}} \\
&=\sum_{j \in\left[n^{c}\right]} \frac{\left|x_i\right|^p}{\|X\|_p^p}\left(q \pm \left(\O{\eps} - \frac{1}{\poly(n)}\right)\right) \\ 
&=\frac{\left|x_i\right|^p}{\|x\|_p^p}\left(q \pm \left(\O{\eps} - \frac{1}{\poly(n)}\right)\right).
\end{align*}
Hence, given that the sampler reports some index, the probability of reporting $i \in [n]$ is
\[\frac{\left|x_i\right|^p}{\|x\|_p^p}(1 \pm \eps) \pm \frac{1}{\poly(n)},\]
which proves the correctness of our approximate sampler.
\end{proof}

\paragraph{Space complexity.} We analyze the space complexity for our approximate sampler. First, we bound the size of the set of large indices $B$ recovered by $\CountSketch_1$.

\begin{lemma}
\lemlab{lem:size:B}
Recall that set $B$ recovers the large indices in $\CountSketch_1$. We have $|B| = \polylog \frac{1}{\eps}$ with probability $1 - \frac{1}{\poly(n)}$.
\end{lemma}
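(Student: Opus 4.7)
The plan is to combine the CountSketch error bound with the large-index counting lemma already established. Recall that $B$ is populated by those indices whose $\CountSketch_1$ estimate exceeds the threshold $\frac{n^{c/p}\|x\|_p}{200\log(1/\eps)}$, so I will argue that every such index must also be "large" in the true scaled vector $\bv$, and then invoke \lemref{lem:large:index}.

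First I would condition on the event $\calE_{\mathrm{CS}}$ that $\CountSketch_1$ estimates every coordinate of $\bv$ up to additive error $\frac{n^{c/p}\|x\|_p}{400\log(1/\eps)}$; by \lemref{lem:cs:err:w:descrete}, this event holds with probability $1-\frac{1}{\poly(n)}$. Conditioned on $\calE_{\mathrm{CS}}$, any index $i$ whose $\CountSketch_1$ estimate exceeds $\frac{n^{c/p}\|x\|_p}{200\log(1/\eps)}$ must have true value
\[
|\bv_i| \ge \frac{n^{c/p}\|x\|_p}{200\log(1/\eps)} - \frac{n^{c/p}\|x\|_p}{400\log(1/\eps)} = \frac{n^{c/p}\|x\|_p}{400\log(1/\eps)}.
\]
Thus $|B|$ is bounded by the number of indices that are large in the sense of \lemref{lem:large:index} with constant $C=400$.

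Applying \lemref{lem:large:index} with $C=400$ yields that, with probability $1-\frac{1}{\poly(n)}$, the number of such large indices is at most $2\cdot 400^{p}\log^{p+1}(1/\eps) = \polylog(1/\eps)$, since $p$ is treated as a constant. Taking a union bound over the failure of \lemref{lem:cs:err:w:descrete} and \lemref{lem:large:index}, we conclude that $|B| = \polylog(1/\eps)$ with probability $1-\frac{1}{\poly(n)}$, as claimed.

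I do not foresee a main obstacle beyond bookkeeping the constants; the argument is essentially a triangle-inequality reduction from a "thresholded estimate" bound to a "thresholded truth" bound, after which the previously established lemma does all the real work.
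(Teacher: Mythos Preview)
Your proposal is correct and follows essentially the same approach as the paper: both combine \lemref{lem:cs:err:w:descrete} (the $\CountSketch_1$ error bound) with \lemref{lem:large:index} (the count of truly large indices), using the same constants $200$ and $400$ and the same triangle-inequality reduction. Your write-up is slightly more explicit about the union bound and the constant $C=400$, but the argument is identical.
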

\begin{proof}
By \lemref{lem:large:index}, there are at most $\polylog \frac{1}{\eps}$ indices $k \in [n]$ such that it satisfies, $\bv_k > \frac{n^{c/p}\|x\|_p}{400\log \frac{1}{\eps}}$. By \lemref{lem:cs:err:w:descrete}, the error of $\CountSketch_1$ table is at most $\frac{n^{c/p}\|x\|_p}{400\log \frac{1}{\eps}}$. Both events happen with high probability. Since we add an index $k$ to $B$ if the estimate of $\frac{n^{c/p}\|x\|_p}{200\log \frac{1}{\eps}}$, there are at most $\polylog \frac{1}{\eps}$ items in $B$. 
\end{proof}

With the bound on $|B|$, we can show the following space complexity.
\begin{lemma}
\lemlab{lem:space}
\algref{alg:lp:approx:sampler:fast} uses $\O{n^{1-2/p}\log^2 n \log \frac{1}{\eps}}$ bits of space.
\end{lemma}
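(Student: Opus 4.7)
The plan is to tally the space used by each component maintained by \algref{alg:lp:approx:sampler:fast} separately and show that the dominant cost comes from $\CountSketch_1$. I would begin by handling the randomness: the algorithm uses $n^{c+1}$ exponential random variables $\be_{i,j}$ together with their discretizations $\rnd_\eta(1/\be_{i,j}^{1/p})$, which cannot be stored explicitly. Standard practice (as in \cite{JayaramW18}) is to generate these via Nisan's pseudorandom generator seeded with $\polylog(n)$ bits, so the discretized scaling factors needed during each update can be recomputed from the seed at the cost of an additional $\polylog(n)$-bit overhead, which is dominated by the main sketches.

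Next I would bound the cost of $\CountSketch_1$, which has $\Theta(\log n)$ rows and $n^{1-2/p}\log\frac{1}{\eps}$ buckets, giving $\Theta(n^{1-2/p}\log n\log\frac{1}{\eps})$ counters. Since the updates arise from a turnstile stream of length $\poly(n)$ over entries bounded by $\poly(n)$, each counter fits in $\O{\log n}$ bits, so the total cost is $\O{n^{1-2/p}\log^2 n\log\frac{1}{\eps}}$ bits. This is the term that drives the final bound. The hash functions $h_{i,j,k}$ and signs $g_{i,k}$ are implemented through pairwise (or bounded) independence generators using $\O{\log n}$ seed bits per row, absorbed into the per-row budget.

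For $\CountSketch_2$, I would invoke \lemref{lem:size:B} to bound the number of active buckets by $|B|=\polylog\frac{1}{\eps}$, so this sketch maintains $\O{\log n\cdot\polylog\frac{1}{\eps}}$ counters of $\O{\log n}$ bits each, using only $\polylog(n,\frac{1}{\eps})$ bits total. The $L_2$ estimator $R_{\bu}$ can be implemented by a constant number of AMS sketches with standard median-of-means amplification, yielding $\O{\log^2 n}$ bits, and the parameter $\mu$ requires only $\O{\log n}$ bits at the precision needed. All these auxiliary structures are strictly dominated by $\CountSketch_1$.

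Summing the contributions, the total space complexity is
\[
\O{n^{1-2/p}\log^2 n\log \tfrac{1}{\eps}} + \polylog(n,\tfrac{1}{\eps}) = \O{n^{1-2/p}\log^2 n \log \tfrac{1}{\eps}},
\]
as claimed. The only subtle point, which I would flag explicitly rather than reprove, is the storage of the $n^{c+1}$ discretized exponentials via a PRG; every other accounting step is a direct product of dimensions and counter widths.
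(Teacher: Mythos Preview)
Your accounting is essentially the same as the paper's: bound $\CountSketch_1$ directly as $\Theta(\log n)\times n^{1-2/p}\log\frac{1}{\eps}$ counters of $\O{\log n}$ bits each, invoke \lemref{lem:size:B} to bound $\CountSketch_2$ by $\polylog\frac{1}{\eps}$ buckets per row, and observe that the first term dominates. The paper's proof is terser but follows exactly this route.

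One point to flag: your treatment of the exponential randomness is both unnecessary here and slightly off. The paper does \emph{not} handle derandomization inside this lemma; it is deferred to a separate subsection and to the final \thmref{thm:approx:main}, which is why that theorem carries an extra $\poly(\log\log n)$ factor absent from \lemref{lem:space}. More importantly, the paper explicitly avoids Nisan's PRG because it would cost an additional $\log n$ factor (Nisan's seed length scales with the space of the algorithm, not $\polylog(n)$), and instead uses the half-space PRG of \cite{GopalanKM18}. So your claim that Nisan gives a $\polylog(n)$-bit seed ``dominated by the main sketches'' is not right as stated. For this lemma you should simply assume access to the random bits and leave derandomization to where the paper puts it.
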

\begin{proof}
$\CountSketch_1$ table has size $[n^{1-2/p}\log \frac{1}{\eps}] \times [\Theta(\log n)]$. By \lemref{lem:size:B}, we only need to maintain $\polylog \frac{1}{\eps}$ buckets in each row of $\CountSketch_2$. So, we need 
$$\O{n^{1-2/p}\log^2 n \log \frac{1}{\eps} + \log n \polylog \left(\frac{1}{\eps}\right)},$$
bits of space in total. Suppose that $\polylog \left(\frac{1}{\eps} \right) < n^{1-2/p} \log n$, the second term is dominated by the first term.
\end{proof}

\paragraph{Fast update sketch.} 
Next, we describe how to implement the discretization in a fast update time. We modify the fast-update sketch in \cite{JayaramW18} to fit in our $\CountSketch$ algorithms with random signs. Our goal is to compute the set of duplicated exponential variables $\{\rnd_\eta(1/\be_1^{1/p}), \ldots, \rnd_\eta(1/\be_{n^c}^{1/p})\}$ for each index $i \in [n]$. Note that the support size of $\rnd_\eta(x)$ for $x\in [\frac{1}{\poly(n)},\poly(n)]$ is $\O{\frac{1}{\eta}\log n}$, so we can count the number of inverse exponential variables that round up to each value in the support of $\rnd_\eta(x)$.

We define $I_q = (1+\eta)^q$ for $q \in \mathbb{Z} \cap [-Q,Q]$ where $Q=\O{\frac{1}{\eta}\log n}$. Let $\phi(x)$ be the cdf of the $1/p$-th power of the inverse exponential distribution. Then, for a standard exponential variable $\be$, the probability that $\rnd_\eta(1/\be^{1/p})$ equals $ I_q$ is $p_q = \phi(I_{q+1}) - \phi(I_q)$. The number $D_q$ of such inverse exponential variables follows a binomial distribution $\mathrm{Bin}(n^c, p_q)$.

Now, upon each arrival, there are $D_q$ updates with value $I_q$ that need to be hashed into $\CountSketch_2$. We can generate variables from multinomial distribution to compute $a_{i,j}^q$, which is the number of items in the $D_q$ updates that are hashed to bucket $(i,j)$ in the CountSketch table. Our next goal is to calculate $\sum_{t=1}^{a_{i,j}^q} g_t \cdot I_q$, which is the additive value to the estimator of bucket $(i,j)$. Note that $g_t$'s are Rademacher variables, so the sum follows the distribution $\mathrm{Bin}(a_{i,j}^q, 1/2) - a_{i,j}^q$. Therefore, it suffices to generate one binomial variable for each bucket to compute the sum. 

For $\CountSketch_1$, we only hash $\Theta(\log n)$ items upon each arrival. To make it consistent with $\CountSketch_2$, we find the smallest $q$ such that $D_q$ is not zero, and we use $I_q$ to simulate the maximum of the $n^c$ duplications. Then, for each item hashed to $\CountSketch_1$, we generate a geometric variable $g_p$ with parameter $p = \frac{1}{L}$, and we hash it to the bucket located $g_p$ positions after the bucket that receives the previous item.

Last, in our $L_2$ estimation, we use Gaussian variables to scale our vector instead of random signs. However, we can use a similar way to speed up our calculation. Consider $\sum_{t=1}^{a_{i,j}^q} \phi_t \cdot I_q$ where $\phi_t \sim \calN(0,1)$ are the Gaussian variables. Utilizing the $2$-stability of Gaussian variables, we have $\sum_{t=1}^{a_{i,j}^q} \phi_t \cdot I_q \sim g \sqrt{a_{i,j}^q} I_q$ where $g \sim \calN(0,1)$. 
Thus, it suffices to generate one Gaussian variable for each bucket to compute the sum. 

Next, we state the correctness of our fast-update sketch.
\begin{lemma}
Our fast update sketch results in the same distribution over the CountSketch table and the $L_2$-estimation scheme as the original algorithm. Upon each arrival in the stream, the update time is $\frac{1}{\eps} \polylog(n,\frac{1}{\eps})$.
\end{lemma}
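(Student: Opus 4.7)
The plan is to split the argument into two independent parts: (i) a distributional coupling showing that the fast-update sketch produces exactly the same joint distribution over bucket contents as the naive implementation that explicitly stores all $n^c$ duplicates, and (ii) a direct accounting of the per-arrival cost.

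For the distributional equivalence, I would walk through the layers of randomness and match them one at a time. First, for a fixed arrival, the counts $\{D_q\}$ of how many of the $n^c$ i.i.d.\ exponentials $\be_{i,j}$ satisfy $\rnd_\eta(1/\be_{i,j}^{1/p}) = I_q$ are jointly multinomial with parameters $n^c$ and $\{p_q\}$; the fast sketch samples from exactly this joint distribution. Conditioned on $(D_q)$, the modified CountSketch hashes each duplicate independently and uniformly, so the per-bucket occupancies $a_{i,j}^q$ are themselves multinomial given $D_q$, which is precisely what the fast sketch samples via a multinomial draw per value of $q$. Conditioned on $a_{i,j}^q$, the Rademacher contribution $\sum_{t=1}^{a_{i,j}^q} g_t I_q$ has the same distribution as $I_q(2Y - a_{i,j}^q)$ with $Y \sim \mathrm{Bin}(a_{i,j}^q, 1/2)$, matching the fast sampling rule; for the Gaussian-scaled $L_2$ sketch, $2$-stability of $\mathcal{N}(0,1)$ gives $\sum_t \phi_t I_q \stackrel{d}{=} I_q \sqrt{a_{i,j}^q} \cdot g$ with $g \sim \mathcal{N}(0,1)$, which is the formula the fast scheme implements. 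Summing contributions over $q$ is additive and independent across $q$, so the joint bucket state is preserved. For $\CountSketch_1$, only the per-coordinate maximum $\bv_i$ is hashed: the rounded maximum is identified from $(D_q)$, and geometric jumps along each row exactly simulate uniform hashing at each bucket independently with probability $1/L$, in time proportional to the number of non-empty buckets.

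For the update time, the dominant cost per arrival is iterating over the $2Q+1 = \Theta(\eta^{-1}\log n)$ support points of $\rnd_\eta$. For each $q$, sampling $D_q$ from a binomial costs $\polylog(n)$, and for each non-empty bucket we draw one additional binomial (or Gaussian) in $\polylog(n)$ time. The number of non-empty buckets touched per arrival across both CountSketch tables is at most $\polylog(n,1/\eps)$, so the total work per arrival is $Q \cdot \polylog(n, 1/\eps) = \frac{\log n}{\eta}\polylog(n,1/\eps)$; substituting $\eta = \Theta(\eps/\sqrt{\log n})$ yields $\frac{1}{\eps}\polylog(n,1/\eps)$ as claimed.

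The main obstacle is making the distributional equivalence rigorous: verifying marginals for each bucket is easy, but one must argue that the joint dependency structure across buckets, across rows, and between the two CountSketch tables plus the $L_2$ sketch is faithfully reproduced, which requires a careful nested-conditioning argument going $D_q \to a_{i,j}^q \to$ sign contributions. A secondary subtlety is the truncation $q \in [-Q,Q]$: values of $1/\be^{1/p}$ outside this range occur with probability at most $1/\poly(n)$ per duplicate, so a union bound over the polynomially many stream updates and duplicates bounds the total variation distortion between the truncated and untruncated schemes by $1/\poly(n)$, which is absorbed into the overall failure probability.
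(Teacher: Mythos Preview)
Your proposal is correct and follows essentially the same approach as the paper: the paper also argues distributional equivalence layer by layer (geometric variables for $\CountSketch_1$ hashing, multinomial for the bucket occupancies $a_{i,j}^q$ in $\CountSketch_2$, a binomial for the Rademacher sum, and $2$-stability for the Gaussian $L_2$ sketch, the latter deferred to \cite{JayaramW18}), and then bounds update time by counting the $\Theta(\eta^{-1}\log n)$ support points and the $\polylog$ buckets of $\CountSketch_2$. Your write-up is in fact slightly more careful than the paper's, explicitly addressing the joint (not just marginal) distribution across buckets and the $[-Q,Q]$ truncation, and stating the Rademacher-sum law as $I_q(2Y - a_{i,j}^q)$ rather than the paper's (apparently typo'd) $\mathrm{Bin}(a_{i,j}^q,1/2) - a_{i,j}^q$.
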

\begin{proof}
In the $\CountSketch_1$ table, we hash each item to each bucket with probability $\frac{1}{L}$. The geometric variable with parameter $\frac{1}{L}$ characterizes the distribution of the number of buckets between two consecutive buckets that have the hashed item. Thus, our hashing scheme gives the same distribution.
For the $\CountSketch_2$ table, the multinomial variables give the correct hashing distribution by generating the number of items that are hashed to each bucket. Then, each bucket is increased by $\sum_{t=1}^{a_{i,j}^q} g_t \cdot I_q$ where $g_t$'s are Rademacher variables by our algorithm construction. Since $\sum_{t=1}^{a_{i,j}^q} g_t \cdot I_q$ has the same distribution as $\mathrm{Bin}(a_{i,j}^q, 1/2) - a_{i,j}^q$, our fast update sketch gives precisely the same distribution as the original two-stage CountSketch.
The correctness of the fast $L_2$ estimation follows from Lemma 6 in \cite{JayaramW18}. 

Now, we compute the update time. In $\CountSketch_1$, we generate $\O{\log n}$ geometric random variables upon each stream update. In $\CountSketch_2$, we have $\log n \polylog \left(\frac{1}{\eps}\right)$ buckets as specified in \lemref{lem:space}. For each bucket, it takes $\O{\log n}$ time to compute the additive value when an item arrives. Due to our choice of discretization factor $\eta = \frac{\eps}{\polylog \frac{1}{\eps}}$, the update time is $\frac{1}{\eps} \polylog(n,\frac{1}{\eps})$.
\end{proof}

\paragraph{Algorithm derandomization.} 
Note that the prior analysis relies on the assumption that independent exponential random variables and geometric random variables can be both generated and stored efficiently. 
We remark that since we are focused on achieving tight space bounds, we cannot afford to lose additional polylogarithmic factors from pseudorandom generators such as Nisan's PRG~\cite{Nisan92}. 

Instead, to derandomize our algorithm, we use the same approach as \cite{JayaramW18}, which we include here for completeness. 
The approach leverages the pseudorandom generator (PRG) developed by \cite{GopalanKM18}, which fools specific families of Fourier transforms, including, crucially for our setting, collections of half-space queries. 

Formally, for each $i \in [\lambda]$, a half-space query $H_i:\mathbb{R}^n \to \{0,1\}$ is defined on an input $Z=(z_1,\ldots,z_n)$ by the indicator function $\mathbf{1}\left[\alpha_1^{(i)}z_1+\ldots+\alpha_n^{(i)}z_n>\theta_i\right]$, where $\alpha^{(i)}\in\mathbb{Z}^n$ and $\theta_i\in\mathbb{Z}$ for all $i\in[n]$.

\begin{definition}[$\lambda$-half-space tester]
Given input $Z=(z_1,\ldots,z_n)$, a $\lambda$-half-space tester is a Boolean function $\sigma(H_1(Z),\dots,H_\lambda(Z))\in\{0,1\}$, where $\sigma:\{0,1\}^\lambda\to\{0,1\}$ and $H_1,\dots,H_\lambda$ are half-space queries. 
We say the tester is $M$-bounded if each input coordinate $z_i$ is drawn from a distribution over integers bounded in magnitude by $M$, and all coefficients $\alpha_j^{(i)}$ and thresholds $\theta_i$ also have magnitude at most $M$. 
\end{definition}

We now state a key result concerning the PRG for half-space testers, following Lemma 7 and Proposition 8 of \cite{JayaramW18}.

\begin{theorem}[\cite{GopalanKM18,JayaramW18}]
\thmlab{thm:gkm:prg}
Let $\calD$ be a distribution over $\{-M,\ldots,M\}$ that can be sampled using $\O{\log M}$ random bits. 
Let $Z=(z_1,\ldots,z_n)$ where each $z_i\sim \calD$ independently. 
Then for any $\eps>0$ and $c\ge1$, there exists $\ell=\O{\lambda\log\frac{nM}{\eps} (\log\log\frac{nM}{\eps})^2}$ and a deterministic function $F:\{0,1\}^\ell\to\{-M,\ldots,M\}^n$ such that for any $\lambda$-half-space tester $\sigma_H$, we have
\[\left\lvert\EEx{Z\sim\calD^n}{\sigma_H(Z)}- \EEx{y\sim\{0,1\}^\ell}{\sigma_H(F(y))}\right\rvert\le\eps.
\]
Furthermore, if $F(y)\in\{-M,\ldots,M\}^n$, then each coordinate of $F(y)$ can be computed using $\O{\ell}$ space and in $\polylog(nM)$ time.
\end{theorem}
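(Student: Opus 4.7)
The plan is to reduce the statement to the halfspace-fooling PRG of Gopalan–Kane–Meka and then handle the joint Boolean combination of $\lambda$ halfspaces and the non-uniform sampling distribution $\calD$ by standard black-box composition.

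First I would observe that since $\calD$ can be sampled using $\O{\log M}$ uniform random bits, there is a deterministic sampler $S:\{0,1\}^{\O{\log M}}\to\{-M,\ldots,M\}$ with $S(u)\sim\calD$ when $u$ is uniform. Stacking $n$ copies yields a map $\widetilde S:\{0,1\}^{n\cdot \O{\log M}}\to\{-M,\ldots,M\}^n$ whose output distribution is $\calD^n$. The next step is to rewrite each individual halfspace query $H_i(Z)=\mathbf{1}[\langle\alpha^{(i)},Z\rangle>\theta_i]$ as a halfspace query $\widetilde H_i(u)=\mathbf{1}[\langle\widetilde\alpha^{(i)},u\rangle>\widetilde\theta_i]$ over the uniform bits $u\in\{0,1\}^{n\cdot\O{\log M}}$; this is possible because each coordinate of $\widetilde S(u)$ can be expressed as an integer linear combination of the relevant bits (with coefficients bounded by $\O{M}$), so the inner product $\langle\alpha^{(i)},\widetilde S(u)\rangle$ remains linear in $u$ with integer coefficients of magnitude $\O{M^2}$. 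Thus the entire tester $\sigma_H\circ\widetilde S$ is itself a $\lambda$-halfspace tester (with slightly enlarged magnitude bound) over the uniform cube.

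Next I would invoke the main result of \cite{GopalanKM18}: for a single halfspace query on $N$ uniform input bits with integer coefficients bounded by $M'$, there is a PRG with seed length $\O{\log\frac{NM'}{\eps}(\log\log\frac{NM'}{\eps})^2}$ that fools the query to error $\eps$. To simultaneously fool all $\lambda$ halfspaces $\widetilde H_1,\ldots,\widetilde H_\lambda$ (and therefore any Boolean combination $\sigma$ of them, by a union bound over the at-most-$\lambda$ individual halfspace events that determine the tester's output), one sets the per-halfspace error to $\eps/\lambda$ and multiplies the seed length by $\lambda$, giving total seed length $\ell=\O{\lambda\log\frac{nM}{\eps}(\log\log\frac{nM}{\eps})^2}$. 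Defining $F=\widetilde S\circ \mathrm{PRG}$, the triangle inequality between the true distribution $\calD^n$ and the PRG-induced distribution gives the claimed fooling bound.

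Finally, for the space and time claim, I would argue that each output coordinate $F(y)_i$ depends only on the $\O{\log M}$ bits of $\widetilde S$ corresponding to coordinate $i$, which in turn are recovered from $y$ by evaluating the GKM PRG at a specific set of indices. The GKM construction (a composition of hash families and pseudorandom restrictions) admits a streaming/random-access evaluation using only $\O{\ell}$ bits of workspace and $\polylog(nM)$ arithmetic operations per output bit, so applying $S$ on those bits yields each coordinate in the claimed space and time. The main subtle point — and the part I expect to require the most care — is verifying that the \emph{composition} with the non-uniform sampler $\widetilde S$ preserves the halfspace structure with only a polynomial blowup in coefficient magnitude (so that the GKM seed-length parameters remain as stated), and handling the Boolean combiner $\sigma$ by reducing to a union bound over the individual halfspace events rather than trying to fool the combiner directly.
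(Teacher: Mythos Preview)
The paper does not give a proof of this theorem; it is quoted as a black-box result from \cite{GopalanKM18} (as repackaged in Lemma~7 and Proposition~8 of \cite{JayaramW18}). So there is nothing in the paper to compare your sketch against. That said, your sketch has two genuine gaps that would need to be repaired before it could stand on its own.

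First, your step that rewrites each halfspace $H_i(Z)$ as a halfspace $\widetilde H_i(u)$ over the uniform bits relies on the claim that ``each coordinate of $\widetilde S(u)$ can be expressed as an integer linear combination of the relevant bits.'' This is false in general: an arbitrary sampler $S:\{0,1\}^{\O{\log M}}\to\{-M,\ldots,M\}$ need not be a linear (or even affine) function of its input bits, so the composition $\langle\alpha^{(i)},\widetilde S(u)\rangle$ is typically a nonlinear Boolean function and the resulting set is not a halfspace over $u$. The actual GKM construction sidesteps this entirely: their PRG directly fools halfspaces (more generally, Fourier shapes) over \emph{arbitrary product distributions} with bounded support, so no reduction to the uniform cube is needed.

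Second, your handling of the Boolean combiner $\sigma$ via ``a union bound over the at-most-$\lambda$ individual halfspace events'' does not work. Fooling each $H_i$ to error $\eps/\lambda$ only controls the $\lambda$ marginal probabilities $\PPr{H_i=1}$; it says nothing about the joint distribution of $(H_1,\ldots,H_\lambda)$, which is what $\sigma$ depends on. Two distributions can match all $\lambda$ marginals and still disagree by a constant on some Boolean combination. The correct route, and the one taken in \cite{GopalanKM18}, is to show that their generator fools \emph{products} of bounded functions (Fourier shapes), from which fooling intersections and arbitrary Boolean combinations of $\lambda$ halfspaces follows directly; the factor $\lambda$ in the seed length arises inside that analysis, not from a post-hoc union bound.
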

We also require the following result:
\begin{lemma}
\lemlab{lem:derandom:many}
\cite{JayaramW18,WoodruffZ21}
Let $q \ge 1$ be a constant, and consider vectors $f_1, \ldots, f_q \in \mathbb{Z}^n$, each bounded by $M = \poly(n)$ and defined by updates from a stream $S$ within specified intervals $[t_{i,1}, t_{i,2}]$ for $i \in [q]$. 
Let $\calA$ be a streaming algorithm that maintains linear sketches $A \cdot f_1, \ldots, A \cdot f_q$, with $A \in \mathbb{R}^{k \times n}$ having i.i.d. entries that can be sampled with $\O{\log n}$ bits. 
Suppose that the output is given by $g(A \cdot f_1, \ldots, A \cdot f_q)$, where $g: \mathbb{R}^q \to \mathbb{R}$ is a composition function.

For any constant $c \ge 1$, $\calA$ can be implemented using a random matrix $A'$ with $\O{k \log n (\log \log n)^2}$ bits of space, satisfying:
\[|\PPr{g(A \cdot f_1, \ldots, A \cdot f_q) = y} - \PPr{g(A' \cdot f_1, \ldots, A' \cdot f_q) = y}| < n^{-ck},\]
for all $y \in \mathbb{R}^k$ with entry-wise bit complexity $\O{\log n}$.
\end{lemma}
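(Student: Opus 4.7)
The plan is to apply the pseudorandom generator $F$ from \thmref{thm:gkm:prg}, taking the $nk$ entries of $A$ as the randomness $Z \in \{-M,\ldots,M\}^{nk}$ and replacing $Z$ by $F(y)$ for a seed $y \in \{0,1\}^\ell$ to obtain the derandomized matrix $A'$. Because each entry of $A$ is drawn i.i.d.\ from a distribution $\calD$ over $\{-M,\ldots,M\}$ that is samplable with $O(\log n)$ bits, this substitution meets the PRG's hypotheses, and the theorem guarantees that the expectation of any $\lambda$-half-space tester on $F(y)$ is within $\eps$ of its expectation on $Z$.

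The reduction to half-space testers is the core step. For a fixed target $y^\star \in \mathbb{R}^k$ with entry-wise bit complexity $O(\log n)$, we must express the indicator $\mathbf{1}[g(Af_1,\ldots,Af_q)=y^\star]$ as a Boolean combination of a few half-space queries on $Z$. Each sketch entry $(Af_i)_j = \sum_{l \in [n]} A_{j,l}(f_i)_l$ is an integer linear form in $Z$ with coefficients bounded by $|(f_i)_l| \le M$ and values in $\{-N,\ldots,N\}$ for $N=\poly(n)$; hence every threshold condition of the form $(Af_i)_j \ge \theta$ is literally a half-space query on $Z$. Because $q$ is constant, $g$ acts coordinate-wise, and the output has bounded bit complexity, the event $\{g=y^\star\}$ decomposes, row by row, into a constant number of such threshold comparisons involving the relevant sketch coordinates, so the overall tester uses only $\lambda = O(k)$ half-space queries whose coefficients and thresholds are determined by $y^\star$ and the $f_i$'s.

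Feeding $\lambda=O(k)$, target error $\eps = n^{-c(k+1)}$, and $M=\poly(n)$ into \thmref{thm:gkm:prg} yields a seed length
\[\ell = O\!\bigl(\lambda \log(nM/\eps)\,(\log\log(nM/\eps))^2\bigr) = O\!\bigl(k \log n\,(\log\log n)^2\bigr),\]
matching the claimed space bound, together with the PRG guarantee
\[\bigl|\Pr_Z[g(Af_1,\ldots,Af_q)=y^\star] - \Pr_y[g(A'f_1,\ldots,A'f_q)=y^\star]\bigr| \le \eps < n^{-ck}\]
for every target $y^\star$ of the stated bit complexity. The efficient streaming implementation — computing each coordinate of $A'$ in $\polylog(n)$ time and $O(\ell)$ space on demand — is inherited directly from the second part of \thmref{thm:gkm:prg}.

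The main obstacle is justifying the $\lambda = O(k)$ counting rather than the naive $\lambda = O(k \log n)$ that would arise from exactly recovering every sketch coordinate via binary search over its $\poly(n)$ possible values; that naive route would give seed length $O(k \log^2 n (\log \log n)^2)$, short by a $\log n$ factor. Keeping $\lambda$ linear in $k$ relies on the structural fact that the output of $g$ has bounded bit complexity per coordinate and is computed from the sketch via constant-depth threshold comparisons (the use case in this paper being coordinate-wise medians, argmaxes, and a few gap tests), so each of the $k$ output bits is directly a half-space query rather than a summary of many bits of the sketch.
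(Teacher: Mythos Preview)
The paper does not prove this lemma; it is cited from \cite{JayaramW18,WoodruffZ21} and invoked as a black box in the derandomization leading to \thmref{thm:approx:main}, so there is no in-paper proof to compare against.

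Your overall strategy---treat the $nk$ entries of $A$ as the randomness $Z$ and invoke the half-space PRG of \thmref{thm:gkm:prg}---is the right one and matches the cited works. You also correctly isolate the crux: one must obtain $\lambda = O(k)$ half-space queries rather than the naive $\lambda = O(k\log n)$, since the latter would inflate the seed length by a $\log n$ factor.

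Your resolution of that crux, however, does not prove the lemma as stated. You argue $\lambda = O(k)$ by assuming that $g$ is ``computed from the sketch via constant-depth threshold comparisons,'' but the lemma imposes no such structure on $g$. For an arbitrary composition function the event $\{g(Af_1,\ldots,Af_q)=y^\star\}$ is an arbitrary Boolean function of the $qk$ integer sketch coordinates, each ranging over $\poly(n)$ values; since the half-space queries in \thmref{thm:gkm:prg} are non-adaptive, $\lambda$ thresholds distinguish at most $2^\lambda$ sketch configurations, and recovering enough information to evaluate a general such function genuinely needs $\lambda = \Theta(qk\log n)$. Your final paragraph is really an acknowledgment of this gap: you observe that the naive route falls short and then appeal to the specific $g$'s (medians, argmaxes, gap tests) that happen to arise in this paper. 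That is a legitimate remark about the \emph{application}, but it is not a proof of the lemma in the generality in which it is stated. To close the gap you would either need to make the structural assumption on $g$ explicit in the hypothesis, or use a different argument---for instance, a hybrid over the $k$ rows of $A$, derandomizing one row at a time so that only $O(q)=O(1)$ linear forms are in play per PRG instance, which is closer to how the cited sources obtain the stated seed length.
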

We now turn to the task of derandomizing our algorithm. 
Recall that the algorithm relies on two sources of randomness: the hash functions and random signs in $\CountSketch$, and the exponential random variables. 
Following the approach of \cite{JayaramW18}, we apply two separate pseudorandom generators. 
The first instance simulates the random variables in $\CountSketch$ and the second instance simulates the exponential random variables. 
Let $R_c$ and $R_e$ respectively denote the random bits required for $\CountSketch$ and for the exponential variables. 
Since there are at most $\poly(n)$ random variables and each have magnitude $\poly(n)$, then at most $\poly(n)$ random bits are required for both $R_c$ and $R_e$. 

For a fixed index $i\in[n]$, we can view the behavior of the algorithm as a Boolean tester $\calA_i(R_c,R_e)\in\{0,1\}$, which evaluates to $1$ if the $L_p$ sampler selects coordinate $i$. 
Our goal is to show that we can replace both $R_c$ and $R_e$ with pseudorandom inputs generated by the PRG without significantly changing the output distribution. 
Specifically, we show there exist functions $F_1$ and $F_2$ such that
\[
\left\lvert\PPPr{R_c,R_e}{\calA_i(R_c,R_e)=1}-\PPr{y_1, y_2}{\calA_i(F_1(y_1),F_2(y_2))=1}\right\rvert\le \frac{1}{n^C},
\]
for any constant $C>0$, where $y_1$ and $y_2$ are seeds of length $\O{n^{1-2/p}\log^2 n\log\frac{1}{\eps}(\log\log n)^2}$ chosen appropriately for the desired level of accuracy.

We thus have the following guarantees of our algorithm:
\begin{theorem}
\thmlab{thm:approx:main}
Given a general turnstile stream $x$ and an accuracy parameter $\eps \in (0,1)$, there is an approximate sampler which outputs an index $i \in[n]$ with probability $\frac{|x_i|^p}{\|x\|_p^p}\cdot(1\pm \eps)$,
and outputs FAIL with probability at most $0.1$. The algorithm uses
$n^{1-2/p} \log^2 n \log \frac{1}{\eps}\cdot\poly(\log\log(n))$ bits of space to run. The update time is $\frac{1}{\eps} \cdot \polylog\left(n,\frac{1}{\eps}\right)$. In addition, it gives a $(1+\eps)$-estimation to the sampled item using extra $\frac{1}{\eps^2} n^{1-2/p} \log^2 n \log \frac{1}{\eps}\cdot\poly(\log\log(n))$ bits of space.
\end{theorem}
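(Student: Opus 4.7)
The plan is to assemble this theorem by combining the distributional correctness result \thmref{thm:approx:discrete}, the space bound \lemref{lem:space}, the fast-update sketch implementation, and a derandomization argument using a PRG for half-space testers. First I would invoke \thmref{thm:approx:discrete} with discretization factor $\eta = \Theta(\eps/\sqrt{\log n})$, which already establishes that \algref{alg:lp:approx:sampler:fast} outputs each index $i\in[n]$ with the desired probability $\frac{|x_i|^p}{\|x\|_p^p}(1\pm\eps)\pm\frac{1}{\poly(n)}$ and fails with constant probability bounded away from $1$; by standard repetition (or by rescaling internal constants) the FAIL probability can be driven down to $0.1$. The space bound from \lemref{lem:space} then gives $\O{n^{1-2/p}\log^2 n\log\frac{1}{\eps}}$ bits assuming access to sufficient independent randomness.

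Next I would argue that the fast-update implementation preserves both the sampling distribution and the space bound while yielding update time $\frac{1}{\eps}\polylog(n,\frac{1}{\eps})$. This follows from the three structural facts already proved: geometric variables correctly simulate which items land in $\CountSketch_1$, multinomial/binomial variables correctly simulate the signed contributions to $\CountSketch_2$ (using that $\sum_{t=1}^{a}g_t\cdot I_q \sim \mathrm{Bin}(a,1/2)-a$), and $2$-stability of Gaussians correctly simulates the $L_2$ estimator. Per arrival, we update $\O{\log n}$ positions in $\CountSketch_1$ and $\log n\cdot\polylog(1/\eps)$ buckets in $\CountSketch_2$, each costing $\O{\log n}$ time, so choosing $\eta = \eps/\polylog(1/\eps)$ yields the claimed update time.

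The main obstacle is derandomization: the vanilla implementation requires $\poly(n)$ random bits for independent exponentials, hash choices, Rademacher/Gaussian signs, and geometric samples, and naive storage would cost an additional $\log n$ factor rather than the desired $\poly(\log\log n)$ overhead. To overcome this, I would follow the template from \cite{JayaramW18}: for each fixed target index $i\in[n]$, view the indicator of the event ``the sampler outputs $i$'' as a $\lambda$-half-space tester $\calA_i(R_c,R_e)$ in the CountSketch randomness $R_c$ and the exponential-variable randomness $R_e$, where $\lambda=\O{n^{1-2/p}\log n\log\frac{1}{\eps}}$ is the total number of half-space comparisons arising from bucket medians, sign decisions, the selection test for $B$, and the final gap criterion $|y_{D(1)}|-|y_{D(2)}| > \frac{100R}{\mu n^{(c+1)(1/2-1/p)}}$. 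Applying \thmref{thm:gkm:prg} separately to $R_c$ and $R_e$ with error parameter $n^{-C}$ for a sufficiently large constant $C$, together with \lemref{lem:derandom:many} to handle the linear-sketch structure, replaces both randomness sources by pseudorandom seeds of length $\O{\lambda\log n(\log\log n)^2}=\O{n^{1-2/p}\log^2 n\log\frac{1}{\eps}\cdot\poly(\log\log n)}$ while perturbing the sampling distribution by at most $\frac{1}{\poly(n)}$ per output index. A union bound over all $n+1$ outcomes (the $n$ indices plus FAIL) absorbs into the existing $\frac{1}{\poly(n)}$ additive slack.

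Finally, for the $(1+\eps)$-estimation of $x_{i^*}$, I would attach an additional CountSketch on the scaled vector with $\Theta(\eps^{-2}n^{1-2/p})$ buckets and $\Theta(\log n)$ rows, using the same two-stage discretized architecture. Since \lemref{lem:max:heavy} guarantees that $x_{i^*}/\be_{i^*}^{1/p}$ is an $\Omega(1/\log^2 n)$ heavy hitter of the scaled vector, the standard CountSketch error $\O{\|\bz\|_2/\sqrt{\text{\# buckets}}}$ translates to a $(1\pm\eps)$-relative error on the sampled coordinate, after unscaling by $\be_{i^*}^{1/p}$ as in \corref{cor:sample:tail}. This adds the stated $\frac{1}{\eps^2}n^{1-2/p}\log^2 n\log\frac{1}{\eps}\cdot\poly(\log\log n)$ bits, completing the proof.
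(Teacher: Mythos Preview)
Your proposal follows essentially the same route as the paper: invoke \thmref{thm:approx:discrete} for correctness, \lemref{lem:space} for space, the fast-update discretization for update time, and then derandomize $R_c$ and $R_e$ separately via \lemref{lem:derandom:many} and \thmref{thm:gkm:prg} to pick up only a $\poly(\log\log n)$ overhead.

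The one place you diverge is the $(1+\eps)$-estimation of the sampled coordinate. The paper does not appeal to \lemref{lem:max:heavy} or to the unscaling argument of \corref{cor:sample:tail}; instead it runs an extra $\CountSketch$ on the vector $\bv\in\mathbb{R}^n$ with $\frac{1}{\eps^2}n^{1-2/p}\log\frac{1}{\eps}$ buckets and $\Theta(\log n)$ rows, and reuses the analysis of \lemref{lem:cs:err:w:descrete} to get additive error $\eps\cdot\frac{n^{c/p}\|x\|_p}{400\log(1/\eps)}$. The multiplicative guarantee then falls out directly from the fact that any reported index already passed the $\CountSketch_1$ threshold, so $|\bv_i|>\frac{n^{c/p}\|x\|_p}{400\log(1/\eps)}$. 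Your route via \lemref{lem:max:heavy} is shakier here: that lemma concerns ratios of the form $\alpha_i/\be_i$, whereas the scaling in \algref{alg:lp:approx:sampler:fast} is $x_i/\be_i^{1/p}$, so the $L_2$-heavy-hitter statement you want does not follow from it as stated. The paper's threshold-based argument sidesteps this entirely and also explains why the bucket count carries the extra $\log\frac{1}{\eps}$ factor you omitted.
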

\begin{proof}
We derandomize the $\poly(n)$ random variables as follows. 
Let $R_c$ denote the randomness used to generate randomness for $\CountSketch$ and let $R_e$ denote the randomness used to generate the exponential variables. 
For any fixed index $i\in[n]$ and fixed exponential randomness $R_e$, define the tester $\calA_{i,R_e}(R_c)$ that determines whether index $i$ is selected by the $L_p$ sampler. 
Here, $R_e$ is fixed, while $R_c$ is the remaining source of randomness.

We show that $\calA_{i,R_c}(R_g)$ can be fooled using a pseudorandom generator with seed length $\O{n^{1-2/p}\log^2 n(\log\log n)^2}$. 
With $R_e$ fixed, the sketch can be written as $AMx$, where $A$ is the $\CountSketch$ and $Z$ is the matrix consisting of the exponential scalings of the duplications. 
Then we can apply \lemref{lem:derandom:many} to obtain a deterministic function $F_1$ that fools all such tests with a seed $y_1$ of length $\O{n^{1-2/p}\log^2 n\log\frac{1}{\eps}(\log\log n)^2}$. 

To derandomize the exponential variables, we fix a pseudorandom instantiation $F_1(y_1)$ and apply a hybrid argument. 
We define $\calB_{i,F_1(y_1)}(R_e)$ as the tester with $R_c$ fixed. 
With high probability, all exponentials lie within the range $\left[\frac{1}{\poly(n)},\poly(n)\right]$. 
Hence, the total randomness required is $\poly(n)$. 
By an observation of \cite{JayaramW18}, it suffices to apply \thmref{thm:gkm:prg} to obtain a second deterministic function $F_2$ with seed $y_2$ of length $\O{n^{1-2/p}\log^2 n\log\frac{1}{\eps}(\log\log n)^2}$ that fools all tests over exponential randomness. 
We note that the original algorithm processes each update in $\polylog(n)$ time. 
Since evaluating the GKM PRG also requires only $\polylog(n)$ time per variable, the update time is $\frac{1}{\eps} \cdot \polylog\left(n,\frac{1}{\eps}\right)$. 

Finally, we discuss how to retrieve an accurate approximation of the sampled item. 
We run a separate $\CountSketch$ with $\frac{1}{\eps^2} n^{1-2/p} \log \frac{1}{\eps}$ buckets and $\Theta(\log n)$ rows on the vector $\bv \in \mathbb{R}^n$, where $\bv_i = \max_{j \in [n^c]} |x_i| \cdot \rnd_\eta(1/ \be^{1/p})$. 
Following the analysis of \lemref{lem:cs:err:w:descrete}, this gives an estimate of each coordinate in $\bv$ with additive error $\eps \cdot \frac{n^{c/p}\|x\|_p}{400\log \frac{1}{\eps}}$. 
Then, since the sampled item must satisfy $|\bv_i| > \frac{n^{c/p}\|x\|_p}{400\log \frac{1}{\eps}}$ to pass the threshold of our first-stage table $\CountSketch_1$, the above additive error is smaller than $\eps \cdot |\bv_i|$, which implies a $(1+\eps)$-estimation.
\end{proof}

\section{\texorpdfstring{$L_p$}{Lp} Sampler Lower Bound}
In this section, we provide a sketching dimension lower bound for the $L_p$ sampler. 
The analysis is based on Section D.2 in \cite{GangulyW18}, which shows the sketching lower bound of the $F_p$-estimation protocol. 
Our core idea is to construct two distributions $\alpha$ and $\beta$, and for an arbitrary vector $x$, we can decide whether $x$ is drawn from $\alpha$ or $\beta$ with probability at least $0.6$ using our $L_p$ sampler. 
On the other hand, \cite{GangulyW18} lower bounds the total variation distance between $\alpha$ and $\beta$ under the image of a ``low'' dimensional sketch, which implies a sketching lower bound to distinguish whether an arbitrary vector is from $\alpha$ or $\beta$. 
Thus, this translates to a sketching dimension lower bound for the $L_p$ samplers, which can further be used to achieve bit complexity lower bounds using the techniques of \cite{GribelyukLWYZ25}. 

We first introduce the hard distributions.
\begin{definition}[Hard distributions]
\deflab{def:hd}
We define the first distribution $\alpha$ as $\calN(0, {\bf I}_n)$, which is the $n$-dimensional standard multi-variate Gaussian distribution. 
Let $e_i$ be the unit basis vector, where its $i$-th entry is $1$, and the other entries are $0$. 
Let $x \sim \calN(0, {\bf I}_n)$, let $C$ be a sufficiently large constant, let $E_{n} = \mathbb{E}_{x \sim \calN(0, {\bf I}_n)}[\|x\|_p]$, and let $i$ be sampled uniformly from $[n]$. 
Then, let $z = x+C E_{n-1}\cdot e_i$. 
We define the second distribution $\beta$ as the distribution of the vector $z$.
\end{definition}
The next statement lower bounds the sketching dimension that distinguishes the two distributions.
\begin{theorem}[c.f. Section D.2.4 in \cite{GangulyW18}]
\thmlab{thm:lb:hd}
Fix a matrix $S \in \mathbb{R}^{r \times n}$. Let $\alpha$ and $\beta$ be defined as in \defref{def:hd}. Given an arbitrary $x$ drawn from $\alpha$ or $\beta$, suppose that we can decide whether $x$ is from $\alpha$ or $\beta$ from $S\cdot x$ with probability at least $0.6$, then the sketching dimension $r$ is at least $\Omega\left(n^{1-2/p}\log n\right)$.
\end{theorem}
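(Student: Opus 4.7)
My plan is to prove the lower bound by converting the distinguishing task into a total-variation / chi-squared calculation between the push-forward distributions of the sketch, and then use a GW18-style multi-scale argument to extract the tight dimension bound.

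First, I would observe that any algorithm which reads only $Sx$ and decides between $\alpha$ and $\beta$ with success probability at least $0.6$ implies $d_{TV}(S_*\alpha, S_*\beta) \ge 1/5$, where $S_*\mu$ denotes the law of $Sx$ when $x \sim \mu$. By factoring $S$ through its row space and noting that distinguishing probability is invariant under left-multiplication by any invertible matrix, I may assume without loss of generality that $S \in \mathbb{R}^{r \times n}$ has orthonormal rows, so that $SS^T = I_r$. Under this normalization, $S_*\alpha = \mathcal{N}(0, I_r)$ and $S_*\beta = \frac{1}{n}\sum_{i=1}^n \mathcal{N}(C E_{n-1}\, S_i,\, I_r)$, where $S_i \in \mathbb{R}^r$ is the $i$-th column of $S$.

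Next, I would relate the total variation distance to a tractable information-theoretic quantity in the $S_i$'s. Using $d_{TV}^2 \le \chi^2(S_*\beta \| S_*\alpha)/4$ together with the Gaussian moment-generating identity, one computes
\[
\chi^2(S_*\beta \| S_*\alpha) = \frac{1}{n^2} \sum_{i,j \in [n]} \exp\!\left(C^2 E_{n-1}^2 \langle S_i, S_j \rangle\right) - 1.
\]
Since standard Gaussian moment estimates give $E_{n-1} = \Theta(n^{1/p})$ for $p>2$, the exponent scales like $n^{2/p}\langle S_i, S_j\rangle$. Combining this with the constraint $\sum_i \|S_i\|^2 = \mathrm{tr}(S^T S) = r$, a diagonal-vs.-off-diagonal split of the above sum yields that any distinguisher with $d_{TV} \ge 1/5$ already forces $r = \Omega(n^{1-2/p})$ by Cauchy-Schwarz and the fact that the average squared column norm is $r/n$.

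To upgrade this to the tight $\Omega(n^{1-2/p}\log n)$ bound, I would follow the GW18 strategy of layering the argument across $\Theta(\log n)$ scales. Concretely, one partitions the indices $i \in [n]$ according to which geometrically spaced band $[2^{-k},\, 2^{-k+1}] \cdot \sqrt{r/n}$ the column norm $\|S_i\|$ lies in, for $k = 0, 1, \ldots, \Theta(\log n)$. Using the independence of Gaussian coordinates of $x$ and of the random spike location $i$, one then sets up a direct-sum / chain-rule KL-divergence argument which shows that each band independently requires $\Omega(n^{1-2/p})$ dimensions to support the same distinguishing advantage. The main obstacle is exactly this last multi-scale step: to make the direct-sum rigorous one must carefully condition so that the distinguishability contribution from one band is not swamped by Gaussian noise contributions from the other bands, and adapt the GW18 $F_p$-sketching lower bound machinery (which is originally cast in the one-shot $F_p$-estimation setup) to our mixture-vs.-Gaussian distinguishing setting.
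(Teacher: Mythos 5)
The paper does not actually prove this statement; it is cited verbatim from Section D.2.4 of Ganguly--Woodruff, so there is no internal proof to benchmark against, and the question is only whether your sketch would close the argument on its own. It would not, for two reasons.

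The central logical step is broken in direction. You invoke $d_{TV}(S_*\beta,S_*\alpha)^2 \le \tfrac{1}{4}\chi^2(S_*\beta\,\|\,S_*\alpha)$ together with the (correct) identity $\chi^2 = \tfrac{1}{n^2}\sum_{i,j}\exp\bigl(C^2E_{n-1}^2\langle S_i,S_j\rangle\bigr) - 1$, and then claim a diagonal/off-diagonal split ``forces $r = \Omega(n^{1-2/p})$.'' But this inequality only gives: $\chi^2$ small $\Rightarrow$ $d_{TV}$ small. For the lower bound you need the contrapositive chain $r$ small $\Rightarrow$ $\chi^2$ small $\Rightarrow$ $d_{TV}$ small, and the first implication is false. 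The constraint $\sum_i\|S_i\|^2 = r$ permits a single column with $\|S_i\|^2 = \Theta(1)$ even when $r=1$; that column alone contributes $\tfrac{1}{n^2}\exp(\Theta(C^2n^{2/p}))$ to the diagonal sum, which diverges, so $\chi^2$ is generically enormous while $d_{TV}$ can still be small (the spike hits that column only with probability $1/n$). This is the usual failure mode of $\chi^2$ for mixtures with heavy components. To salvage a $\chi^2$-style argument you must first condition on the spike location $i$ lying in the set of light columns $\{i : \|S_i\|^2 \le Kr/n\}$ (Markov shows at most a $1/K$ fraction are excluded), compute $\chi^2$ for the truncated mixture, and fold the conditioning loss into the total-variation budget. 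That truncation step is not in your proposal and is exactly what the reference has to do.

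Second, you acknowledge you have not proved the $\log n$ factor, and the column-norm ``multi-scale'' heuristic you suggest actually pushes in the wrong direction: conditioning away, say, a $1/\log n$ fraction of heavy columns relaxes the per-column constraint to $\|S_i\|^2 \lesssim (r/n)\log n$, which \emph{weakens} the resulting bound toward $r = \Omega(n^{1-2/p}/\log n)$ rather than strengthening it. The extra $\log n$ in the reference is not obtained by stratifying column norms of a single instance, so even modulo the truncation repair above, your outline at best recovers $r = \Omega(n^{1-2/p})$, not the stated $\Omega(n^{1-2/p}\log n)$.
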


With the above result, we lower bound the sketching dimension of $L_p$ samplers by proposing a protocol that solves the problem in \thmref{thm:lb:hd} using $L_p$ samplers.

\begin{theorem}
\thmlab{thm:lb:sampler}
Let $x \in \mathbb{R}^n$ be a vector. Suppose that there is a linear sketch that outputs an index $i \in [n]$ with probability $\frac{|x_i|^p}{\|x\|_p^p}\cdot (1\pm 0.01)$, and outputs FAIL with probability at most $0.1$. Then, its sketching dimension is at least $\Omega\left(n^{1-2/p}\log n\right)$.
\end{theorem}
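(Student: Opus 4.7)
The plan is to reduce the distinguishing task of \thmref{thm:lb:hd} to $L_p$-sampling by composing $k = O(1)$ independent copies of the supposed sampler into a combined linear sketch of dimension $kr$. On input $x$, this sketch yields samples $j_1, \ldots, j_k \in [n] \cup \{\texttt{FAIL}\}$, and I would declare $x \sim \beta$ if all non-failing outputs coincide on a single index, and $x \sim \alpha$ otherwise. Invoking \thmref{thm:lb:hd} on the combined sketch then forces $kr = \Omega(n^{1-2/p}\log n)$, which yields $r = \Omega(n^{1-2/p}\log n)$ since $k$ is constant.

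To analyze the $\alpha$ side, I would observe that for $x \sim \calN(0, {\bf I}_n)$, standard Gaussian-moment concentration gives $\|x\|_p^p = \Theta(n)$ and $\|x\|_{2p}^{2p} = \Theta(n)$ with high probability. Since each sampler outputs index $k$ with probability at most $1.01\cdot |x_k|^p/\|x\|_p^p$, the collision probability between two independent samples satisfies
\[
\PPr{j_1 = j_2 \mid x} \;\le\; (1.01)^2 \cdot \frac{\|x\|_{2p}^{2p}}{\|x\|_p^{2p}} \;=\; O(1/n)
\]
on this event. Averaging over the small atypical set where the concentration fails contributes only $o(1)$ to the overall collision probability, so the distinguisher correctly declares $\alpha$ with probability $1 - o(1)$.

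For the $\beta$ side, I would pick the constant $C$ large enough that the planted coordinate dominates the $L_p$ mass. Since $E_{n-1} = \Theta(n^{1/p})$, the planted coordinate satisfies $|z_i|^p \ge (1 - o(1))\,C^p E_{n-1}^p$ while the remaining mass is $\sum_{j\ne i} |x_j|^p = \Theta(n)$ with high probability, so $|z_i|^p / \|z\|_p^p \ge 1 - O(C^{-p})$. Hence each non-failing sampler returns $i$ with probability at least $(1 - 0.01)(1 - O(C^{-p}))$, which for large enough $C$ exceeds $0.98$. With $k = 2$ samplers, each failing with probability at most $0.1$, the probability that both succeed and return $i$ is at least $0.9^2 \cdot 0.98^2 > 0.6$, so the distinguisher correctly declares $\beta$ with probability strictly above $0.6$.

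The main obstacle I anticipate is the delicate bookkeeping in the two-sided probability analysis: I must choose $C$ (depending on $p$) so that the $(1 \pm 0.01)$ multiplicative slack in the sampler and its $0.1$ failure probability simultaneously leave sufficient margin on both sides to clear the $0.6$ threshold of \thmref{thm:lb:hd}. A secondary concern is that \thmref{thm:lb:hd} is phrased for a fixed sketch matrix; I would handle the randomized sampler by a standard Yao-principle reduction, fixing the randomness used to generate the sketch conditional on which the distinguishing advantage still holds.
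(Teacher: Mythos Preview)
Your proposal is correct and follows essentially the same approach as the paper: run two independent copies of the sampler as a single linear sketch, declare $\beta$ when the samples agree, and invoke \thmref{thm:lb:hd} together with Yao's principle. Your $\alpha$-side analysis via the collision bound $\|x\|_{2p}^{2p}/\|x\|_p^{2p} = O(1/n)$ is in fact more careful than the paper's, which informally asserts each coordinate is sampled twice with probability $1/n^2$; just be sure your decision rule is stated as ``declare $\beta$ iff both samplers succeed and agree'' (matching your $\beta$-side computation), since the looser ``all non-failing outputs agree'' phrasing would let a single failure force a false $\beta$ declaration and spoil the $1-o(1)$ claim under $\alpha$.
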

\begin{proof}
We claim that a linear sketch $S \in \mathbb{R}^{r\times n}$ that reports an approximate $L_p$ sampler with probability at least $0.9$ implies a method to distinguish $\alpha$ and $\beta$ in \thmref{thm:lb:hd} with probability at least $0.6$.
Given an arbitrary $x$ drawn from $\alpha$ or $\beta$, we take two $L_p$ samples from the vector $x$ using the linear sketch. 
We state that $x$ is from $\beta$ if both instances succeed and they sample the same coordinate.
Next, we show the correctness of this protocol.

First, we suppose that $x$ is from $\beta$, it suffices to sample the large coordinate $x_i = CE_{n-1}$.
From the standard properties of the multi-variate Gaussian distribution, we know $E_{n} = \Theta(n^{1/p})$ and $\mathbb{E}_{x \sim \calN(0, {\bf I}_n)}[\|x\|_p^p] = \Theta(n)$. 
Therefore, choosing a sufficiently large constant $C$ in \defref{def:hd}, we have $\frac{|x_i|^p}{\|x\|_p^p} \ge 0.99$ in expectation. 
Then, $i$ is sampled twice with probability at least $0.9$ conditioned on not failing. 
Thus, our protocol identifies $x$ from $\beta$ with probability at least $0.9$.
Next, if $x$ is from $\alpha$, we misclassify the case if some index is sampled twice.
For each coordinate, it is sampled twice with probability $\frac{1}{n^2}$, so by a union bound, the probability of misclassification is at most $\frac{1}{n}$.
Thus, for a large enough $n$, we decide whether $x$ is from $\alpha$ or $\beta$ with probability at least $0.6$. 

Then, we have $r = \Omega(n^{1-2/p}\log n)$ from the sketching lower bound in \thmref{thm:lb:hd}. By Yao's minimax lemma, we have our desired result.
\end{proof}

\section{Additional Applications}
In this section, we describe a number of additional applications of our techniques. 
In \secref{sec:norm:estimate}, we first give applications to norm estimation for a specific subset of coordinates whose identity is only revealed after the data stream is processed. 
We then describe a number of other perfect samplers in \secref{sec:other:samplers}. 
Last, we provide a framework for obtaining $G$-samplers for general functions using rejection sampling in \secref{sec:framework}.

\subsection{Application to Norm Estimation}
\seclab{sec:norm:estimate}
To estimate $\|x_{\calQ}\|_p^p=\sum_{i\in\calQ}x_i^p$, we first use our $L_p$ sampler to produce an index $i\in[n]$. 
We also produce an unbiased estimate $\widehat{F_p}$ to $\|x\|_p^p$ using existing techniques~\cite{Ganguly15}. 
Now if $i\in\calQ$, which is revealed on the query $\calQ$, then we set the estimate $Y=\widehat{F_p}$, and otherwise, we set $Y=0$. 
Observe that $\Ex{Y}=\|x_{\calQ}\|_p^p\pm\frac{1}{\poly(n)}$ and moreover the variance is at most $\|x_{\calQ}\|_p^p\cdot\|x\|_p^p$. 
Thus to drive the variance small enough to obtain a $(1+\eps)$-approximation, it suffices to repeat $\O{\frac{1}{\alpha\eps^2}}$ times, since $\|x_{\calQ}\|_p^p\ge\alpha\|x\|_p^p$ for $\alpha\in(0,1)$. 
We give the full algorithm in \algref{alg:fp:query}. 

\begin{algorithm}[!htb]
\caption{Moment estimation for a query subset $\calQ$}
\alglab{alg:fp:query}
\begin{algorithmic}[1]
\Require{Vector $x\in \mathbb{R}^n$ defined by a turnstile stream, accuracy parameter $\eps\in(0,1)$, post-processing set $\calQ\subseteq[n]$, ratio parameter $\alpha\in(0,1]$}
\Ensure{A $(1+\eps)$-estimation to $\|x\|_p^p$}
\State{$R\gets\O{\frac{1}{\alpha\eps^2}}$}
\For{$r\in[R]$}
\State{Approximate $L_p$ sample with distortion $\frac{\eps}{4}$ an index $i_r\in[n]$}
\State{Let $C_r$ be an estimate of $\|S_r\|_p^p$}
\Comment{Ganguly's estimator, see \thmref{thm:ganguly}}
\EndFor
\State{At the end of the stream, process $\calQ$}
\State{\Return $Z=\frac{1}{R}\sum_{r: i_r\in\calQ}C_r$}
\end{algorithmic}
\end{algorithm}
Now, we present the following result from \cite{Ganguly15} which produces unbiased $F_p$ estimates using optimal space.
\begin{theorem}[\cite{Ganguly15}]
\thmlab{thm:ganguly}
For $p>2$ and $0<\eps \leq 1$, there exists an algorithm in the turnstile stream model that returns an estimator $\widehat{F_p}$ such that $\Ex{\widehat{F_p}} = F_p$ and $\Var{\widehat{F_p}} \le \frac{F_p^2}{50}$ with probability at least $0.9$. 
The algorithm uses $\O{n^{1-2 / p} \log^2 n}$ bits of space.
\end{theorem}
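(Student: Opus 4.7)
My plan is to apply the subsampling-plus-heavy-hitters framework originating with Indyk--Woodruff. Create $L = \Theta(\log n)$ levels, where at each level $\ell$ every coordinate $i \in [n]$ is independently retained in a subsample $S_\ell$ with probability $2^{-\ell}$, using a shared limited-independence hash function that can be evaluated streaming. At each level run a $\CountSketch$ on the subsampled frequency vector $x^{(\ell)}$ with $B = \Theta(n^{1-2/p})$ buckets and $\Theta(\log n)$ rows. The key structural fact driving the space bound is that $F_p$ is dominated by at most $n^{1-2/p}$ heavy coordinates (those with $|x_i|^p \ge F_p / n^{1-2/p}$), which matches the number of buckets per sketch; after using bounded-precision counters and derandomizing the hash seeds, the total space sums to $\O{n^{1-2/p}\log^2 n}$ bits across all levels.

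For the unbiased estimator, for each coordinate $i$ let $\ell^\star(i)$ be the smallest level at which $i \in S_\ell$ and its $\CountSketch$ estimate exceeds a threshold $T$ set just above the $\CountSketch$ noise floor. Define
\[\hat F_p = \sum_{\ell=0}^{L-1} 2^\ell \sum_{i \in S_\ell\,:\,\ell^\star(i)=\ell} |\hat x_i^{(\ell)}|^p.\]
The factor $2^\ell$ is the reciprocal of the subsampling probability at level $\ell$, so each coordinate contributes $|x_i|^p$ to $\Ex{\hat F_p}$ once it is detected at some level. To enforce exact unbiasedness rather than only approximate unbiasedness, I would replace $|\hat x_i^{(\ell)}|^p$ by an explicit unbiased estimator of $|x_i|^p$ built from independent $\CountSketch$ copies: for integer $p$ one takes the product of $p$ independent copies as in \algref{alg:lp:sampler}, and for fractional $p$ one uses the truncated Taylor expansion of \algref{alg:lp:sampler_frac} with $\O{\log n}$ terms, whose residual is $\frac{1}{\poly(n)}\cdot |x_i|^p$ by \lemref{lem:truncated_taylor_sampler}.

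The hardest part is the variance bound $\Var{\hat F_p} \le F_p^2/50$. I would decompose the variance into three sources: (i) subsampling variance, which contributes $\sum_i 2^{\ell^\star(i)}|x_i|^{2p}$; by the choice of $T$ this is bounded by $\O{F_p \cdot \max_i 2^{\ell^\star(i)}|x_i|^p} = \O{F_p^2/n^{1-2/p}}$; (ii) $\CountSketch$ noise on detected coordinates, controlled by the Khintchine inequality (\thmref{thm:khintchine}) applied to the residual $\ell_2$ mass per bucket and giving additive error $\O{\|x^{(\ell)}\|_2/\sqrt{B}}$; and (iii) spurious detections of non-heavy coordinates at too-coarse levels, bounded by Bernstein's inequality (\thmref{thm:bernstein}) to show at most $\O{1}$ false heavies per level with probability $1-\frac{1}{\poly(n)}$. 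The main obstacle is the interaction between subsampling and $\CountSketch$ error: since raising a $(1+\eta)$-relative estimate to the $p$-th power blows up the error multiplicatively by roughly $p\eta$, the threshold $T$ must sit a constant factor above the $\CountSketch$ noise floor at \emph{every} level simultaneously, which forces $B = \Theta(n^{1-2/p})$ (no looser) together with a careful union bound across all $L$ levels. The $0.9$ success probability in the statement refers precisely to all $L$ levels correctly detecting their intended heavy hitters under this union bound.
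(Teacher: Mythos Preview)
The paper does not prove \thmref{thm:ganguly}; it is quoted from \cite{Ganguly15} and used as a black box in \lemref{lem:fp:query:correct}. So there is no ``paper's own proof'' to compare against, and any correct argument you supply would be supplementary rather than a reproduction.

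That said, your plan has a genuine gap in the unbiasedness claim. You weight each detected coordinate by $2^{\ell}$, the reciprocal of the \emph{subsampling} probability, but you ignore the \emph{detection} probability: whether $\ell^\star(i)=\ell$ is the event that the $\CountSketch$ estimate at level $\ell$ crosses the threshold $T$ while all earlier levels miss. This event has a probability that depends on $|x_i|$, on the residual $\ell_2$ mass colliding with $i$, and on the sign pattern in the bucket, none of which you know. Even if you use an independent $\CountSketch$ copy for the value $|\hat x_i^{(\ell)}|^p$ so that the value is conditionally unbiased given detection, the indicator $\mathbf{1}[\ell^\star(i)=\ell]$ is not $2^{-\ell}$ in expectation; coordinates near the noise floor are detected with some data-dependent probability strictly between $0$ and $1$, so $\Ex{\hat F_p}\ne F_p$. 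The product-of-copies / Taylor trick from \algref{alg:lp:sampler} and \algref{alg:lp:sampler_frac} only gives unbiasedness of the \emph{value} once you have committed to an index; it does nothing to repair the bias introduced by thresholded detection.

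Ganguly's estimator avoids this by a different construction (roughly, an HSS structure together with a carefully designed per-level estimator whose expectation telescopes exactly), and the ``with probability at least $0.9$'' in the statement refers to a conditioning event under which the first two moments are as claimed, not merely to heavy-hitter recovery succeeding. If you want to push your outline through, you would need either to compute the detection probability exactly and divide by it (which requires knowing $|x_i|$), or to redesign the estimator so that the threshold event disappears from the expectation, e.g., by summing over \emph{all} levels with appropriate inclusion--exclusion rather than only the first successful level.
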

The next lemma proves that \algref{alg:fp:query} gives an $(1+\eps)$-estimation.
\begin{lemma}
\lemlab{lem:fp:query:correct}
For the output $Z$ of \algref{alg:fp:query},
\[\PPr{\left\lvert Z-\|x_{\calQ}\|_p^p\right\rvert\le\eps\cdot\|x_{\calQ}\|_p^p}\ge0.99.\]
\end{lemma}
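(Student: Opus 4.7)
The plan is to express $Z$ as an average of i.i.d. random variables and apply Chebyshev's inequality. For each $r \in [R]$, define $Y_r = C_r \cdot \mathbf{1}[i_r \in \calQ]$, so that $Z = \frac{1}{R} \sum_{r \in [R]} Y_r$. Since the $L_p$ sampler producing $i_r$ is drawn from an independent instance of the sketch used to compute $C_r$, we can treat $C_r$ and $i_r$ as independent. By \thmref{thm:ganguly}, $\Ex{C_r} = \|x\|_p^p$ and $\Var{C_r} \le \frac{\|x\|_p^{2p}}{50}$, conditioned on the success event which holds with probability at least $0.9$. By the guarantee of the approximate $L_p$ sampler with distortion $\eps/4$, $\PPr{i_r = j} = (1 \pm \eps/4) \cdot \frac{|x_j|^p}{\|x\|_p^p} \pm \frac{1}{\poly(n)}$.

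The first step is to compute $\Ex{Y_r}$. By independence,
\begin{align*}
\Ex{Y_r} &= \Ex{C_r} \cdot \PPr{i_r \in \calQ} = \|x\|_p^p \cdot \sum_{j \in \calQ} \left((1\pm \eps/4) \cdot \frac{|x_j|^p}{\|x\|_p^p} \pm \frac{1}{\poly(n)}\right) \\
&= (1 \pm \eps/4) \cdot \|x_{\calQ}\|_p^p \pm \frac{1}{\poly(n)}.
\end{align*}
Therefore $\Ex{Z}$ is within $\frac{\eps}{4} \cdot \|x_{\calQ}\|_p^p$ of the target $\|x_{\calQ}\|_p^p$, up to an additive $\frac{1}{\poly(n)}$ term that is negligible under the mild assumption $\|x_{\calQ}\|_p^p \ge \frac{1}{\poly(n)}$ (else the estimate $Z = 0$ trivially satisfies the claim).

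Next, I would bound $\Var{Y_r}$. Using $\Var{Y_r} \le \Ex{Y_r^2} = \Ex{C_r^2} \cdot \PPr{i_r \in \calQ}$, and noting $\Ex{C_r^2} = \Var{C_r} + \Ex{C_r}^2 \le \frac{51}{50}\|x\|_p^{2p}$, we get
\[
\Var{Y_r} \le \O{\|x\|_p^{2p} \cdot \frac{\|x_{\calQ}\|_p^p}{\|x\|_p^p}} = \O{\|x\|_p^p \cdot \|x_{\calQ}\|_p^p}.
\]
Invoking the hypothesis $\|x_{\calQ}\|_p^p \ge \alpha \|x\|_p^p$, we have $\|x\|_p^p \le \frac{1}{\alpha} \|x_{\calQ}\|_p^p$, so $\Var{Y_r} \le \O{\frac{1}{\alpha} \|x_{\calQ}\|_p^{2p}}$, and thus $\Var{Z} = \frac{1}{R} \Var{Y_1} \le \O{\frac{1}{\alpha R} \|x_{\calQ}\|_p^{2p}}$.

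Finally, applying Chebyshev's inequality with $R = \O{\frac{1}{\alpha \eps^2}}$ chosen with a sufficiently large constant yields
\[
\PPr{|Z - \Ex{Z}| \ge \frac{\eps}{2} \|x_{\calQ}\|_p^p} \le \frac{\Var{Z}}{(\eps/2)^2 \|x_{\calQ}\|_p^{2p}} \le 0.005.
\]
Combining with the $\frac{\eps}{4}$ bias on $\Ex{Z}$ and union-bounding over the failure events of the $R$ instances of the $L_p$ sampler and Ganguly's estimator (each of which can be driven below $\frac{1}{\poly(n)}$ or boosted to exceed $0.995$ via a median trick) gives $|Z - \|x_{\calQ}\|_p^p| \le \eps \cdot \|x_{\calQ}\|_p^p$ with probability at least $0.99$. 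The main subtlety is ensuring that Ganguly's estimator delivers its variance bound with high enough success probability on every one of the $R$ repetitions simultaneously; this can be handled either by amplifying its success probability to $1-\frac{1}{\poly(n)}$ via $\O{\log n}$ parallel copies with a median, or by absorbing the $0.9$ success probability into $Y_r$ and noting that failed instances contribute negligibly to the overall expectation and variance.
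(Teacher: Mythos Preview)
Your proof is correct and follows essentially the same approach as the paper: define per-round variables $Y_r = C_r \cdot \mathbf{1}[i_r \in \calQ]$, compute the mean and second moment using the sampler's distortion guarantee and Ganguly's moments, invoke the assumption $\|x_{\calQ}\|_p^p \ge \alpha\|x\|_p^p$, and finish with Chebyshev. If anything, you are more careful than the paper on two points: you explicitly note the independence of $C_r$ and $i_r$, and you correctly compute $\Ex{C_r^2} \le \tfrac{51}{50}\|x\|_p^{2p}$ (the paper's write-up contains a typo here), as well as flagging the need to amplify Ganguly's $0.9$ success probability across all $R$ rounds.
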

\begin{proof}
For each $r\in[R]$, we define $Z_r=C_r$ if $i_r\in\calQ$ and $Z_r=0$ if $i_r\notin\calQ$, so that $Z=\frac{1}{R}\sum_{r\in[R]}Z_r$. 
We have
\[\Ex{Z_r}=\sum_{i\in\calQ}\left(\left(1\pm\eps\right)\frac{|x_i|^p}{\|x\|_p^p}+\frac{1}{\poly(n)}\right)\cdot\Ex{C_r}.\]
By \thmref{thm:ganguly}, $\Ex{C_r}=\|x\|_p^p$, so that
\[\Ex{Z_r}=\left(1\pm\frac{\eps}{4}\right)\sum_{i\in\calQ} |x_i|^p=\left(1\pm\frac{\eps}{4}\right)\|x_{\calQ}\|_p^p.\]
Moreover, we have
\[\Ex{Z_r^2}=\sum_{i\in\calQ}\left(\left(1\pm\frac{\eps}{4}\right)\frac{|x_i|^p}{\|x\|_p^p}+\frac{1}{\poly(n)}\right)\cdot\Ex{C_r^2}.\]
By \thmref{thm:ganguly}, we have $\Ex{C_r^2}=\frac{\|x\|_p^p}{50}$, so that
\[\Ex{Z_r}\le\frac{1}{25}\sum_{i\in\calQ} |x_i|^p\cdot\|x\|_p^p=\frac{1}{25}\|x_{\calQ}\|_p^p\cdot\|x\|_p^p.\]
Therefore, we have $\Ex{Z}=\left(1\pm\frac{\eps}{4}\right)\|x_{\calQ}\|_p^p$ and
\[\Var{Z}\le\frac{1}{R^2}\sum_{r\in[R]}\frac{1}{25}\|x_{\calQ}\|_p^p\cdot\|x\|_p^p=\frac{1}{25R}\|x_{\calQ}\|_p^p\cdot\|x\|_p^p.\]
Since $R=\O{\frac{1}{\alpha\eps^2}}$ and $\|x_{\calQ}\|_p^p\ge\alpha\|x\|_p^p$ by assumption, then we have
\[\Var{Z}\le\O{\eps^2}\cdot\|x_{\calQ}\|_p^{2p}.\]
By Chebyshev's inequality, it follows that
\[\PPr{\left\lvert Z-\|x_{\calQ}\|_p^p\right\rvert\le\eps\cdot\|x_{\calQ}\|_p^p}\ge0.99.\]
\end{proof}
Now, we state the formal theorem of norm estimation for a query subset.
\begin{theorem}
Given $p>2$, there exists an algorithm that processes a turnstile stream defining a vector $x\in\mathbb{R}^n$ and a post-processing query set $\calQ\subseteq[n]$, and with probability at least $0.99$, outputs a $(1+\eps)$-approximation to $\|x_{\calQ}\|_p^p$. 
For $\|x_{\calQ}\|_p^p\ge\alpha\|x\|_p^p$, the algorithm uses $\tO{\frac{1}{\alpha\eps^2}n^{1-2/p}}$ bits of space. 
\end{theorem}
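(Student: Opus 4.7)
The plan is to package together the two ingredients that have already been set up: the correctness claim from \lemref{lem:fp:query:correct} and a careful accounting of the space used by \algref{alg:fp:query}. Since correctness (including the $(1+\eps)$-multiplicative guarantee with probability at least $0.99$) is already handled by the Chebyshev argument in \lemref{lem:fp:query:correct}, the only thing left is the space bound.

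For the space, I would first fix the number of parallel repetitions at $R=\O{\frac{1}{\alpha\eps^2}}$ as specified by the algorithm, and then account for what each of the $R$ independent copies costs:
\begin{enumerate}
\item An approximate $L_p$ sampler with distortion $\eps/4$, which by \thmref{thm:lp:approximate} uses $\tO{n^{1-2/p}}$ bits of space (we do \emph{not} need the additional $\frac{1}{\eps^2}$ overhead for estimating the sampled value, since all we use from this sampler is the identity $i_r$).
\item Ganguly's unbiased $F_p$ estimator from \thmref{thm:ganguly}, which yields $C_r$ with $\Ex{C_r}=\|x\|_p^p$ and $\Var{C_r}\le\|x\|_p^{2p}/50$, using $\O{n^{1-2/p}\log^2 n}$ bits.
\end{enumerate}
Multiplying by $R$ gives the claimed total space of $\tO{\frac{1}{\alpha\eps^2}n^{1-2/p}}$ bits. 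I would then remark that the post-processing set $\calQ$ only needs to be intersected with the $R$ stored sample indices $\{i_1,\ldots,i_R\}$, which is a trivial computation and does not affect the space bound.

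The only mildly delicate step is ensuring that the composition of failure events works out: the $L_p$ samplers need to succeed and produce an index (rather than $\bot$) with sufficient probability, Ganguly's estimator must satisfy its variance guarantee on each of the $R$ copies, and the final Chebyshev bound must absorb all of these. I would set the failure probability of each individual sampler and each Ganguly instance to $\frac{1}{\poly(n)}$ via the usual boosting inside \thmref{thm:lp:approximate} and \thmref{thm:ganguly}, so that a union bound over all $R=\poly\!\left(\frac{1}{\alpha\eps}\right)$ instances still yields overall success probability at least $0.99$, without altering the leading-order space cost. This is the main (though still standard) obstacle, since the variance computation inside \lemref{lem:fp:query:correct} implicitly conditioned on both the sampler and the $F_p$ estimator behaving as advertised; making that conditioning formal is the one place to be careful before concluding the theorem.
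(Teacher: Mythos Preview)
Your proposal is correct and follows essentially the same approach as the paper: invoke \lemref{lem:fp:query:correct} for correctness, then tally the space of each of the $R=\O{\frac{1}{\alpha\eps^2}}$ instances of the $L_p$ sampler and Ganguly's estimator to reach $\tO{\frac{1}{\alpha\eps^2}n^{1-2/p}}$ bits. Your additional discussion of boosting the per-instance failure probabilities so that a union bound over all $R$ copies still leaves probability $0.99$ is a valid piece of bookkeeping that the paper's own proof simply omits.
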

\begin{proof}
Consider \algref{alg:fp:query}. 
The justification of correctness of approximation results from \lemref{lem:fp:query:correct}. 
Thus it remains to analyze the space complexity of \algref{alg:fp:query}. 
We can use either a perfect $L_p$ sampler or an approximate $L_p$ sampler to acquire the samples $i_r$. 
By \thmref{thm:lp_sampler_frac} or \thmref{thm:approx:main}, these subroutines use $\tO{n^{1-2/p}}$ bits of space. 
By \thmref{thm:ganguly}, the $F_p$ estimation algorithm for producing $C_r$ also uses $\tO{n^{1-2/p}}$ bits of space. 
Since each subroutine is repeated $R=\O{\frac{1}{\alpha\eps^2}}$ times, then the total space complexity is $\tO{\frac{1}{\alpha\eps^2}n^{1-2/p}}$ bits of space, as claimed. 
\end{proof}

\subsection{Additional Samplers}
\seclab{sec:other:samplers}
In this section, we give perfect $G$-samplers for the logarithmic function $G(z)=\log(1+|z|)$ and the cap function $G(z)=\min(T,|z|^p)$. 

For both samplers, we require the following perfect $L_0$-sampler, which outputs a coordinate $i\in[n]$ with probability $\frac{1}{\|x\|_0}+\frac{1}{\poly(n)}$, along with the exact value of $x_i$. 
\begin{theorem}
\thmlab{thm:perfect:lzero}
\cite{JowhariST11}
There exists a perfect $L_0$ sampler on turnstile streams that succeeds with probability at least $1-\delta$, using $\O{\log^2 n\log\frac{1}{\delta}}$ bits of space. 
Moreover, for the index $i\in[n]$ that it returns, it returns $x_i$ exactly. 
\end{theorem}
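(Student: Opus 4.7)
The plan is to implement a perfect $L_0$ sampler by combining uniform-random subsampling of coordinates with an exact $1$-sparse recovery primitive, following the template of \cite{JowhariST11}. First I would equip each coordinate $i \in [n]$ with a hash value $h(i) \in [N]$, where $N = n^c$ for a sufficiently large constant $c$, drawn from a sufficiently random family (e.g., pairwise independent on top of Nisan's PRG, or the GKM construction of \thmref{thm:gkm:prg}); the key property is that conditioned on a given level of subsampling, the surviving coordinate in $\supp(x)$ is uniform up to $\frac{1}{\poly(n)}$ additive distortion.

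For each level $\ell \in \{0, 1, \ldots, \log N\}$ I would maintain the subsampled set $S_\ell = \{i : h(i) \le N/2^\ell\}$ implicitly, and attach a $1$-sparse recovery sketch supported on the restriction $x|_{S_\ell}$. The sketch stores the three linear quantities
\begin{align*}
a_\ell = \sum_{i \in S_\ell} x_i, \qquad b_\ell = \sum_{i \in S_\ell} i \cdot x_i, \qquad c_\ell = \sum_{i \in S_\ell} x_i \cdot r^i \pmod{p},
\end{align*}
where $p$ is a random prime of magnitude $\poly(n)$ and $r \in \mathbb{F}_p$ is a random element. When $|S_\ell \cap \supp(x)| = 1$, writing $S_\ell \cap \supp(x) = \{i^*\}$, we have $i^* = b_\ell / a_\ell \in [n]$ and $x_{i^*} = a_\ell$ \emph{exactly}, since the stream values are integers bounded by $\poly(n)$; the fingerprint $c_\ell$ certifies the $1$-sparsity with failure probability $\frac{1}{\poly(n)}$ by the Schwartz–Zippel lemma. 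The recovery algorithm scans the levels and returns the smallest $\ell$ at which the fingerprint test passes, together with the pair $(i^*, x_{i^*})$ it certifies.

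The correctness argument has two pieces. For uniformity: since $h$ is sufficiently random, for any target support element $j \in \supp(x)$, the probability that $h(j)$ is the minimum hash value among $\supp(x)$ is $\frac{1}{\|x\|_0} \pm \frac{1}{\poly(n)}$, and that minimum coordinate is exactly what the earliest successful level recovers. For success probability: at the level $\ell^*$ with $2^{\ell^*} \asymp \|x\|_0$, the expected number of survivors from $\supp(x)$ is $\Theta(1)$, and by a second-moment calculation, exactly one survives with at least some constant probability; running $\O{\log(1/\delta)}$ independent parallel copies with fresh hash seeds drives the overall failure probability below $\delta$. The returned value is the true $x_{i^*}$ since $1$-sparse recovery is exact on integers.

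The space accounting is straightforward: each counter $a_\ell, b_\ell, c_\ell$ is an integer of $\O{\log n}$ bits, so a single level of a single copy uses $\O{\log n}$ bits; summing over $\O{\log n}$ levels and $\O{\log(1/\delta)}$ independent copies gives $\O{\log^2 n \log(1/\delta)}$ bits, matching the claim. The main obstacle I expect is the uniformity analysis: since Nisan/GKM-style pseudorandomness is required to make the hash family fit in the space budget, one must verify that these generators fool the relevant ``minimum hash value'' tester to within $\frac{1}{\poly(n)}$ additive error. This is handled exactly as in \cite{JowhariST11} by expressing the event ``index $j$ is returned'' as a small conjunction of threshold queries on the hash values and the $1$-sparse fingerprint, which is a bounded-complexity tester and thus fooled by the standard PRG construction.
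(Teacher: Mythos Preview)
The paper does not prove this statement; it is quoted verbatim as a black-box result from \cite{JowhariST11} and used only through its guarantees in the subsequent rejection-sampling constructions. So there is no ``paper's own proof'' to compare against.

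That said, your sketch is a faithful reconstruction of the \cite{JowhariST11} argument: nested geometric subsampling via a single hash $h$, exact $1$-sparse recovery with a Schwartz--Zippel fingerprint at each level, and the observation that the survivor at the first $1$-sparse level is precisely the minimum-hash element of $\supp(x)$, which is uniform up to $\frac{1}{\poly(n)}$. The space accounting and the amplification by $\O{\log(1/\delta)}$ independent copies are correct. One small point worth tightening if you were to write this out in full: when you amplify by parallel repetition, you should note explicitly that returning the output of the \emph{first} successful copy preserves uniformity because each copy's output distribution, conditioned on success, is already $\frac{1}{\|x\|_0}\pm\frac{1}{\poly(n)}$ and the copies are independent; otherwise a reader might worry about selection bias across copies.
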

Note that if we use \thmref{thm:perfect:lzero} to sample an index $i\in[n]$ with probability $\frac{1}{\|x\|_0}+\frac{1}{\poly(n)}$ and obtain $x_i$ exactly, then we can output $i$ with probability $\frac{G(x_i)}{H}$, where $H$ is some fixed quantity that upper bounds $G(x_i)$ for all $i\in[n]$, to guarantee that the probability is well-defined. 
For the logarithmic function $G(z)=\log(1+|z|)$, we can choose $H=\log(1+m)$ and for the cap function $G(z)=\min(T,|z|^p)$, we choose $H=T$. 
Thus it remains to acquire a sufficiently large number of independent $L_0$ samples to accept some sample with probability. 
Since a sample is accepted with probability at least $\frac{\log 2}{\log(m+1)}$ for the logarithmic function, we can use $\O{\log m}$ independent $L_0$ samples. 
Similarly, a sample is accepted with probability at least $\frac{1}{T}$ for the cap function, so it suffices to use $\O{T}$ independent $L_0$ samples. 
\begin{algorithm}[!htb]
\caption{Perfect $G$-sampler for $G(z)=\log(1+|z|)$}
\alglab{alg:log:sample}
\begin{algorithmic}[1]
\Require{Vector $x \in \mathbb{R}^n$ defined by a turnstile stream}
\Ensure{$G$-sample from $x$ for $G(z)=\log(1+|z|)$}
\State{Let $m$ be the length of the stream, $R\gets\O{\log m}$}
\For{$r\in[R]$}
\State{Acquire a perfect $L_0$ sample $i_r$ with failure probability $0.01$}
\State{\Return $i_r$ and abort, with probability $\frac{\log(1+|x_{i_r}|)}{\log m}$}
\EndFor
\end{algorithmic}
\end{algorithm}

The next theorem states the result of $G$-sampler for the logarithmic function.
\begin{theorem}
For the function $G(z)=\log(1+|z|)$, there exists a $G$-sampler on turnstile streams that uses $\O{\log^3 n}$ bits of space and succeeds with probability $0.99$. 
\end{theorem}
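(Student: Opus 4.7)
\medskip
\noindent\textbf{Proof proposal.} The plan is to verify two things about \algref{alg:log:sample}: (i) conditioned on returning some index, the output distribution matches $\frac{\log(1+|x_i|)}{\sum_j \log(1+|x_j|)}$ up to $\frac{1}{\poly(n)}$ additive error, and (ii) with probability at least $0.99$, the algorithm returns an index within its $R=\O{\log m}$ iterations. The space bound then follows directly from the $L_0$ sampler guarantee of \thmref{thm:perfect:lzero}.

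For (i), fix an index $i$ with $x_i\neq 0$ (indices with $x_i=0$ satisfy $G(x_i)=0$ and are never returned by the $L_0$ sampler, so they can be ignored). In a single iteration, \thmref{thm:perfect:lzero} returns $i$ together with its \emph{exact} value $x_i$ with probability $\frac{1}{\|x\|_0}\pm\frac{1}{\poly(n)}$. The rejection step then outputs $i$ with probability $\frac{\log(1+|x_i|)}{\log m}$, which is a well-defined probability because every entry satisfies $|x_i|\le m\cdot M\le\poly(n)$ in the turnstile model, so one can choose the denominator (e.g.\ $\log(1+m)$ absorbed into $\log m$ up to constants) to dominate every $\log(1+|x_i|)$. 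The probability that iteration $r$ returns $i$ is thus $\frac{\log(1+|x_i|)}{\|x\|_0\log m}\pm\frac{1}{\poly(n)}$, and the probability that it returns any index is $\frac{\sum_j\log(1+|x_j|)}{\|x\|_0\log m}\pm\frac{1}{\poly(n)}$. Dividing and absorbing polynomially small additive terms yields $\Pr[\text{output}=i\mid\text{not }\bot]=\frac{\log(1+|x_i|)}{\sum_j\log(1+|x_j|)}\pm\frac{1}{\poly(n)}$, exactly as desired. Independence across iterations means the same conditional distribution holds when we aggregate across iterations and stop at the first success.

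For (ii), the success probability per iteration is at least $\frac{\log 2}{\log m}$, since every nonzero $|x_i|\ge 1$ forces $\log(1+|x_i|)\ge\log 2$, so $\sum_j\log(1+|x_j|)\ge\|x\|_0\log 2$. Taking $R=C\log m$ iterations for a sufficiently large constant $C$, the probability that all iterations fail is at most $(1-\tfrac{\log 2}{\log m})^{C\log m}\le e^{-C\log 2}\le 0.005$.

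For the space bound, we instantiate $R$ independent perfect $L_0$ samplers via \thmref{thm:perfect:lzero} in parallel on the stream; each is set to fail with probability $\delta=0.005/R$ so a union bound gives total $L_0$-sampler failure at most $0.005$, which combined with (ii) keeps the overall failure probability below $0.01$. Since $\log(1/\delta)=\O{\log\log n}$ and $m=\poly(n)$, each sampler uses $\O{\log^2 n\cdot\log\log n}$ bits, for a total of $\O{\log^3 n\cdot\log\log n}=\O{\log^3 n}$ up to the implicit constants. The only conceptual subtlety, and the main place one must be careful, is the treatment of the additive $\frac{1}{\poly(n)}$ slack from \thmref{thm:perfect:lzero}: one has to verify that its contribution survives the normalization step (i.e., the denominator $\sum_j\log(1+|x_j|)$ is at least $\log 2$) so that the ratio remains polynomially accurate; this is immediate whenever $x\neq 0$.
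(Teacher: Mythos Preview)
Your proposal follows essentially the same approach as the paper: sample with a perfect $L_0$ sampler, accept with probability $\frac{\log(1+|x_{i_r}|)}{\log m}$, and repeat $\O{\log m}$ times. The correctness argument in parts (i) and (ii) matches the paper's reasoning.

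There is, however, a small but genuine slip in your space analysis. You set the per-instance failure probability to $\delta = 0.005/R$ and then compute $\log(1/\delta)=\O{\log\log n}$, obtaining $\O{\log^3 n\cdot\log\log n}$ total bits, which you then equate with $\O{\log^3 n}$ ``up to the implicit constants.'' That equality is false: $\log\log n$ is not a constant, so your bound is weaker than the stated $\O{\log^3 n}$. The paper sidesteps this by running each $L_0$ sampler with \emph{constant} failure probability $0.01$ (so $\O{\log^2 n}$ bits each) and observing that among $R=\O{\log m}$ independent instances, a constant fraction succeed with probability well above $0.99$; the rejection argument is then applied only over the successful instances. This yields the clean $\O{\log^3 n}$ bound without the extra $\log\log n$. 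Adopting that choice of $\delta$ and adding a one-line concentration remark (or even just noting that the event ``sampler succeeds \emph{and} accepts'' has probability at least $0.99\cdot\frac{\log 2}{\log m}$ per iteration, so the all-fail probability is $\le e^{-\Omega(R/\log m)}$) fixes your argument.
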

\begin{proof}
Consider \algref{alg:log:sample}. 
Let $N=\|x\|_0$ be the number of nonzero coordinates and let $\calS$ be the set of nonzero coordinates, so that $N=|\calS|$. 
Then for each $i\in\calS$ and each fixed $r\in[R]$, the algorithm samples $i$ with probability $\frac{1}{N}+\frac{1}{\poly(n)}$ and then chooses to accept $i$ with probability $\frac{\log(1+|x_i|)}{\log m}$. 
Therefore, the probability the sample returns $i$ is $\frac{\log(1+|x_i|)}{N\log m}+\frac{1}{\poly(n)}$. 
Hence, conditioned on some index being returned, the probability $i$ is sampled is
\[\frac{\log(1+|x_i|)}{\sum_{j\in[n]}\log(1+|x_j|)}+\frac{1}{\poly(n)},\]
as desired. 

On the other hand, for each $r\in[R]$, the sample $i_r$ is returned with probability $\frac{\log(1+|x_{i_r}|)}{\log m}\ge\frac{\log 2}{\log m}$. 
For sufficiently large $R=\O{\log m}$, at least $\O{\log m}$ instances of the perfect $L_0$ sampler will succeed. 

Hence, the algorithm outputs a sample with probability at least $0.99$ for $R=\O{\log m}$.  
By \thmref{thm:perfect:lzero}, each $L_0$ sampler with failure probability $0.01$ uses space $\O{\log^2 n}$. 
Therefore, for $\log m=\O{\log n}$, \algref{alg:log:sample} uses $\O{\log^3 n}$ bits of space in total. 
\end{proof}

We now consider a perfect $G$-sampler for the function $G(z)=\min(T,|z|^p)$ for a threshold $T$, and for any $p\ge 0$ by rejection sampling. 

\begin{algorithm}[!htb]
\caption{Perfect $G$-sampler for $G(z)=\min(T,|z|^p)$}
\alglab{alg:cap:sample}
\begin{algorithmic}[1]
\Require{Vector $f \in \mathbb{R}^n$ arrived in a data stream}
\Ensure{$G$-sample from $f$ for $G(z)=\min(T,|z|^p)$}
\State{Let $m$ be the length of the stream, $R\gets\O{T}$}
\For{$r\in[R]$}
\State{Acquire a perfect $L_0$ sample $i_r$ with failure probability $0.01$}
\State{\Return $i_r$ and abort, with probability $\frac{\min(T,|x_{i_r}|^p)}{T}$}
\EndFor
\end{algorithmic}
\end{algorithm}

The next theorem states the result of $G$-sampler for the cap function.
\begin{theorem}
For the function $G(z)=\min(T,|z|^p)$, there exists a $G$-sampler on turnstile streams that uses $\O{T\log^2 n}$ bits of space and succeeds with probability $0.99$. 
\end{theorem}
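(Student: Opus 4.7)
The plan is to follow the same template as the proof for the logarithmic $G$-sampler (\algref{log:sample}), which decomposes cleanly into a distributional correctness argument, a success-probability argument, and a space accounting argument. The only changes are that the per-round acceptance probability is $\frac{\min(T,|x_{i_r}|^p)}{T}$ rather than $\frac{\log(1+|x_{i_r}|)}{\log m}$, and the number of rounds is $R=\O{T}$ in place of $\O{\log m}$.

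For correctness of the sampling distribution, I would condition on a single round $r\in[R]$. By \thmref{perfect:lzero}, the perfect $L_0$ sampler selects each nonzero coordinate $i$ with probability $\frac{1}{\|x\|_0}\pm\frac{1}{\poly(n)}$ and returns the exact value $x_i$. Since $\frac{\min(T,|x_i|^p)}{T}\in[0,1]$, the rejection step is well-defined. Multiplying the two probabilities gives a round-$r$ output probability of $\frac{G(x_i)}{T\|x\|_0}\pm\frac{1}{\poly(n)}$, and conditioning on some index being returned across the $R$ rounds collapses the common factor $\frac{1}{T\|x\|_0}$ to yield the target distribution $\frac{G(x_i)}{\sum_j G(x_j)}\pm\frac{1}{\poly(n)}$, exactly as in the logarithmic case.

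To bound the success probability, I would invoke the standard streaming convention that the entries of $x$ are integers, so every nonzero coordinate satisfies $|x_i|^p\ge 1$ and hence $G(x_i)\ge 1$. Each round therefore accepts with probability at least $(1-0.01)\cdot\frac{1}{T}\ge\frac{1}{2T}$, where the factor $0.99$ accounts for the $L_0$ sampler's own failure. Using independent randomness across rounds, the probability that no round produces an output is at most $\left(1-\frac{1}{2T}\right)^R\le e^{-R/(2T)}\le 0.01$ for $R=\O{T}$ with a sufficiently large constant. The space accounting is then immediate: $R$ independent $L_0$ samplers each with failure probability $0.01$ use $\O{\log^2 n}$ bits by \thmref{perfect:lzero}, for a total of $\O{T\log^2 n}$ bits.

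The only delicate point is the lower bound $|x_i|^p\ge 1$ on nonzero coordinates, which is what decouples $R$ from the particular magnitudes present in the stream; without it, the acceptance probability could be arbitrarily small and $R$ would need to grow accordingly. Everything else is a direct rejection-sampling argument that simply inherits the correctness and space guarantees of the perfect $L_0$ sampler.
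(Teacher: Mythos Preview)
Your proposal is correct and follows essentially the same approach as the paper: both arguments use a perfect $L_0$ sampler plus rejection with probability $\frac{\min(T,|x_{i_r}|^p)}{T}$, appeal to the integrality of nonzero coordinates to get a per-round acceptance probability of at least $\frac{1}{T}$, and then take $R=\O{T}$ independent rounds for the $\O{T\log^2 n}$ space bound. If anything, your version is slightly more explicit about the role of the integer assumption and the handling of the $L_0$ sampler's own failure probability.
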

\begin{proof}
Consider \algref{alg:cap:sample}. 
We define $N=\|x\|_0$ to be the number of nonzero coordinates and we define $\calS$ to be the set of nonzero coordinates, so that $N=|\calS|$. 
For each $i\in\calS$ and each fixed $r\in[R]$, the algorithm samples $i$ with probability $\frac{1}{N}+\frac{1}{\poly(n)}$ and then outputs $i_r$ with probability $\frac{\min(T,|x_{i_r}|^p)}{T}$. 
Hence, the probability the sample outputs $i$ is $\frac{\min(T,|x_{i_r}|^p)}{TN}+\frac{1}{\poly(n)}$. 
Therefore, conditioned on the algorithm outputting a sample, the probability that index $i$ is sampled is
\[\frac{\min(T,|x_i|^p)}{\sum_{j\in[n]}\min(T,|x_j|^p)}+\frac{1}{\poly(n)},\]
which is the desired sampling probability for $G(z)=\min(T,|z|^p)$. 

Moreover, for $r\in[R]$, the sample $i_r$ is output with probability $\frac{\min(T,|x_{i_r}|^p)}{T}\ge\frac{1}{T}$. 
For sufficiently large $R=\O{T}$, at least $\O{T}$ instances of the perfect $L_0$ sampler will succeed. 
Thus, the algorithm will successfully output a sample with probability at least $0.99$ for $R=\O{T}$.  

By \thmref{thm:perfect:lzero}, each perfect $L_0$ sampler with failure probability $0.01$ uses space $\O{\log^2 n}$. 
Therefore, for $\log m=\O{\log n}$, \algref{alg:log:sample} uses $\O{T\log^2 n}$ bits of space in total. 
\end{proof}

\subsection{Rejection Sampling Framework}
\seclab{sec:framework}
In this section, we generalize the rejection sampling method in \secref{sec:other:samplers} to provide a framework that gives perfect $G$-samplers for a general function $G(z) \ge 0$. Suppose that we have access to an upper bound $H$ to $\max_{z\in [m]} G(z)$ and a lower bound $Q$ to $\min_{z\in [m]} G(z)$. First, we use \thmref{thm:perfect:lzero} to acquire $L_0$ samples and obtain the sampled item $x_i$ exactly. Next, we output $i$ with probability $\frac{G(x_i)}{H}$. The rejection probability is well-defined since $G(z) \le H$ for all $z$, in addition, we sample from the correct distribution due to the choice of the rejection probability. Note that a $L_0$ sample is accepted with probability at least $\frac{Q}{H}$, so it suffices to use $\O{\frac{H}{Q}}$ independent $L_0$ samples.
\begin{algorithm}[!htb]
\caption{Perfect $G$-sampler for $G(z)$}
\alglab{alg:slow:sample}
\begin{algorithmic}[1]
\Require{Vector $x \in \mathbb{R}^n$ defined by a turnstile stream, function $G(z)$, parameters $H \ge \max_{z\in[m]} G(z)$, $Q\le \min_{z\in[m]} G(z)$}
\Ensure{$G$-sample from $x$ for $G(z)$}
\State{Let $m$ be the length of the stream, $R\gets\O{\frac{H}{Q}}$}
\For{$r\in[R]$}
\State{Acquire a perfect $L_0$ sample $i_r$ with failure probability $0.01$}
\State{\Return $i_r$ and abort, with probability $\frac{G(i_r)}{H}$}
\EndFor
\end{algorithmic}
\end{algorithm}

Now, we give our formal theorem statement for the rejection sampling framework.
\begin{theorem}
For the function $G(z)$ satisfying $Q\le G(z)\le H$ for all $z\in[m]$, there exists a $G$-sampler on turnstile streams that uses $\O{\frac{H}{Q}\log^2 n}$ bits of space and succeeds with probability $0.99$. 
\end{theorem}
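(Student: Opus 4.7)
The plan is to mirror the proofs of the two specific $G$-samplers in \secref{sec:other:samplers} (for the logarithmic and cap functions), since \algref{alg:slow:sample} is literally the same rejection-sampling scheme with $H$ and $Q$ as abstract parameters in place of $\log(1+m)$ or $T$. There are three things to verify: correctness of the sampling distribution, the success probability, and the space complexity.

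For correctness, I would let $\calS=\{i\in[n]:x_i\neq 0\}$ and $N=|\calS|=\|x\|_0$. By \thmref{thm:perfect:lzero}, each iteration returns an index $i_r\in\calS$ with probability $\frac{1}{N}\pm\frac{1}{\poly(n)}$, and crucially, also returns $x_{i_r}$ exactly, so the acceptance probability $G(x_{i_r})/H$ can be computed directly. Since $G(z)\le H$ by assumption, this is a valid probability. Therefore in a single iteration, index $i$ is emitted with probability $\frac{G(x_i)}{NH}\pm\frac{1}{\poly(n)}$, and conditioned on some index being emitted in that iteration, the conditional probability is
\[
\frac{G(x_i)}{\sum_{j\in[n]}G(x_j)}\pm\frac{1}{\poly(n)},
\]
as required. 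Since the $R$ iterations are independent and all conditional distributions are identical, the overall distribution conditioned on the algorithm not outputting $\bot$ is the same correct distribution.

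For the success probability, each iteration produces an output with probability at least $Q/H$ (using the lower bound $G(x_i)\ge Q$). So the probability that all $R$ iterations fail is at most $(1-Q/H)^R$, which for $R=\O{H/Q}$ with a large enough constant is at most $0.005$. Combined with a union bound over the $R$ independent $L_0$-sampler failures, each with failure probability $0.01/R$ (adjusting the constant inside \thmref{thm:perfect:lzero}), the algorithm successfully outputs a sample with probability at least $0.99$.

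For the space bound, \thmref{thm:perfect:lzero} gives a perfect $L_0$ sampler in $\O{\log^2 n}$ bits at constant failure probability, and boosting its failure probability to $1/\poly(R)$ only costs another logarithmic factor absorbed into the constant; running $R=\O{H/Q}$ independent copies in parallel thus uses $\O{\frac{H}{Q}\log^2 n}$ bits in total, matching the claimed bound. No step poses a real obstacle, since all the ingredients are standard and identical to those used in \algref{alg:log:sample} and \algref{alg:cap:sample}; the only mild subtlety is ensuring that the per-sampler failure probability is driven low enough by a union bound over the $R$ copies so that the cumulative additive error remains $\frac{1}{\poly(n)}$ rather than degrading the sampling distribution.
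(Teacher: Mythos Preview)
Your proposal is correct and follows essentially the same rejection-sampling argument as the paper: define $\calS$ and $N=\|x\|_0$, compute the per-iteration acceptance probability $\frac{G(x_i)}{NH}\pm\frac{1}{\poly(n)}$, condition to get the target distribution, and use $G(x_i)\ge Q$ to bound the acceptance rate and hence $R$. The only minor difference is that the paper leaves each $L_0$ sampler at constant failure probability $0.01$ (implicitly relying on a constant fraction of the $R$ samplers succeeding) rather than boosting to $0.01/R$ as you suggest; your boosting would in fact introduce an extra $\log(H/Q)$ factor that is not ``absorbed into the constant,'' so the paper's looser handling is what actually matches the stated $\O{\frac{H}{Q}\log^2 n}$ bound.
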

\begin{proof}
Consider \algref{alg:slow:sample}. 
Let $N=\|x\|_0$ be the number of nonzero coordinates and let $\calS$ be the set of nonzero coordinates, so that $N=|\calS|$. 
Then for each $i\in\calS$ and each fixed $r\in[R]$, the algorithm samples $i$ with probability $\frac{1}{N}+\frac{1}{\poly(n)}$ and then chooses to accept $i$ with probability $\frac{G(x_i)}{H}$. 
Therefore, the probability the sample returns $i$ is $\frac{G(x_i)}{N H}+\frac{1}{\poly(n)}$. 
Hence, conditioned on some index being returned, the probability $i$ is sampled is
\[\frac{G(x_i)}{\sum_{j\in[n]}G(x_j)}+\frac{1}{\poly(n)},\]
as desired. 

On the other hand, for each $r\in[R]$, the sample $i_r$ is returned with probability $\frac{G(x_{i_r})}{H}\ge\frac{Q}{H}$. 
Hence, the algorithm outputs a sample with probability at least $0.99$ for $R=\O{\frac{H}{Q}}$.  
By \thmref{thm:perfect:lzero}, each $L_0$ sampler with failure probability $0.01$ uses space $\O{\log^2 n}$. 
Therefore, \algref{alg:slow:sample} uses $\O{\frac{H}{Q}\cdot\log^2 n}$ bits of space in total. 
\end{proof}

\section*{Acknowledgments}
David P. Woodruff is supported in part by Office of Naval Research award number N000142112647 and a Simons Investigator Award.
Shenghao Xie and Samson Zhou were supported in part by NSF CCF-2335411.

\def\shortbib{0}
\bibliographystyle{alpha}
\bibliography{references}

\end{document}